\numberwithin{equation}{section}
\theoremstyle{plain}
\newtheorem{thm}{Theorem}
\newtheorem{prop}{Proposition}
 \newtheorem{example}{Example}
\def\F{Fr\'echet}
\def\references{\bibliography{mtv}}
\newcommand*{\myproofname}{Proof}
\global\long\def\expect{\mathbb{E}}
\global\long\def\prob{\mathrm{Pr}}
\global\long\def\real{\mathbb{R}}
\global\long\def\Op{O_{P}}
\global\long\def\manifold{\mathcal{M}}
\global\long\def\asympeq{\asymp}
\global\long\def\diffop{\mathrm{d}}
\newcommandx\tangentspace[2][usedefault, addprefix=\global, 1=\manifold]{T_{#2}#1}
\newcommandx\lpnorm[3][usedefault, addprefix=\global, 1=r, 2=]{\|#3\|_{\mathcal{L}^{#1}}^{#2}}
\newcommandx\lp[1][usedefault, addprefix=\global, 1=p]{\mathcal{L}^{#1}}
\global\long\def\metricspace{\mathcal{M}}
\global\long\def\dist{d}
\global\long\def\TV{\mathrm{TV}}
\global\long\def\metric#1#2{\langle#1,#2\rangle}
\newcommandx\vfnorm[3][usedefault, addprefix=\global, 1=\mu, 2=]{\|#3\|_{#1}^{#2}}
\newcommandx\vfinnerprod[2][usedefault, addprefix=\global, 1=\mu]{\llangle#2\rrangle_{#1}}
\global\long\def\define{:=}
\global\long\def\tdomain{\mathcal{T}}
\newcommandx\opnorm[3][usedefault, addprefix=\global, 1=\mu, 2=]{\vertiii{#3}_{#1}^{#2}}
\newcommandx\fronorm[2][usedefault, addprefix=\global, 1=]{|#2|_{F}^{#1}}
\newcommandx\spd[1][usedefault,addprefix=\global, 1=m]{\mathrm{Sym_{\star}^{+}}(#1)}
\newcommandx\dpower[4][usedefault, addprefix=\global, 1=2, 2=n]{|#3-#4|_{#2}^{#1}}
\newcommandx\dn[3][usedefault, addprefix=\global, 1=n]{\|#2-#3\|_{#1}}
\newcommandx\dnpower[4][usedefault, addprefix=\global, 1=2, 2=n]{\|#3-#4\|_{#2}^{#1}}
\newcommand{\bea}{\begin{eqnarray*}}
\newcommand{\eea}{\end{eqnarray*}}
\newcommand{\be}{\begin{eqnarray}}
\newcommand{\ee}{\end{eqnarray}}
\newcommand{\ed}{\end{document}}
\newcommand{\btab}{\begin{tabular}}
\newcommand{\etab}{\end{tabular}}
\newcommand{\bi}{\begin{itemize}}
\newcommand{\ei}{\end{itemize}}
\newcommand{\bfi}{\begin{figure}}
\newcommand{\efi}{\end{figure}}
\newcommand{\ben}{\begin{enumerate}}
\newcommand{\een}{\end{enumerate}}
\newcommand{\bay}{\begin{array}}
\newcommand{\eay}{\end{array}}
\newcommand{\td}{\tilde{d}}
\def\F{Fr\'echet }
\def\bco{\iffalse}
\def\cp{\citep}
\newcommand{\bc}{\begin{center}}
\newcommand{\ec}{\end{center}}
\newcommand{\linrev}[1]{\textcolor{black}{#1}}
\begin{document}

\begin{frontmatter}
\title{Total Variation Regularized Fr\'echet Regression for Metric-Space Valued Data}
\runtitle{Total Variation Regularized Fr\'echet Regression}
	
	\begin{aug}
		\author[A]{\fnms{Zhenhua} \snm{Lin}\thanksref{t1}\ead[label=e1]{linz@nus.edu.sg}}
		\and 
		\author[B]{\fnms{Hans-Georg} \snm{M\"uller}\thanksref{t2}\ead[label=e2]{hgmueller@ucdavis.edu}}

		\thankstext{t1}{Research supported by NUS Startup Grant R-155-000-217-133.}
		\thankstext{t2}{Research supported by NSF Grant DMS-1712864.}
		
		\address[A]{Department of Statistics and Applied Probability, National University of Singapore, \printead{e1}}
		
		\address[B]{Department of Statistics, University of California, Davis, \printead{e2}}
	\end{aug}
	
	\begin{abstract}
Non-Euclidean data that are indexed with a scalar predictor such as time are increasingly encountered in data applications, while statistical methodology and theory for such random objects are not well developed yet. To address the need for new methodology in this area,  we develop a total  variation regularization technique for nonparametric Fr\'echet regression, which refers to a regression setting where  a response residing in a metric space is paired with a scalar predictor and the target is a conditional  Fr\'echet  mean. Specifically, we seek to approximate an unknown metric-space valued function by an estimator that minimizes the Fr\'echet version of least squares and at the same time has small total variation, appropriately defined for metric-space valued objects.  We show that the resulting  estimator is representable by a piece-wise constant function and establish the minimax convergence rate of the proposed estimator for metric data objects that reside in Hadamard spaces. We illustrate the numerical performance of the proposed method for both simulated  and real data, including metric spaces of symmetric positive-definite matrices with the affine-invariant distance, of probability distributions on the real line with the Wasserstein distance, and of phylogenetic trees with the Billera--Holmes--Vogtmann metric.
	\end{abstract}
	
	\begin{keyword}[class=MSC2020]
		\kwd[primary ]{62R20}
		\kwd[; secondary ]{62R30}
	\end{keyword}
	
	\begin{keyword}
\kwd{Brain Imaging}
\kwd{Fr\'echet Mean}
\kwd{Hadamard Space}
\kwd{Alexandrov Space}
\kwd{Non-Euclidean Data}
\kwd{Symmetric Positive-Definite Matrix}
\kwd{Random Objects}
\kwd{Wasserstein Metric}
\kwd{Phylogenetic Tree}
	\end{keyword}
	
\end{frontmatter}

\section{Introduction}

Regression analysis is a foundational technique in statistics aiming to model
the relationship between response variables and covariates or predictor variables.
Conventional regression models are designed for Euclidean responses $Y$ and predictors $X$ and include parametric models such as 
linear or polynomial regression and generalized linear models 
as well as various nonparametric approaches, such as kernel and spline smoothing. All of these models  target the conditional expectation $\expect(Y|X)$. 

In response to the emergence of new types of data,  the basic  Euclidean regression models 
have been  extended to the case of non-Euclidean data, where a relatively well-studied scenario concerns manifold-valued responses. 
For instance, \citet{Chang1989, Fisher1995} studied regression models for spherical and circular
data, while \citet{Shi2009, Steinke2010a,Davis2010,Fletcher2013,Cornea2017} investigated such models for the case of more general 
Riemannian manifolds. Also classical local regression techniques, such as
Nadaraya--Watson smoothing and local polynomial smoothing,  have been 
generalized to cover responses that lie on manifolds  \cp{Pelletier2006,Yuan2012,Hinkle2014}. 
In this paper, we extend the scope of these previous approaches  and study  the regression problem
for response variables that are situated on a metric space, {more specifically, a Hadamard or Alexandrov space}. Due to the absence of rich geometric and algebraic structure in these metric spaces, this problem
poses new challenges that go beyond the regression problem for the Euclidean or manifold case.

While regression with metric-space valued responses covers  a wide range of random objects and therefore is of intrinsic interest,  the literature on this topic  
so far is quite limited.  Existing work includes \citet{Faraway2014}, who 
considered regression for non-Euclidean data by a Euclidean embedding using distance matrices, similar to multidimensional scaling,  as well as  intrinsic approaches by  \citet{Hein2009}, who studied Nadaraya--Watson kernel
regression for general metric spaces, and by  \citet{Petersen2019}, who introduced 
linear and local linear regression for metric-space valued response
variables and approached the regression problem within the framework of conditional \F means. 

In this paper we propose a novel regularization approach for nonparametric
regression with metric-space valued response variables and a scalar
predictor variable. We  utilize a total variation based penalty, introducing in 
Section \ref{sec:Methodology} an appropriate modification of  the definition of total variation that covers metric-space valued functions.
Specifically,
the inclusion of a total variation penalty
term in the estimating equation for \F regression leads to a penalized  
M-estimation approach for metric-space valued data. 
We refer to
the proposed method as total variation regularized Fr\'echet
regression or simply \textit{regularized Fr\'echet regression}.  While regularized Fr\'echet regression can be developed for any geodesic metric space,  we focus here primarily on the family of Hadamard spaces.  This family includes the Euclidean space and forms a rich class of metric spaces that have  important practical applications; see Examples \ref{exa:spd}--\ref{ex:tree} and Section \ref{sec:application} for more details. 

Total variation regularization was introduced by  \citet{Rudin1992} for image recovery/denoising. There is a vast literature on this regularization technique from the perspective of image denoising and signal processing; see \cite{Chambolle2010} for a brief introduction and review. From a statistical perspective and for Euclidean data, this method was studied by \citet{Mammen1997} from the viewpoint of locally adaptive regression splines and by \cite{Tibshirani2005}, who connected it to  the lasso.  Recent developments along this line include optimal rates \citep{Huetter2016}, trend filtering \citep{Kim2009,Tibshirani2014} and total variation regularized regression when predictors are on a tree or graph \citep{Wang2016,Ortelli2018}. Extensions to manifold-valued
data were first investigated by \cite{penn:06} with a robust variant of the total variation regularization, then by \citet{Lellmann2013,Weinmann2013} with the first-order total variation, and further by \citet{Bergmann2014,Bergmann2016} with the second-order total variation, although  without asymptotic
analysis. Total variation penalties were  also shown to confer advantages for regression models in brain imaging \cp{wang:17}. 
We generalize these approaches  to the case of data in a Hadamard space and
provide a detailed asymptotic analysis for total variation regularized Fr\'echet regression for the first time. While the extension of total variation regularization
from Euclidean spaces to smooth manifolds is  relatively straightforward, 
as one can take advantage of local diffeomorphisms between manifolds
and Euclidean spaces, the generalization to Hadamard spaces, and especially the theoretical analysis, is considerably more challenging. 

We tackle these challenges by leveraging the convexity of the Hadamard space, taking advantage of  the convexity of the distance function and the strong convexity of the squared distance function; see Section \ref{sec:Asymptotics}. Moreover, to overcome the technical difficulties arising from the lack of vector and analytic structures of Hadamard spaces, we develop  new geometric ideas that are relevant for statistical analysis in these spaces, such as Alexandrov inner product, geometric interpolation of metric-space valued functions, and geometric center of functions; see Appendix \ref{sec:o-lem-key} for details. Combined  with convexity, these new constructions enable us to obtain minimax rates of convergence for the proposed estimator for a family of Hadamard spaces and functions of bounded variation. In addition, as these geometric constructions apply to general metric spaces and convexity extends to  certain subspaces of Alexandrov spaces, the theory also applies for certain non-Hadamard spaces.

The structure of  the paper is as follows. A brief introduction to metric geometry is given in Section \ref{sec:metric-geometry}. Total variation regularized \F regression is introduced 
in Section \ref{sec:Methodology}, and asymptotic results are presented in Section \ref{sec:Asymptotics}.  Numerical
studies for synthetic data are provided in Section \ref{sec:Numerical-Studies}. In Section \ref{sec:application} we illustrate the application of the proposed method to analyze data on the  evolution of human mortality profiles using the Wasserstein distance on the space of probability distributions and  to study  the dynamics of brain connectivity using task-related functional magnetic resonance imaging (fMRI) signals and the affine-invariant distance on the space of symmetric positive-definite matrices.

\section{Concepts and Tools from Metric Geometry}
\label{sec:metric-geometry}
To state the  estimation method and theory in Sections \ref{sec:Methodology} and \ref{sec:Asymptotics},  we  need to make use of various concepts from metric geometry that are briefly reviewed here; a more comprehensive treatment can be found in Chapters 2, 4 and 9 of 
\citet{Burago2001} and Chapter VII of \cite{Lang1995}.

\paragraph*{Geodesics}
For  a generic metric space $(\metricspace,\dist)$ and a closed interval  $\tdomain=[a,b]\subset\real$, given  a curve $\gamma$ parameterized by $\tdomain$ on
$\manifold$, i.e., $\gamma:\tdomain\rightarrow\manifold$, and  a set
$P=\{t_{0}\leq t_{1}\leq\cdots\leq t_{k}\}\subset \tdomain$ consisting of $k+1$ points 
in $\tdomain$, we use the quantity 
$R_{d}(\gamma,P)=\sum_{j=1}^{k}d(\gamma(t_{j}),\gamma(t_{j-1}))$ to define the  length of $\gamma$, denoted by $|\gamma|$, which is given by 
\begin{equation} \label{le}
|\gamma|=\sup_{P\in\mathcal{P}}R_{d}(\gamma,P);
\end{equation}
here $\mathcal{P}$ is the collection of subsets of $\tdomain$ whose cardinality
is finite. The metric space $(\manifold,d)$ is a length space
if $d(p,q)=\inf_{\gamma}|\gamma|$, where the infimum ranges over
all curves $\gamma:\tdomain\rightarrow\manifold$ connecting $p$ and $q$,
i.e., $\gamma(a)=p$ and $\gamma(b)=q$. A geodesic on $\manifold$
is a curve $\gamma:\tdomain\rightarrow\manifold$ such that $d(\gamma(s),\gamma(t))=|t-s|$
for $s,t\in \tdomain$. \linrev{The metric space $(\manifold,d)$ is a geodesic space if any pair of points can be connected by a geodesic, and is a uniquely
geodesic space if this geodesic is unique.} The geodesic connecting $p$
and $q$ in a uniquely geodesic space is denoted by $\overline{pq}$. Geodesics in a metric space are the counterpart of straight lines in a Euclidean space. They have been explored for statistical regression of non-Euclidean data, such as geodesic regression \citep{Fletcher2013}.

\paragraph*{Curvature}Unlike Euclidean spaces, a general metric space is often not flat, and curvature is used to measure the amount of deviation from being flat. A standard approach to classifying  curvature is to compare  geodesic triangles
on the metric space to those on the following reference spaces $M_{\kappa}^{2}$:
\begin{itemize}
	\item When $\kappa=0$, $M_{\kappa}^{2}=\real^{2}$ with the standard Euclidean
	distance;
	\item When $\kappa<0$, $M_{\kappa}^{2}$ is the hyperbolic space $\mathbb{H}^{2}=\{(x,y,z)\in\real^3:\,x^2+y^2-z^2=-1\text{ and } z > 0\}$
	with the hyperbolic distance function $d(p,q)=\cosh^{-1}(z_pz_q-x_px_q-y_py_q)/\sqrt{-\kappa}$, where $p=(x_p,y_p,z_p)$ and $q=(x_q,y_q,z_q)$;
	\item When $\kappa>0$, $M_{\kappa}^{2}$ is the sphere $\mathbb{S}^{2}=\{(x,y,z)\in\real^3:\,x^2+y^2+z^2=1\}$ 
	with the angular distance function $d(p,q)=\cos^{-1}(x_px_q+y_py_q+z_pz_q)/\sqrt{\kappa}$.
\end{itemize}
A geodesic triangle with vertices $p,q,r$ in a uniquely geodesic space  $\manifold$, denoted
by $\triangle(p,q,r)$, consists of three geodesic segments that connect
$p$ to $q$, $p$ to $r$ and $q$ to $r$, respectively. A comparison
triangle of $\triangle(p,q,r)$ in the reference space $M_{\kappa}^{2}$
is a geodesic triangle on $M_{\kappa}^{2}$ formed by vertices $\bar{p},\bar{q},\bar{r}$
such that $d(p,q)=\bar{d}_{\kappa}(\bar{p},\bar{q})$, $d(p,r)=\bar{d}_{\kappa}(\bar{p},\bar{r})$,
and $d(q,r)=\bar{d}_{\kappa}(\bar{q},\bar{r})$, where $\bar{d}_{\kappa}$
denotes the distance function on $M_{\kappa}^{2}$. In addition, every point $x$ on the geodesic $\overline{pq}$ ($\overline{pr}$, respectively) has a counterpart $\overline{x}$ on the geodesic segment $\overline{\bar p \bar q}$ ($\overline{\bar p \bar r}$, respectively)  of the comparison triangle such that $d(p,x)=\bar d_{\kappa}(\bar p,\bar x)$.  \linrev{
	We say the (global) curvature of $\manifold$ is lower (upper, respectively) bounded by $\kappa$ if every geodesic triangle with perimeter less than $2D_\kappa$, where $D_\kappa=\pi/\sqrt{\kappa}$ if $\kappa>0$ and $D_\kappa=\infty$ otherwise, satisfies the following property: There exists a comparison triangle $\triangle(\bar p,\bar q,\bar r)$ in $M_{\kappa}$ such that $d(x,y)\geq \bar d_{\kappa}(\bar x,\bar y)$ ($d(x,y)\leq \bar d_{\kappa}(\bar x,\bar y)$, respectively) for all $x\in \overline{pq}$ and $y\in \overline{pr}$ and their comparison points $\bar x$ and $\bar y$ on $\triangle(\bar p,\bar q,\bar r)$.}

\paragraph*{Angles}The comparison angle $\bar{\angle}_{p}(q,r)$ between $q$ and $r$ at
$p$ is defined by 
\begin{equation}\label{eq:comparison-angle}
\bar{\angle}_{p}(q,r)=\arccos\frac{d^{2}(p,q)+d^{2}(p,r)-d^{2}(q,r)}{2d(p,q)d(p,r)}.
\end{equation}
This is utilized to introduce the concept of an (Alexandrov) angle between
two geodesics $\gamma$ and $\eta$ emanating from $p$  in a uniquely geodesic space, which is
denoted by $\angle_{p}(\gamma,\eta)$ and
defined by 
\[
\angle_{p}(\gamma,\eta)=\underset{s,t\rightarrow0}{\lim\sup}\bar{\angle}_{p}(\gamma(s),\eta(t)).
\]
Note that $\angle_{p}(\gamma,\eta)$ does not depend on the length of
$\gamma$ or $\eta$. For three distinct points $p,q,r$ in a uniquely geodesic subset of $\manifold$,
we define the angle $\angle_{p}(q,r)=\angle_{p}(\overline{pq},\overline{pr})$.

\paragraph*{Alexandrov Spaces and Hadamard Spaces} 
\linrev{A geodesic space with lower or upper bounded curvature is called an Alexandrov space, and a complete geodesic space with curvature upper bounded by $0$ is called a Hadamard space. Every geodesic triangle
$\triangle(p,q,r)$  in a Hadamard space then satisfies the $\mathrm{CAT}(0)$
inequality}, i.e., $d(x,y)\leq\bar{d}_0(\bar{x},\bar{y})$ for all $x\in\overline{pq}$ and $y\in\overline{pr}$ and their comparison points $\bar x,\bar y\in\real^2$. 
A geodesic space in which every geodesic triangle satisfies the $\mathrm{CAT}(0)$ inequality is called a CAT(0) space;  a Hadamard space is a complete CAT(0) space. Moreover, every CAT(0) space is uniquely geodesic. Every Euclidean space is a Hadamard space, while non-Euclidean Hadamard spaces include symmetric positive definite matrices, some Wasserstein spaces and phylogenetic tree spaces and more; see Examples \ref{exa:spd}--\ref{ex:tree}. These spaces have broad applications in science and statistics.

\paragraph*{Riemannian Manifolds}
A Riemannian manifold is a smooth manifold with a smooth metric tensor $\metric{\cdot}{\cdot}$,
such that for each $p\in\manifold$, the tensor $\metric{\cdot}{\cdot}_{p}$
defines an inner product on the tangent space $\tangentspace p$ at $p$. The metric tensor induces a distance function that turns the Riemannian manifold into a metric space.
The sectional curvature  at $p$ is defined for two linearly independent tangent vectors $u$ and $v$ at $p\in\manifold$ and is given
by 
$\frac{\metric{\mathfrak R(u,v)v}u_{p}}{\metric uu_{p}\metric vv_{p}-\metric uv_{p}^{2}}\in\real,$
where $\mathfrak R$ is  the Riemannian curvature tensor \citep[p.227,][]{Lang1995}.  
A complete Riemannian manifold is a Hadamard manifold if it is simply connected and has everywhere nonpositive sectional curvature.

\section{Regularized Fr\'echet Regression with Total Variation\label{sec:Methodology}}

Let $(\manifold,d)$ be a {metric space} and $Y$ a random element in 
$\manifold$, where $d$ denotes the distance function on $\manifold$.
When $\manifold$ is a Euclidean space,   which is a special {metric space}, the expectation or mean of $Y$
is an important concept to characterize the average location of $Y$.  For a non-Euclidean
metric space, we replace the mean with the Fr\'echet mean, which is
an element of $\manifold$ that minimizes the Fr\'echet function $F(\cdot)=\expect d^{2}(\cdot,Y)$; in the Euclidean case it 
coincides with the usual mean for random vectors with  finite second moments. In a general metric space with a given probability measure,
 the Fr\'echet mean might not exist, and even when it exists it might
not be unique. We shall assume  that  Fr\'echet means exist and are unique for the random objects we consider in the following. This is the case for  Hadamard spaces when $F(p)<\infty$ for some $p\in\manifold$ \citep{Bhattacharya2003,Sturm2003,Afsari2011,patr:15} and Alexandrov spaces with sufficient concentration assumption and/or additional convexity conditions \citep[Lemma \ref{LEM:FRECHET-MEAN},][]{Lin2021}.

We consider a curve  $\mu:\tdomain\rightarrow\metricspace$ 
on $\metricspace$ that is parameterized by an interval $\tdomain\subset\real$ and that  potentially varies with the sample size $n$. 
Without loss of generality, we assume $\tdomain=[0,1]$
throughout.  {For $n>0$ independent observations $Y_{i}$
at the designated time point $t_i$ for $i=1,\ldots,n$,} we assume the following model \be \label{mun} 
\quad\quad\quad\expect d^2(y,Y_i)<\infty \text{ for some }y\in\manifold, \text{ and }
 \mu(t_{i})=\underset{{y \in\manifold}}{\arg\min}\, \expect d^{2}(y,Y_{i}),
 \ee
 and assume that $0 \le t_{1}\leq\cdots\leq t_{n} \le 1$ are equally spaced; the assumption of equal spacing that we adopt here for simplicity  is not essential, and the results can be easily extended to the non-equally spaced case, by applying the concept of design densities \cp{sack:70}.

Our goal is to obtain 
a mean curve estimate  $\hat{\mu}$ from the  given data  pairs $(t_{i},Y_{i})$ by minimizing 
the  loss function
\[
L_{\lambda}(g)=\frac{1}{n}\sum_{i=1}^{n}d^{2}\left(g(t_{i}),Y_{i}\right)+\lambda\TV(g),
\]
where $\TV(g)=|g|$ is the total variation of the curve $g$,
measured by its length as defined by eq. (\ref{le}),  and  $\lambda\geq0$ is  a regularization parameter depending on $n$. 
The curve estimate is then 
\begin{equation}
\hat{\mu}\in\underset{\TV(g)<\infty}{\arg\min}\,L_{\lambda}(g),\label{eq:definition-regularized-estimator}
\end{equation}
and its deviation from the target $\mu$ is quantified by the pseudo-metrics \be d_{n}(\hat\mu,\mu)=\left\{n^{-1}\sum_{i=1}^{n}\dist^{2}(\hat\mu(t_{i}),\mu(t_{i}))\right\}^{1/2} \label{dn}, \ee where  $\td$ is a pseudo-metric if  $\td(f,g)=\td(g,f)\geq 0$ and $\td(f,h)\leq \td(f,g)+\td(g,h)$ for all $f,g,h$. In the above, both $L_\lambda$ and $d_n$ are empirical, in the sense that they compare $g$ and $\hat\mu$ with their respective targets only at the design points  $t_1,\ldots,t_n$. Nevertheless, the theory developed in the next section implies that with probability tending to one $\hat\mu$ converges to $\mu$, in the sense that $\int_{\tdomain} d^2(\hat\mu(t),\mu(t))\diffop t\rightarrow0$, under  the assumption $\TV(\mu)\leq C$ for a fixed constant $C\geq 0$ and a suitable asymptotic assumption on the spacing of the design points $t_i$ that will be satisfied for example if these points are equidistantly distributed over an interval. 

The estimator $\hat\mu$, although not unique, has the property that  $\hat\mu(t)=\hat\mu(t_1)$ for $t\in [0,t_1]$ and $\hat\mu(t)=\hat\mu(t_n)$ for $t\in[t_n,1]$. Otherwise, the following function 
	$$
	\check\mu(t)=
	\begin{cases}
	\hat\mu(t_1) & \text{for } t\in[0,t_1),\\
	\hat\mu(t) & \text{for } t\in[t_1,t_n],\\
	\hat\mu(t_n) & \text{for } t\in(t_n,1]
	\end{cases}
	$$
	satisfies $n^{-1}\sum_{i=1}^n d^2(\check\mu(t_i),Y_i)=n^{-1}\sum_{i=1}^n d^2(\hat\mu(t_i),Y_i)$ and  $\mathrm{TV}(\check\mu)<\mathrm{TV}(\hat\mu)$, which implies $L_\lambda(\check\mu)<L_\lambda(\hat\mu)$ and thus contradicts  the optimality of $\hat\mu$.
Indeed, the following result shows that $\hat\mu$ can be chosen to have a simple structure.
\begin{prop}
\label{prop:characterization-estimator}For any $\tilde{\mu}$ that
minimizes $L_{\lambda}(\cdot)$, there is a step function $\hat{\mu}$
such that $\hat{\mu}(t_{i})=\tilde{\mu}(t_{i})$ for all $i=1,\ldots,n$
and $\TV(\hat{\mu})\leq\TV(\tilde{\mu})$.
\end{prop}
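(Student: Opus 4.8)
The plan is to write down, for a given minimizer $\tilde\mu$ of $L_\lambda$, an explicit piecewise-constant curve that agrees with $\tilde\mu$ at the design points and whose length does not exceed $\TV(\tilde\mu)$. Concretely, I would set $\hat\mu(t)=\tilde\mu(t_1)$ on $[0,t_1)$, $\hat\mu(t)=\tilde\mu(t_i)$ on $[t_i,t_{i+1})$ for $i=1,\dots,n-1$, and $\hat\mu(t)=\tilde\mu(t_n)$ on $[t_n,1]$. Then $\hat\mu$ is a step function with $\hat\mu(t_i)=\tilde\mu(t_i)$ for every $i$, so the data-fitting term $n^{-1}\sum_{i=1}^n d^2(\cdot,Y_i)$ takes the same value at $\hat\mu$ and at $\tilde\mu$; the whole argument thus reduces to showing $\TV(\hat\mu)\le\TV(\tilde\mu)$.

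The first step is to evaluate $\TV(\hat\mu)=|\hat\mu|$. Since $\hat\mu$ is constant off the finite set $\{t_2,\dots,t_n\}$, for any finite partition $P$ the sum $R_d(\hat\mu,P)$ picks up nothing from consecutive partition points lying in a common constancy interval, and by the triangle inequality for $d$ a pair of points straddling one or more jumps contributes at most the total of the intervening jump sizes $d(\tilde\mu(t_{i-1}),\tilde\mu(t_i))$; a partition that places a point just left of and a point at each jump attains this bound, so $\TV(\hat\mu)=\sum_{i=2}^n d(\tilde\mu(t_{i-1}),\tilde\mu(t_i))$.

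The second step is to observe that this sum is dominated by $\TV(\tilde\mu)$. By the definition of length in (\ref{le}), $\TV(\tilde\mu)=|\tilde\mu|=\sup_{P\in\mathcal{P}}R_d(\tilde\mu,P)$, and taking the particular partition $P=\{t_1\le\cdots\le t_n\}$ gives $\TV(\tilde\mu)\ge R_d(\tilde\mu,P)=\sum_{i=2}^n d(\tilde\mu(t_{i-1}),\tilde\mu(t_i))=\TV(\hat\mu)$. Combining the two steps yields $L_\lambda(\hat\mu)\le L_\lambda(\tilde\mu)$; since $\tilde\mu$ is a minimizer, so is $\hat\mu$, and it has the asserted form.

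I expect the only nontrivial point to be the first step --- the identity $|\hat\mu|=\sum_i d(\tilde\mu(t_{i-1}),\tilde\mu(t_i))$ for a step function --- which needs the triangle inequality to rule out that inserting extra partition points ever overshoots the jump-size sum, together with the explicit partition that realizes it. Everything else follows directly from the definitions and from the fact that $L_\lambda$ sees a candidate curve only through its values at $t_1,\dots,t_n$ and through its total variation.
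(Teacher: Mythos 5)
Your proof is correct and follows essentially the same route as the paper's: define the piecewise-constant interpolant at the design points, identify its total variation with the sum of its jump sizes, and bound that sum by $\TV(\tilde\mu)$ via the partition $\{t_1,\dots,t_n\}$ in the definition \eqref{le}. The only differences are cosmetic (you extend by $\tilde\mu(t_1)$ on $[0,t_1)$ rather than by $\tilde\mu(0)$ as in the paper, and you spell out the triangle-inequality argument for the step-function total variation that the paper leaves implicit).
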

\begin{proof}
It is clear that $\TV(\tilde{\mu})\geq\sum_{i=0}^{n}d(\tilde{\mu}(t_{i+1}),\tilde{\mu}(t_{i}))$, where $t_{0}=0$ and $t_{n+1}=1$.
Define 
\[
\hat{\mu}(t)=\begin{cases}
\tilde{\mu}(t_{i}),  & t\in[0,1)\text{ and }t\in[t_{i},t_{i+1}),\\
\tilde{\mu}(t_{n}),  & t=1.
\end{cases}
\]
Then $\hat{\mu}(t_{i})=\tilde{\mu}(t_{i})$ for $i=1,\ldots,n$. Also,
from the definition, $\hat{\mu}(t)$ is constant over $[t_{i},t_{i+1})$.
One thus finds  $\TV(\hat{\mu})=\sum_{i=0}^{n}d(\hat{\mu}(t_{i+1}),\hat{\mu}(t_{i}))=\sum_{i=0}^{n}d(\tilde{\mu}(t_{i+1}),\tilde{\mu}(t_{i}))\leq\TV(\tilde{\mu})$.
\end{proof}
The above proposition shows that one  can always choose a step function
to minimize the loss function $L_{\lambda}$. In the following, we may therefore 
assume that $\hat{\mu}$ is a step function. The class of step functions is not only sufficiently
powerful to approximate any function of finite total variation, but
also advantageous in modeling functions that are discontinuous since
it incorporates jumps of the function estimates, in contrast to classical smoothing 
methods that usually assume a smooth underlying regression function. 
Incorporating jumps or discontinuities is  of interest in many applications \cp{kola:12,zhu:14:1,mull:20:2}. Our approach makes it possible to go beyond Euclidean spaces and to fit metric-space valued functions with jumps, as demonstrated in Section \ref{sec:functional-connectivity}.

The tuning parameter $\lambda$ controls the number of constant pieces of the estimate $\hat\mu$ and the magnitude of the distance between the pieces. For instance, a large value of $\lambda$ leads to a small number of constant pieces. In the next section we will show that the choice $\lambda \asympeq n^{-2/3}$ will  optimize the asymptotic performance, where the notation $\lambda\asympeq n^{-2/3}$ denotes that there are constants $c_2\geq c_1>0$ such that $c_1n^{-2/3}\leq \lambda \leq c_2n^{-2/3}$. 
In practice, $\lambda$ can be chosen via cross-validation. In  some situations it is  useful to  choose it as the minimal number that yields a desired number of pieces of $\hat{\mu}$; see Section \ref{sec:functional-connectivity}. For computation of $\hat\mu$, we adopt the iterative proximal point algorithm  of  \cite{Weinmann2013}, who  showed that this algorithm is convergent for Hadamard spaces; further details are  in Appendix \ref{sec:computation}.

\section{Theory\label{sec:Asymptotics}}

\subsection{Hadamard Manifolds and Spaces}\label{subsec:main-theory}
To study the asymptotic properties of the estimate $\hat{\mu}$ given in \eqref{eq:definition-regularized-estimator}, we assume uniform sub-Gaussianity of the random quantities $d(\mu(t_i),Y_i)$, as follows.  A random variable $X$ is  sub-Gaussian if $\expect \exp(\beta X^2)<\infty$ for a constant $\beta>0,$ and a   collection $\mathcal{X}$ of random variables is uniformly sub-Gaussian, if there are constants $\beta,\zeta>0$ such that $\expect \exp(\beta X^2)\leq \zeta<\infty$ for all $X\in\mathcal{X}$. The following  condition states that the distances of random objects $Y_{i}$ to their Fr\'echet means are  uniformly sub-Gaussian. This is guaranteed and thus the condition is not needed  whenever the diameter of the space $\manifold$ is bounded.  
{\begin{enumerate}[label=(\textbf{H\arabic*})]
		\item\label{cond:H:subG} There exist constants $\beta>0$ and $\zeta>0$ such that for the data $Y_i$  in  model \eqref{mun} \begin{equation*}\label{eq:sub-gaussian}\sup_{1\leq i\leq n}\expect[\exp\{\beta d^2(\mu(t_{i}),Y_{i}) \}]\leq \zeta<\infty,\end{equation*}
		i.e., the random variables  $d(\mu(t_{i}),Y_{i})$
		are uniformly sub-Gaussian.
\end{enumerate}}

 Let  $\mathscr{V}_{\manifold}$ be the collection of all $\manifold$-valued curves of bounded total variation. We focus on a subcollection $\mathscr{G}_{\manifold}\subset \mathscr{V}_{\manifold}$, which could correspond to  the entire collection $\mathscr{V}_{\manifold}$ or a proper subcollection of $\mathscr{V}_{\manifold}$ such as the class of Lipschitz continuous curves. Then  
the pseudo-metric function $d_n$ in \eqref{dn}  turns $\mathscr G_{\manifold}$ into a pseudo-metric space. 
Let $\mathscr G_{\manifold}^R(C)\subset \mathscr G_{\manifold}$ be a collection of functions $g\in\mathscr G_{\manifold}$ with $\TV(g)\leq C$, such that there exists a ball $\mathcal B\subset \manifold$ of radius $R>0$ with $g(t)\in \mathcal B$ for all $g$ and $t$; we write $\mathscr G_{\manifold}(C)=\mathscr G^\infty_{\manifold}(C)$. The following result, valid for any (non-unique) minimizer $\hat\mu$ in  \eqref{eq:definition-regularized-estimator}, establishes the convergence rate of the estimator $\hat\mu$ for $\mu$, where $\mu$ is allowed to vary with the sample size $n$.

\begin{thm}
	\label{thm:hadamard-manifold} For a family $\mathscr R(p,\kappa)$ of complete and simply connected Riemannian manifolds of   dimension no larger than $p$ and with sectional curvature bounded between $\kappa\leq 0$ and $0$, choosing  $\lambda\asympeq n^{-2/3}$ implies that
	$$\underset{D\rightarrow\infty}{\lim}\underset{n\rightarrow\infty}{\lim\sup}\,\underset{F\in\mathscr{F}_n}{\sup}\,\mathbb{P}_F\{d_{n}(\hat{\mu},\mu)>Dn^{-1/3}\}=0,$$
	where $\mu$ is defined in \eqref{mun}, $\hat\mu$ is given in \eqref{eq:definition-regularized-estimator}, $\mathbb P_F$ is the probability measure 
	induced by $F$, and $\mathscr{F}_n=\mathscr{F}_n(p,\kappa,C,\beta,\zeta)$ for constants $p,C,\beta,\zeta>0$  and $\kappa \le 0$ is the collection of joint probability distributions of $Y_1,\ldots,Y_n$ on $\manifold$ for which   $\manifold\in\mathscr R(p,\kappa)$, $\TV(\mu)\leq C$ and \ref{cond:H:subG} holds for $\beta, \zeta>0$.
\end{thm}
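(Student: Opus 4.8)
The plan is to couple a basic inequality for the penalized $M$-estimator with the strong convexity of the Fr\'echet functional in nonpositively curved spaces, and to control the remaining empirical process by a chaining-and-peeling argument over functions of bounded variation. Since $\hat\mu$ minimizes $L_{\lambda}$, comparing it with $\mu$ gives $\frac1n\sum_{i}d^{2}(\hat\mu(t_{i}),Y_{i})+\lambda\TV(\hat\mu)\le\frac1n\sum_{i}d^{2}(\mu(t_{i}),Y_{i})+\lambda\TV(\mu)$. Because each $\manifold\in\mathscr R(p,\kappa)$ is a Hadamard manifold and $\mu(t_{i})$ is the Fr\'echet mean of $Y_{i}$, the variance (barycenter) inequality $\expect d^{2}(y,Y_{i})-\expect d^{2}(\mu(t_{i}),Y_{i})\ge d^{2}(y,\mu(t_{i}))$ holds for every $y\in\manifold$ \citep{Sturm2003}. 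Writing $F_{i}(y)=\expect d^{2}(y,Y_{i})$ and setting
\[
\mathbb{G}_{n}(g)=\frac1n\sum_{i=1}^{n}\bigl\{F_{i}(g(t_{i}))-d^{2}(g(t_{i}),Y_{i})-F_{i}(\mu(t_{i}))+d^{2}(\mu(t_{i}),Y_{i})\bigr\},
\]
the basic inequality and the variance inequality combine into the fundamental bound
\[
d_{n}^{2}(\hat\mu,\mu)\le\mathbb{G}_{n}(\hat\mu)+\lambda\bigl(\TV(\mu)-\TV(\hat\mu)\bigr)\le\mathbb{G}_{n}(\hat\mu)+\lambda C-\lambda\TV(\hat\mu),
\]
so everything reduces to showing that, with $\lambda\asympeq n^{-2/3}$, the right-hand side is $\Op(n^{-2/3})$ on the relevant scale, uniformly over $\mathscr F_{n}$.

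To analyze $\mathbb{G}_{n}$ I would linearize the squared-distance increment. Fix $i$, put $p=\mu(t_{i})$, and let $\gamma$ be the geodesic from $p$ to $g(t_{i})$; the first variation formula identifies the linear part of $d^{2}(g(t_{i}),Y_{i})-d^{2}(p,Y_{i})$ as $-2\langle\Log_{p}g(t_{i}),\Log_{p}Y_{i}\rangle$, with $\langle\cdot,\cdot\rangle$ the Alexandrov (here Riemannian) inner product on $T_{p}\manifold$, and the two-sided curvature bound yields
\[
d^{2}(g(t_{i}),Y_{i})-d^{2}(p,Y_{i})=d^{2}(g(t_{i}),p)-2\langle\Log_{p}g(t_{i}),\Log_{p}Y_{i}\rangle+r_{i}(g),
\]
where the $\mathrm{CAT}(0)$ inequality forces $r_{i}(g)\le0$ and the lower curvature bound controls $|r_{i}(g)|$ by a term of higher order in $d(g(t_{i}),p)$ with a constant depending only on $\kappa$; handling $r_{i}(g)$, and its interplay with the total-variation penalty inside the peeling, is where the lower curvature bound and the geometric constructions of Appendix~\ref{sec:o-lem-key} are used. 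Because $\mu(t_{i})$ is the Fr\'echet mean, the gradient of $F_{i}$ vanishes there, i.e.\ $\expect[\Log_{\mu(t_{i})}Y_{i}]=0$, so after centering the deterministic term $d^{2}(g(t_{i}),p)$ cancels and $\mathbb{G}_{n}(g)=\frac2n\sum_{i}\langle\Log_{\mu(t_{i})}g(t_{i}),\Log_{\mu(t_{i})}Y_{i}\rangle-\frac1n\sum_{i}\{r_{i}(g)-\expect r_{i}(g)\}$. Since $\Log_{p}$ is $1$-Lipschitz in a nonpositively curved space and $d(\mu(t_{i}),Y_{i})$ is uniformly sub-Gaussian by \ref{cond:H:subG}, the first sum is a centered sub-Gaussian process whose increments are of order $n^{-1/2}d_{n}(g,g')$, and the remainder part is of smaller order on the scales that matter.

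The final step is a localization and chaining argument. By Proposition~\ref{prop:characterization-estimator} one may take $\hat\mu$ piecewise constant; since a function of bounded total variation has bounded-diameter image, pulling competitor functions back through local coordinate charts (uniformly bi-Lipschitz because dimension and curvature are bounded) embeds them into $\real^{p}$-valued functions of bounded variation, for which $\log N(\eps,\{g:\TV(g)\le C',\,d_{n}(g,\mu)\le\delta\},d_{n})\asymplt C'/\eps$ uniformly over $\mathscr R(p,\kappa)$. Dudley's entropy integral then gives $\expect\sup\{|\mathbb{G}_{n}(g)|:\TV(g)\le C',\,d_{n}(g,\mu)\le\delta\}\asymplt n^{-1/2}\sqrt{C'\delta}$ up to lower-order terms, and a dyadic peeling over $d_{n}(\hat\mu,\mu)$ and over $\TV(\hat\mu)$---using geometric interpolation and the geometric center of functions to shrink a competitor towards $\mu$ along geodesics while bounding its total variation via convexity of $d$---turns the fundamental bound on each slice into $d_{n}^{2}(\hat\mu,\mu)\asymplt n^{-1/2}\sqrt{\TV(\hat\mu)\,d_{n}(\hat\mu,\mu)}-\lambda\TV(\hat\mu)+\lambda C$. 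Balancing the fluctuation against the penalty, with the large-$\TV$ slices ruled out because $-\lambda\TV(\hat\mu)$ then dominates, forces $d_{n}(\hat\mu,\mu)\asymplt n^{-1/3}$, and the sub-Gaussian tails from \ref{cond:H:subG} make the exceedance probability vanish uniformly in $F\in\mathscr F_{n}$.

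The principal obstacle, I expect, is the \emph{uniform} control---over the whole family $\mathscr R(p,\kappa)$ and over a region of the metric space whose location and size depend on the data---of both the linearization remainder $r_{i}(g)$ and the metric entropy: since the image of $\hat\mu$ is a priori neither bounded nor anchored near that of $\mu$, and the ambient manifold itself varies, the curvature and dimension bounds must be leveraged to make every geometric estimate scale-free, while the total-variation penalty must be used to excise the large-$\TV$ part of the parameter space. The lack of vector-space structure---one cannot subtract functions or form convex combinations directly---is exactly what forces the geometric substitutes (Alexandrov inner product, geodesic interpolation, geometric center of functions), and checking that these have the required convexity, Lipschitz, and moment properties uniformly in $\kappa$ is the technical heart; condition \ref{cond:H:subG} is what then upgrades these deterministic geometric estimates to the stated uniform probabilistic bound.
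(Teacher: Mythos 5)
Your proposal follows essentially the same route as the paper: Theorem \ref{thm:hadamard-manifold} is obtained there as a corollary of Theorem \ref{thm:hadamard} and Proposition \ref{LEM:KEY} (with the entropy condition \ref{cond:H:entropy} supplied for curvature- and dimension-bounded manifolds by Proposition \ref{prop:entropy-tv-ball}), and the proof of Proposition \ref{LEM:KEY} adapts Mammen--van de Geer exactly as you sketch --- basic inequality, the Alexandrov inner product together with the Fr\'echet-mean characterization in place of $\mathbb{E}\varepsilon_i=0$ (your $\Log$-map linearization is precisely its Riemannian instantiation), geodesic interpolation and the center of a function in place of scaling and projection, and a preliminary localization of the image of $\hat\mu$ in a ball of radius of order $\log n$ before the peeling/chaining step. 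Two small corrections: the roles of the curvature bounds in your remainder $r_i(g)$ are swapped --- nonpositive curvature (the CAT(0) cosine inequality, i.e.\ condition (a) of Proposition \ref{LEM:KEY}) forces $r_i(g)\geq 0$, while it is the lower bound $\kappa$ that yields the upper bound on $r_i(g)$ via Hessian comparison; and the exponential-map charts are not uniformly bi-Lipschitz over regions of unbounded radius (the comparison constants grow with the radius, which is why Proposition \ref{prop:entropy-tv-ball} carries an $R$-dependent factor and why the paper needs the suboptimal-rate localization before invoking the local entropy bound \ref{cond:H:entropy} with $R=8C$) --- neither issue changes the viability of your argument.
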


The manifold in the above theorem is a Hadamard manifold which is also a Hadamard space according to Theorem 1A.6 of \cite{Bridson1999}. This motivates us to  generalize the above result to general Hadamard spaces that are not a manifold. To this end, we first observe that  Riemannian manifold-valued functions of bounded total variation satisfy an entropy condition, as follows. 
For a subset $\mathscr B$ of $\mathscr G_{\manifold}$, the minimal number of balls of radius $\delta$ in $(\mathscr G_{\manifold},d_n)$ to cover $\mathscr B$ is denoted by  $N(\delta,\mathscr{B},d_{n})$. 
The covering number $N(\delta,\mathscr{B},d_{n})$  depends on $d_n$, which in turn depends on the metric $d$ as per (\ref{dn}).  Proposition \ref{prop:entropy-tv-ball} in Appendix \ref{sec:aux}  shows that manifolds $\manifold$ in the family $\mathscr R(p,\kappa)$ of Theorem \ref{thm:hadamard-manifold} satisfy the following condition.
\begin{enumerate}[label=(\textbf{H\arabic*})]
	\setcounter{enumi}{1}
	\item\label{cond:H:entropy}For a fixed $R>0$, there exists a constant $K>0$ that  may depend on $R$, such that  $\log N(\delta,\mathscr G_{\manifold}^r (r),d_{n})\leq K\delta^{-1}$ for all $\delta>0$, $n\geq 1$ and $0<r\leq R$.
\end{enumerate}
This condition essentially controls the (local) complexity of the underlying space $\manifold$, {and is key for the asymptotic analysis based on empirical process theory, such as \cite{Mammen1997}.} 
For those Hadamard spaces  and classes $\mathscr G_{\manifold}$ of functions that satisfy the condition, we have the following result that generalizes Theorem \ref{thm:hadamard-manifold}.
\begin{thm}
	\label{thm:hadamard}For $C>0$, for a family $\mathscr H(K)$ of Hadamard spaces such that for each $\manifold\in\mathscr H(K)$ the class of functions $\mathscr G_{\manifold}$ satisfies the condition \ref{cond:H:entropy} for $R=15C$, with $\lambda\asympeq n^{-2/3}$,
	one has $$\underset{D\rightarrow\infty}{\lim}\underset{n\rightarrow\infty}{\lim\sup}\,\underset{F\in\mathscr{F}_n}{\sup}\,\mathbb{P}_F\{d_{n}(\hat{\mu},\mu)>Dn^{-1/3}\}=0,$$
	where $\mu$ is defined in \eqref{mun}, $\hat\mu$ is given in \eqref{eq:definition-regularized-estimator}, $\mathbb P_F$ is the probability measure 
	induced by $F$, and $\mathscr{F}_n=\mathscr{F}_n(K,C,\beta,\zeta)$ for constants $K,C,\beta,\zeta>0$ is the collection of joint probability distributions of $Y_1,\ldots,Y_n$ on $\manifold$ for which   $\manifold\in\mathscr H(K)$, $\mu\in\mathscr G_{\manifold}(C)$, and \ref{cond:H:subG} holds for $\beta, \zeta>0$.
\end{thm}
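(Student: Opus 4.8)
The plan is to carry out a penalized M-estimation argument in the pseudo-metric space $(\mathscr G_{\manifold}, d_n)$, using the basic inequality for $\hat\mu$ together with an empirical process (peeling/chaining) bound whose input is precisely the entropy condition \ref{cond:H:entropy}. First I would record the \textbf{basic inequality}: since $\hat\mu$ minimizes $L_\lambda$, comparing with $\mu$ gives
\[
\frac1n\sum_{i=1}^n d^2(\hat\mu(t_i),Y_i)+\lambda\,\TV(\hat\mu)\;\le\;\frac1n\sum_{i=1}^n d^2(\mu(t_i),Y_i)+\lambda\,\TV(\mu).
\]
The next step is the \textbf{convexity/geometry input}, which is where the Hadamard structure enters: one needs a lower bound of the form
\[
\frac1n\sum_{i=1}^n\bigl\{d^2(\hat\mu(t_i),Y_i)-d^2(\mu(t_i),Y_i)\bigr\}\;\ge\; d_n^2(\hat\mu,\mu)\;-\;\text{(linear error term)},
\]
coming from the strong convexity of the squared distance along geodesics in a CAT$(0)$ space (the $d^2(\cdot,y)$ function is $1$-strongly convex along geodesics) and the first-variation/Alexandrov-angle machinery referenced in the introduction and developed in Appendix \ref{sec:o-lem-key}. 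The ``linear error term'' is an empirical-process term of the shape $n^{-1}\sum_i \langle\!\langle \cdot\rangle\!\rangle$ indexed by $g\in\mathscr G_{\manifold}$, involving the (sub-Gaussian, by \ref{cond:H:subG}) residual quantities $d(\mu(t_i),Y_i)$; combining this with the basic inequality yields
\[
d_n^2(\hat\mu,\mu)+\lambda\,\TV(\hat\mu)\;\lesssim\;\lambda\,\TV(\mu)+\sup_{g}\Bigl|\text{empirical linear term in }g\Bigr|.
\]

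Then comes the \textbf{main analytic step}: control the supremum of the empirical process over the relevant class by a localization argument. One splits the class $\{g:\TV(g)\le C'\}$ (for $C'$ of order $C$, using $\TV(\hat\mu)\le\TV(\tilde\mu)$ from Prop.\ \ref{prop:characterization-estimator} and the basic inequality) into shells where $d_n(g,\mu)\asymp 2^j\delta$, bounds the expected supremum over each shell by a Dudley entropy integral $\int_0^{2^j\delta}\sqrt{\log N(\varepsilon,\cdot,d_n)}\,d\varepsilon$, invokes \ref{cond:H:entropy} (the $\log N\le K\varepsilon^{-1}$ bound makes this integral converge and produces a $\sqrt{\text{radius}}$-type rate), and uses a sub-Gaussian maximal/concentration inequality to get a tail bound uniform over $F\in\mathscr F_n$. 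Balancing the resulting bound $\asymp n^{-1/2}\,d_n(\hat\mu,\mu)^{1/2}$ against $d_n^2(\hat\mu,\mu)$ and against $\lambda\,\TV(\mu)\lesssim\lambda C$, and choosing $\lambda\asympeq n^{-2/3}$, gives $d_n^2(\hat\mu,\mu)=O_P(n^{-2/3})$, i.e.\ $d_n(\hat\mu,\mu)=O_P(n^{-1/3})$, with the $O_P$ uniform over $\mathscr F_n$; the stated $\lim_{D}\lim\sup_n\sup_F\mathbb P_F$ statement is then just a restatement. The reduction to balls of bounded radius (replacing $\mathscr G_{\manifold}(C)$ by $\mathscr G_{\manifold}^{8C}(8C)$, matching $R=8C$ in \ref{cond:H:entropy}) is handled by noting that $\hat\mu$ and $\mu$, having total variation $O(C)$ and satisfying the basic inequality, stay within an $O(C)$-ball around $\mu(t_1)$ with high probability.

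The step I expect to be the \textbf{main obstacle} is the geometric lower bound together with the accompanying linearization of the empirical term: in a general Hadamard space there is no ambient linear structure, so ``$d^2(g(t_i),Y_i)-d^2(\mu(t_i),Y_i)\ge d^2(g(t_i),\mu(t_i)) - (\text{linear in }g)$'' must be established intrinsically via the CAT$(0)$ four-point comparison and the Alexandrov inner product, and one must check that the resulting ``linear'' functionals of $g$ are genuinely sub-Gaussian-dominated and index a class with the same entropy as $\mathscr G_{\manifold}$ — this is exactly the content that Appendix \ref{sec:o-lem-key} is built to supply, and the proof here would cite those lemmas rather than reprove them. Everything downstream (peeling, Dudley integral, sub-Gaussian maximal inequality, rate balancing) is the standard Mammen--van de Geer template adapted to the pseudo-metric $d_n$, and presents no essential difficulty once the entropy bound \ref{cond:H:entropy} and the convexity inequality are in hand.
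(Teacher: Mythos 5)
Your skeleton — basic inequality, linearization of $d^2(g(t_i),Y_i)-d^2(\mu(t_i),Y_i)$ via the CAT(0) cosine law and the Alexandrov inner product $d(\mu(t_i),g(t_i))\,d(\mu(t_i),Y_i)\cos\angle_{\mu(t_i)}(Y_i,g(t_i))$, the Fr\'echet-mean first-order condition replacing $\expect\varepsilon_i=0$, then peeling with the entropy bound \ref{cond:H:entropy} and rate balancing at $\lambda\asympeq n^{-2/3}$ — is exactly the paper's strategy: the paper proves Theorem \ref{thm:hadamard} by reducing it to Proposition \ref{LEM:KEY} and verifying its conditions \ref{lem:key:2}--\ref{lem:key:4} for Hadamard spaces (the Lipschitz property of $r\mapsto d(p,r)\cos\angle_p(q,r)$, which you correctly flag as needed for the linearized process to inherit the entropy of $\mathscr G_{\manifold}$, holds with universal constant $5$). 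However, you gloss over the two steps that are the actual technical content. First, your peeling over $\{g:\TV(g)\le C'\}$ with ``$C'$ of order $C$, using the basic inequality'' is circular: the basic inequality only bounds $\lambda\TV(\hat\mu)$ by $\lambda\TV(\mu)$ \emph{plus the empirical term}, so $\TV(\hat\mu)$ is not known to be $O(C)$ before that term is controlled. In Mammen--van de Geer this is resolved by working with the normalized function $(\hat\mu-\mu)/(\TV(\hat\mu)+C)$; the paper's substitute is the geodesic interpolation $\tilde g_\theta(t)=\llbracket\mu(t),g(t)\rrbracket_\theta$ with $\theta=C/\{\TV(g)+C\}$, whose total variation is at most $2C$ by convexity of the distance function \eqref{eq:lem:key:1}. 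This device is one of the paper's main innovations and is not part of the ``standard template'' you defer to.

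Second, your localization claim — that $\hat\mu$ stays in an $O(C)$-ball around $\mu(t_1)$ with high probability ``by the basic inequality'' — does not hold at the quantitative level needed: sub-Gaussianity bounds $n^{-1}\sum_i d^2(\mu(t_i),Y_i)$ only by a constant depending on $\beta,\zeta$, not by anything tied to $8C$, and \ref{cond:H:entropy} is assumed only for balls of radius $r\le R=8C$. The paper handles this with a two-stage bootstrap that your sketch omits entirely: (i) show the image of $\hat\mu$ lies in a ball of radius $O(\log n)$ around the \emph{center} of $\mu$ (the Fr\'echet integral of $\mu$, the paper's replacement for Mammen--van de Geer's projection onto polynomials); (ii) derive a suboptimal rate with logarithmic factors; (iii) use it to prove the centers of $\hat\mu$ and $\mu$ are close, which — combined with the interpolation — finally confines the relevant class to the radius-$8C$ ball where \ref{cond:H:entropy} applies. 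Without the interpolation and the center argument, the Dudley/peeling step has no class of uniformly bounded variation and bounded diameter to which the entropy hypothesis can be applied, so the proposal as written does not close.
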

When $\manifold$ is the one-dimensional Euclidean space $\real$, \cite{Donoho1998} showed that the minimax rate is $n^{-1/3}$ for the class of uniformly bounded variation; see also \cite{Sadhanala2016}. Since $\mathscr H(K)$ contains the one-dimensional Euclidean space  for the same class of functions, the rate in the above theorem is also the minimax rate for the family $\mathscr H(K)$; our result is thus a generalization of the minimax result of \cite{Donoho1998} to Hadamard spaces. In addition, if the entropy condition of \ref{cond:H:entropy} is replaced with $\log N(\delta,\mathscr G_{\manifold}^r (r),d_{n})\leq K\delta^{-\alpha}$ for some constant $\alpha\in(0,2)$, then the proof of Theorem \ref{thm:hadamard} can be modified to show that $d(\hat\mu,\mu)=\Op(n^{-1/(2+\alpha)})$.

There are various geometric properties of Hadamard spaces  that enable the extension in Theorem \ref{thm:hadamard};  the most important among these is the convexity outlined in the following proposition.
\begin{prop}\label{LEM:KEY}
	For $C>0$, let $\mathscr M(K)$ be a family of metric spaces such that for each $\manifold\in\mathscr M(K)$ the class $\mathscr G_{\manifold}(C)$ of   functions satisfies \ref{cond:H:entropy} with $R=15C$. In addition, the following conditions hold  for a universal constant $C_1>0$: For each $\manifold\in\mathscr M(K)$,
	\begin{enumerate}[label=\textup{(\alph*)}]
		\item\label{lem:key:2} $d^{2}(q,r)\geq d^{2}(p,r)-2d(p,q)d(p,r)\cos\angle_{p}(q,r) +d^{2}(p,q)$ 
		for all $p,q,r\in\manifold$;
		\item\label{lem:key:3} the function $f(r)=d(p,r)\cos\angle_{p}(q,r)$ is Lipschitz continuous with a Lipschitz constant no larger than $C_1$ for all $p,q\in\manifold$;
	\item\label{lem:key:4} $\expect \{d(\mu(t_i),Y_i)\cos\angle_{\mu(t_i)}(Y_i,q)\}\leq 0$
	for all $q\in\manifold$, $n\geq1$ and $1\leq i\leq n$.
\end{enumerate}
	For $\lambda\asympeq n^{-2/3}$, it then holds that
	\be \label{prop2} \underset{D\rightarrow\infty}{\lim}\underset{n\rightarrow\infty}{\lim\sup}\,\underset{F\in\mathscr{F}_n}{\sup}\,\mathbb{P}_F\{d_{n}(\hat{\mu},\mu)>Dn^{-1/3}\}=0,\ee
	where $\mu$ is defined in \eqref{mun}, $\hat\mu$ is given in \eqref{eq:definition-regularized-estimator}, $\mathbb P_F$ is the probability measure induced by $F$, and $\mathscr{F}_n=\mathscr{F}_n(K,C,\beta,\zeta)$ is a collection of joint probability distributions  of $Y_1,\ldots,Y_n$ on $\manifold\in\mathscr M(K)$ such that $\mu\in\mathscr G_{\manifold}(C)$, and the conditions (c) and \ref{cond:H:subG} hold for $\beta,\zeta>0$.
\end{prop}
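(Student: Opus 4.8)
We sketch a penalized $M$-estimation argument in which \ref{lem:key:2}--\ref{lem:key:4} play the roles that the law of cosines, the $1$-Lipschitz property of Euclidean orthogonal projection, and the least-squares normal equations play in $\real^d$, while \ref{cond:H:entropy} furnishes the entropy needed for an empirical-process bound; since every constant produced below depends only on $(K,C,C_1,\beta,\zeta)$, the resulting probability bound will be uniform over $\mathscr{F}_n$. \emph{Basic inequality.} As $\mu\in\mathscr{G}_{\manifold}(C)$ is feasible in \eqref{eq:definition-regularized-estimator}, $L_\lambda(\hat\mu)\le L_\lambda(\mu)$, which rearranges to $n^{-1}\sum_i\{d^2(\hat\mu(t_i),Y_i)-d^2(\mu(t_i),Y_i)\}\le\lambda\TV(\mu)-\lambda\TV(\hat\mu)$. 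Applying \ref{lem:key:2} with $(p,q,r)=(\mu(t_i),\hat\mu(t_i),Y_i)$ and summing bounds the left-hand side below by $d_n^2(\hat\mu,\mu)-2Z_n(\hat\mu)$, where $Z_n(g)=n^{-1}\sum_{i=1}^n d(\mu(t_i),g(t_i))\,d(\mu(t_i),Y_i)\cos\angle_{\mu(t_i)}(g(t_i),Y_i)$. With $\TV(\mu)\le C$ this yields the basic inequality
\[
d_n^2(\hat\mu,\mu)+\lambda\TV(\hat\mu)\ \le\ \lambda C+2Z_n(\hat\mu),
\]
so everything reduces to controlling $Z_n$ at $\hat\mu$.

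\emph{The stochastic term.} Write $Z_n(g)=n^{-1}\sum_i h_i(g(t_i))\eta_i$ with $\eta_i=d(\mu(t_i),Y_i)$ and $h_i(x)=d(\mu(t_i),x)\cos\angle_{\mu(t_i)}(x,Y_i)$. By \ref{lem:key:3}, each $h_i$ is $C_1$-Lipschitz with $h_i(\mu(t_i))=0$, so $|h_i(x)|\le C_1 d(\mu(t_i),x)$ and the centered process $g\mapsto Z_n(g)-\expect Z_n(g)$ vanishes at $g=\mu$, removing the usual constant term from the chaining bound; by \ref{lem:key:4} and symmetry of the Alexandrov angle, $\expect Z_n(g)\le 0$ for deterministic $g$, so $Z_n(\hat\mu)$ is bounded by $\sup_g\{Z_n(g)-\expect Z_n(g)\}$ over any class containing $\hat\mu$. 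Since the $\eta_i$ are uniformly sub-Gaussian (condition \ref{cond:H:subG}) and the $h_i$ are $C_1$-Lipschitz, for deterministic $g,g'$ the increment $Z_n(g)-Z_n(g')$ is a sum of independent mean-zero terms each dominated by $C_1 d(g(t_i),g'(t_i))\eta_i$, hence sub-Gaussian with Orlicz norm $\lesssim n^{-1/2}d_n(g,g')$. Combining this with \ref{cond:H:entropy} through Dudley's chaining inequality (after truncating on $\{\max_i\eta_i\lesssim\sqrt{\log n}\}$, or via a sub-Gaussian maximal inequality) yields a modulus of continuity of the form $\sup\{|Z_n(g)-\expect Z_n(g)|:d_n(g,\mu)\le\delta,\ \TV(g)\le t\}=\Op(n^{-1/2}\delta^{1/2}(t\vee 1)^{1/2})$, the exponent $1/2$ on $\delta$ being produced precisely by the entropy exponent $1$ in \ref{cond:H:entropy}.

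\emph{Peeling and the rate.} A preliminary crude bound $|Z_n(g)|\le C_1 d_n(g,\mu)(n^{-1}\sum_i\eta_i^2)^{1/2}$, with $n^{-1}\sum_i\eta_i^2=\Op(1)$ uniformly over $\mathscr{F}_n$, fed into the basic inequality first gives $d_n(\hat\mu,\mu)=\Op(1)$ and $\TV(\hat\mu)=\Op(\lambda^{-1})$, confining $\hat\mu$ to a range over which the modulus bound applies. One then peels dyadically: on $\{d_n(\hat\mu,\mu)\in[2^j n^{-1/3},2^{j+1}n^{-1/3}]\}$ the basic inequality and the modulus bound give, up to constants, $2^{2j}n^{-2/3}+\lambda\TV(\hat\mu)\le\lambda C+n^{-1/2}(2^{j+1}n^{-1/3})^{1/2}(\TV(\hat\mu)\vee 1)^{1/2}$; with $\lambda\asympeq n^{-2/3}$, Young's inequality lets the $\lambda\TV(\hat\mu)$ term on the left absorb the $\TV(\hat\mu)$ factor on the right, leaving $2^{2j}\lesssim 1+2^{j/2}$, which forces $j$ to be bounded. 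Hence $d_n(\hat\mu,\mu)=\Op(n^{-1/3})$; since all constants are uniform over $\mathscr{F}_n$, summing the probabilities of the finitely many surviving shells and letting $D\to\infty$ yields \eqref{prop2}.

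\emph{Main obstacle.} The delicate step is the modulus-of-continuity bound: obtaining the correct joint dependence on $d_n(g,\mu)$ and $\TV(g)$ while the noise multipliers $\eta_i$ are only sub-Gaussian rather than bounded and $h_i$ itself depends on $Y_i$ (which rules out a clean linear-in-noise chaining and calls for truncation on a high-probability event, or a Bernstein--Orlicz-type maximal inequality), together with ensuring that the covering guaranteed by \ref{cond:H:entropy} still applies to the class over which the supremum is taken, including step-function competitors that need not lie in $\mathscr{G}_{\manifold}$ itself --- this is why the hypotheses are stated with the inflated radius $R=8C$.
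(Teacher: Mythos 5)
Your skeleton matches the paper's: the basic inequality obtained by feeding \ref{lem:key:2} into $L_\lambda(\hat\mu)\le L_\lambda(\mu)$, the identification of the stochastic term as the sum of "Alexandrov inner products" $d(\mu(t_i),g(t_i))\,d(\mu(t_i),Y_i)\cos\angle_{\mu(t_i)}(g(t_i),Y_i)$, the use of \ref{lem:key:4} as the surrogate for $\expect\varepsilon_i=0$ and of \ref{lem:key:3} as the Lipschitz/contraction property needed for the sub-Gaussian increment bound, and a chaining-plus-peeling endgame. These are exactly the first two of the three key ingredients described in Appendix B.

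There is, however, a genuine gap in the empirical-process step, and it is precisely where the paper has to introduce its new geometric devices. Your modulus-of-continuity bound takes a supremum over $\{g: d_n(g,\mu)\le\delta,\ \TV(g)\le t\}$ with $t$ ranging up to $\TV(\hat\mu)=\Op(\lambda^{-1})=\Op(n^{2/3})$, and with no control on \emph{where in $\manifold$} these functions take values. But \ref{cond:H:entropy} only bounds the entropy of $\mathscr G_{\manifold}^{r}(r)$, i.e., classes with total variation at most $r$ \emph{and} range contained in a single ball of radius $r$, and only for $r\le R=8C$ fixed. For unbounded $\manifold$ (e.g., SPD matrices) your classes are contained in no such $\mathscr G_{\manifold}^{r}(r)$, so the chaining bound cannot be extracted from the stated hypotheses. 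The paper bridges this with two constructions you do not supply: (i) geodesic interpolation $\tilde g_\theta(t)=\llbracket\mu(t),g(t)\rrbracket_\theta$ with $\theta=C/\{\TV(g)+C\}$, which by convexity of the metric gives $\TV(\tilde g_\theta)\le 2C$ and plays the role of the Euclidean rescaling $(\hat\mu-\mu)/(\TV(\hat\mu)+C)$, so that the supremum is only ever taken over a class of fixed total variation; and (ii) the \emph{center} of a function (a Fréchet integral), together with a preliminary sub-optimal rate and a lemma showing that with probability tending to one the image of $\hat\mu$ lies in a ball of radius $O(\log n)$ about the center of $\mu$, which is then sharpened to show the centers of $\hat\mu$ and $\mu$ are close; only then is the relevant class confined to a ball of radius $8C$ and \ref{cond:H:entropy} applicable. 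You flag this localization as "the delicate step" and correctly guess that $R=8C$ is tied to it, but flagging it is not the same as doing it: without the interpolation and the center argument, the bound $\Op(n^{-1/2}\delta^{1/2}(t\vee1)^{1/2})$ is not available, and the peeling that follows has nothing to stand on. A smaller point: in the final shell inequality the absorbed term should leave $2^{2j}\lesssim 1+2^{j}$ rather than $1+2^{j/2}$, though either way $j$ is bounded.
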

The first two conditions of the above proposition emerge as  properties of Hadamard space. 
In fact, condition \ref{lem:key:2} is an  alternative characterization of the CAT(0) space, which has nonpositive curvature (also known as NPC space). To see this, by Proposition 1.7 in Chapter II.1 of \cite{Bridson1999}, $\manifold$ is a CAT(0) space if and only if for all $p,q,r\in\manifold$, $d(q,r)\geq \bar d_0(\bar q,\bar r)$, where $\bar p,\bar q,\bar r$ form a triangle in the reference space $M_0^2=\real^2$ such that $d(p,q)=\bar d_0(\bar p,\bar q)$, $d(p,r)=\bar d_0(\bar p,\bar r)$ and $\angle_{\bar p}(\bar q,\bar r)=\angle_p(q,r)$. Then, by the law of cosines one further has $d^2(q,r)\geq\bar d_0^2(\bar q,\bar r)=\bar d_0^2(\bar p,\bar q)-2\bar d_0(\bar p,\bar q)\bar d_0(\bar p,\bar r)\cos\angle_{\bar p}(\bar q,\bar r)+\bar d_0^2(\bar p,\bar r)=d^2(p,q)-2d(p, q) d(p,r)\cos\angle_{p}(q,r)+d^2(p,r)$.As the condition \ref{lem:key:2} implies that $\manifold$ is a CAT(0) space which is uniquely geodesic, the angles $\angle_p(q,r)$ and $\angle_{\mu(t_i)}(Y_i,q)$ in Proposition \ref{LEM:KEY} are well defined. 
Verification of the Lipschitz condition \ref{lem:key:3} is nontrivial for a general Hadamard space. Using  various properties of the Hadamard space, we show in Lemma \ref{lem:lipschitz-proj} \citep{Lin2021} that condition \ref{lem:key:3}  holds for all Hadamard spaces with the universal constant $C_{1}=5$. Finally, Lemma \ref{LEM:FRECHET-MEAN} \citep{Lin2021} shows that condition \ref{lem:key:4} also holds for Hadamard spaces. Consequently, Theorem \ref{thm:hadamard} follows directly from Proposition \ref{LEM:KEY}, and Theorem \ref{thm:hadamard-manifold} follows as a special case of Theorem \ref{thm:hadamard}.

The CAT(0) inequality, which holds for Hadamard spaces, implies the convexity of the distance function, i.e., \begin{equation}\label{eq:lem:key:1}d(\llbracket p,q\rrbracket_{\theta},\llbracket p,r\rrbracket_{\theta})\leq \theta d(q,r) \text{ for all }\theta\in [0,1] \text{ and all }p,q,r\in\manifold,\end{equation} 
where  $\llbracket p,q\rrbracket_{\theta}$ denotes  the point that sits  on the geodesic segment connecting $p$ to $q$ and satisfies  $d(p,\llbracket p,q\rrbracket_{\theta})=\theta d(p,q)$.
This convexity is used to bound the total variation of the geodesically interpolated functions $\tilde{g}_\theta(t)=\llbracket \mu(t),g(t)\rrbracket_{\theta}$ by the total variation of the functions $\mu$ and $g$; see Section \ref{sec:pf-lem-key} of the supplementary article  \citep{Lin2021}. We provide an overview of the main steps of the proof of Proposition~\ref{LEM:KEY} demonstrating  how it relies on new geometric ideas that are introduced here to establish this key result  in  Appendix~\ref{sec:o-lem-key}, while the detailed steps of the proof are provided  in Section~\ref{sec:pf-lem-key} of the supplementary materials.

In the following, we discuss three pertinent examples which  will also be further investigated in simulations and data applications.
\begin{example}[{\it Symmetric positive-definite matrices}] 
	\label{exa:spd}{\rm Symmetric positive-definite (SPD) matrices as random objects arise
		in many applications that include  computer vision \citep{Rathi2007}, 
		medical imaging \citep{Fillard2005,Arsigny2006,penn:06,Fletcher2007,Dryden2009} and neuroscience
		\citep{Friston2011}. For example, diffusion tensor imaging, which is
		commonly used to obtain brain connectivity maps based on magnetic resonance imaging (MRI),
		produces $3\times3$ SPD matrices that characterize the local
		diffusion \cp{zhou:16}. For the space of $m\times m$ SPD matrices,
		denoted by $\spd$, the Euclidean distance function $d_{E}(A,B)=\|A-B\|_{F}$ that is based on the 
		Frobenius norm  $\|\cdot\|_{F}$ suffers
		from the so-called swelling effect: The determinant of the average SPD matrix
		is larger than any of the individual determinants \citep{Arsigny2007}. Rectifying this issue 
		motivates the use of  more sophisticated distance functions,  such as
		the Log-Euclidean distance $d_{LE}(A,B)=\|\log A-\log B\|_{F}$ \citep{Arsigny2007},  
		the affine-invariant distance $d_{AI}(A,B)=\|\log(A^{-1/2}BA^{-1/2})\|_{F}$
		\citep{Moakher2005,penn:06} or the Log-Cholesky distance \citep{Lin2019a}, 
		where $\log A$ is the matrix logarithm of $A$. Either of the above distance functions is indeed induced by a Riemannian metric tensor that turns $\spd$ into a complete and simply connected Riemannian manifold of nonpositive and bounded sectional curvature. Therefore, Theorem~\ref{thm:hadamard-manifold} applies to this case.} 
\end{example}

\begin{example}[{\it Wasserstein space} $\mathcal{W}_{2}(\real)$]
\label{exa:wass} {\rm Let $\mathcal{W}_{2}(\real)$ be the space of probability
distributions on the real line $\real$, equipped with the  Wasserstein distance
$d_{W}(G_{1},G_{2})=[\int_{0}^{1}\{G_{1}^{-1}(s)-G_{2}^{-1}(s)\}^{2}\diffop s]^{1/2}$,  where $G_{1}^{-1}$
and $G_{2}^{-1}$ are the (left continuous) quantile functions
corresponding to distribution functions $G_{1}$ and $G_{2}$. According
to Proposition 4.1 of \citet{Kloeckner2010}, $\mathcal{W}_{2}(\real)$
is a CAT(0) space. As $\mathcal{W}_{2}(\real)$ inherits the completeness
of $\real$, $\mathcal{W}_{2}(\real)$ is also a Hadamard space. We illustrate the utility of $\mathcal{W}_{2}(\real)$ for data analysis in  a
study of mortality profiles in  Section \ref{sec:mortality}.  
As in the proof of Proposition 1 of \cite{Petersen2019}, one can show that $\sup_{G\in \mathcal W_2(\real)} \log N(\epsilon\delta,B_G(\delta),d_W)\leq K\epsilon^{-1}$ for a constant $K$ and all $\delta,\epsilon>0$, where $B_G(\delta)=\{\tilde G\in \mathcal W_2(\real):d_W(G,\tilde G))\leq \delta\}$. 
Then, for the function class 
$\mathscr G$ of Lipschitz continuous $\mathcal{W}_{2}(\real)$-valued functions defined on $\tdomain$, using Proposition \ref{prop:entropy} in Appendix \ref{sec:aux}, we can establish condition \ref{cond:H:entropy}, and therefore the rate in Theorem~\ref{thm:hadamard} applies. It is worth noting that $\mathcal{W}_{2}(\real^m)$ is not a Hadamard space for $m\geq 2$ \cite[Section 4,][]{Kloeckner2010}, so that Theorem \ref{thm:hadamard} does not apply.}
\end{example}

\begin{example}[{\it Phylogenetic trees}]\label{ex:tree}
		{\rm Phylogenetic trees are central  data objects in the field of evolutionary biology, where they are used to represent the evolutionary history of a set of organisms. In a  seminal paper by \cite{Billera2001}, phylogenetic trees with $m$ leaves are modeled by metric  $m$-trees endowed with a metric that turns the space of phylogenetic $m$-trees into a metric space, as follows. A leaf is a vertex that is connected by only one edge, and a metric $m$-tree is a tree with $m$ uniquely labeled leaves and positive lengths on all interior edges, where an edge is called an interior edge if it does not connect to a leaf. A collection of $m$-trees that have the same tree structure (taking  leaf labels into account) but different edge lengths can be identified with the orthant $(0,\infty)^r$, where $r$ (determined by the tree structure) is the number of interior edges of each tree in the collection. Collections of different tree structures, identified by different orthants, can be glued together along the common faces of the orthants. With this identification between points and metric $m$-trees, a natural distance function $d_T$ on the space $\mathscr T_m$ of all metric $m$-trees is defined in the following way: For  two trees in the same orthant, their distance is the Euclidean distance, while for two trees from different orthants, their distance is the minimum length over all paths that connect them and consist of only connected segments, where a segment is a straight line within an orthant. 
		According to Lemma 4.1 of \cite{Billera2001}, the space ($\mathscr{T}_m,d_T)$ is a CAT(0) space. In addition, as a cubical complex, by Theorem 1.1 of \cite{Bridson1991} 
		it is also a complete metric space and thus a Hadamard space. For a fixed $m$, from the construction of $\mathscr T_m$, one can see that the covering number $N(\epsilon\delta,B_x(\delta),d_T)$ for the ball $B_x(\delta)$  centered at $x\in\mathscr T_m$ and with radius $\delta$ is of the same order as the covering number of the unit ball of a finite-dimensional  Euclidean space, which is $O(\epsilon^{-k})$ for a $k=k(m) \ge 1$.  For the function class $\mathscr G$  of $\mathscr{T}_m$-valued Lipschitz continuous functions, using Proposition \ref{prop:entropy} in Appendix \ref{sec:aux}, one finds that  the condition \ref{cond:H:entropy} holds for $\mathscr T_m$ and $\mathscr G$. Therefore, Theorem~\ref{thm:hadamard} applies to this case.}
\end{example}

\subsection{Extension to Alexandrov Spaces}\label{sec:alexandrov}
The development of our main results crucially depends on the convexity of the Hadamard space, characterized by  condition \ref{lem:key:2} of Proposition~\ref{LEM:KEY},  which is shown to be equivalent to the CAT(0) inequality and implies  the convexity \eqref{eq:lem:key:1} of the distance function of the Hadamard space. By examining the proofs of Proposition~\ref{LEM:KEY} and Lemma~\ref{lem:key:rate-with-logn} in the supplementary article  \citep{Lin2021}, one finds that  condition \ref{lem:key:2} can be relaxed to \begin{equation}\label{eq:relaxed-cosine-law}d^{2}(q,r) \geq d^2(p,q) - 2d(p,r)d(p,q)\cos \angle_p(q,r)+cd^{2}(p,r)\end{equation} for a universal constant $c>0$, where we note that $c=1$ for Hadamard spaces. 
It turns out that inequality \eqref{eq:relaxed-cosine-law} holds for some subspaces of Alexandrov spaces with positive lower and upper bounded curvature, and thus our main results potentially carry over to such subspaces.

Another key ingredient is the strong convexity of the squared distance function of a  Hadamard space.  A real-valued function $f$ defined on a convex subset of $\real^k$ is strongly convex with parameter $\eta>0$ if $f((1-\theta)p+\theta q)\leq (1-\theta)f(p) + \theta f(q) -\eta \theta(1-\theta) \|p-q\|^2$ for all $p,q$ in the convex subset and $\theta\in[0,1]$. To generalize this concept to functions with  geodesic-metric-space valued arguments, we observe that the convex combination $(1-\theta)u+\theta v$ lies on the straight line connecting $u$ and $v$, and is conveniently  replaced with a point on the geodesic connecting $p$ and $q$. Specifically, we refer to  a function $f$ defined on a geodesically  convex subset 
	$\mathcal{C}$ of a geodesic space as a strongly convex function on $\mathcal{C}$ with parameter $\eta>0$ if $f(\llbracket p,q\rrbracket_\theta)\leq (1-\theta)f(p)+\theta f(q)-\eta\theta(1-\theta)d^2(p,q)$ for all $p,q\in\mathcal C$ and $\theta\in[0,1]$, where a subset in a geodesic space is geodesically convex if for any two points in the subset there exists a unique geodesic contained within the subset that connects those two points. One of the nice properties of strongly convex functions is the existence and uniqueness of a minimizer on  a geodesically convex closed subspace when the function is continuous \citep[Proposition 1.7,][]{Sturm2003}. For any fixed element $q$  of a Hadamard space, the function $f(\cdot)= d^2(\cdot,q)$ that is defined on this space  is continuous and strongly convex with parameter $\eta=1$ \citep[Eq (2),][]{Bacak2015}. This implies the strong convexity of the Fr\'echet function $F(\cdot)=\expect d^2(\cdot,Y)$, whence the Fr\'echet mean of a random object on a Hadamard space always exists and is unique provided that the Fr\'echet function is finite. For specific Alexandrov spaces, the squared distance function shares the property of being strongly convex over some geodesically convex subspaces; see Example \ref{exa:sphere} below.

Utilizing   strong convexity and the relaxed  condition \eqref{eq:relaxed-cosine-law} makes it possible to extend the main results in Section \ref{subsec:main-theory} to certain Alexandrov spaces. Let $\manifold$ be an Alexandrov space with positive lower and upper bound on curvature, where the upper bound is denoted by $\kappa$. The space $\manifold$ generally has a 
finite diameter, according to Theorem 1.9 of \citet{Petrunin1999}. Consequently, the sub-Gaussianity condition \ref{eq:sub-gaussian} is automatically satisfied for all random objects in $\manifold$. We need the following additional assumptions.
\begin{enumerate}[label=(\textbf{A\arabic*})]
	\item\label{cond:A1}  There exists $Q>0$ 
	such that $\log N(\delta,\mathscr G_{\manifold}^r (r),d_{n})\leq rQ\delta^{-1}$
	for all $\delta>0$ and $r>0$.  
	\item\label{cond:A2} There exists a geodesically convex closed subset
	$\mathcal{C}\subset\manifold$ of diameter less than $\pi/(2\sqrt{\kappa})$ such that 
	\begin{enumerate}[label=(A2\alph*)]
		\item\label{cond:A2:a} $Y_i\in\mathcal{C}$ for all $n\geq1$ and $1\leq i\leq n$, 
		\item\label{cond:A2:b} the function $h(x)=d^2(x,y)$ is strongly convex with a universal constant $C_2>0$ for all $y\in\mathcal{C}$, and
		\item\label{cond:A2:c} $d^{2}(q,r) \geq d^2(p,q) - 2d(p,r)d(p,q)\cos \angle_p(q,r)+C_3d^{2}(p,r)\geq0$ for a universal constant $C_3>0$ and all $p,q,r\in\mathcal{C}$.
	\end{enumerate}
\end{enumerate} 
The entropy condition \ref{cond:A1} is a simplified version of the condition \ref{cond:H:entropy}, as now the space $\manifold$ is of bounded diameter.  
The bound on the diameter of the subset $\mathcal C$ implies that $\mathcal C$ is a uniquely geodesic subset of $\manifold$ and thus ensures that the angle $\angle_p(q,r)$ in \ref{cond:A2:c} is well defined. 
 As previously mentioned, the strong convexity condition \ref{cond:A2:b} implies the existence and uniqueness of the Fr\'echet mean,  
 and  \ref{cond:A2:c} is a relaxation of  condition \ref{lem:key:2} of Proposition~\ref{LEM:KEY}. Then, with an argument similar to the proof of Proposition~\ref{LEM:KEY}, the following holds. 
\begin{thm}
	\label{THM:SPHERE}For a family $\mathscr A(R,\kappa)$ of positively curved Alexandrov spaces, all of which have a diameter bounded by $R$ and a  curvature upper bounded by $\kappa>0$, with $\lambda\asympeq n^{-2/3}$,
	one has  $$\underset{D\rightarrow\infty}{\lim}\underset{n\rightarrow\infty}{\lim\sup}\,\underset{F\in\mathscr{F}_n}{\sup}\,\mathbb{P}_F\{d_{n}(\hat{\mu},\mu)>Dn^{-1/3}\}=0,$$
	where $\mu$ is defined in \eqref{mun}, $\hat\mu$ is given in \eqref{eq:definition-regularized-estimator}, $\mathbb P_F$ is the probability measure 
	induced by a probability distribution $F$, and $\mathscr{F}_n=\mathscr{F}_n(R,\kappa,Q,C,C_2,C_3)$ for constants $R,\kappa,Q,C,C_2,C_3$ is the collection of joint probability distributions of $Y_1,\ldots,Y_n$ on $\manifold$ for which   $\manifold\in\mathscr A(R,\kappa)$, $\mu\in\mathscr G_{\manifold}(C)$, and conditions \ref{cond:A1}--\ref{cond:A2} hold.
\end{thm}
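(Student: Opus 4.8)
The plan is to run the proof of Proposition~\ref{LEM:KEY} essentially unchanged, once I have checked that the three structural ingredients it relies on remain available under the hypotheses of Theorem~\ref{THM:SPHERE}: (i) an entropy bound feeding the empirical-process step; (ii) a quadratic lower bound for the Fr\'echet function together with the first-variation inequality at the Fr\'echet mean; and (iii) the (relaxed) cosine inequality that controls the geometric cross terms, supplemented by the convexity of the distance function used to bound the total variation of geodesic interpolants. Throughout, the universal constants of the Hadamard argument are replaced by constants depending only on $R,\kappa,Q,C,C_2,C_3$, and the calibration $\lambda\asympeq n^{-2/3}$ is retained.

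First I would dispose of the global reductions. Since every $\manifold\in\mathscr A(R,\kappa)$ has diameter at most $R$, condition~\ref{cond:H:subG} is automatic (indeed $d(\mu(t_i),Y_i)\le R$ almost surely), and condition~\ref{cond:H:entropy} with $R=8C$ follows from~\ref{cond:A1} with $K=8CQ$, because only radii $r\le 8C$ are ever needed. Because $\mathcal C$ has diameter less than $\pi/(2\sqrt{\kappa})$, it is uniquely geodesic, and the metric projection $\Pi_{\mathcal C}$ onto the closed geodesically convex set $\mathcal C$ is well defined and distance non-increasing on $\manifold$ (a standard fact for complete convex subsets of small enough diameter in a space of curvature bounded above by $\kappa$). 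Using~\ref{cond:A2:a} and the argument of Proposition~\ref{prop:characterization-estimator}, replacing $\hat\mu$ by $\Pi_{\mathcal C}\circ\hat\mu$ increases neither the empirical loss $n^{-1}\sum_i d^2(\cdot,Y_i)$ nor the total variation; and $\mu(t_i)\in\mathcal C$ since projecting any candidate minimizer of the Fr\'echet function onto $\mathcal C$ cannot increase $\expect d^2(\cdot,Y_i)$, while~\ref{cond:A2:b} guarantees uniqueness. Hence $\mu(t_i)$, $\hat\mu(t_i)$, $Y_i$ and all geodesic interpolants $\llbracket\mu(t),\hat\mu(t)\rrbracket_\theta$ may be taken in $\mathcal C$, so the Alexandrov angles $\angle_{\mu(t_i)}(Y_i,q)$ and $\angle_p(q,r)$ below are unambiguously defined.

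The two geometric facts from the Hadamard proof that are \emph{not} supplied directly by the hypotheses and must be re-established on $\mathcal C$ are the analogue of condition~\ref{lem:key:3} of Proposition~\ref{LEM:KEY} (Lipschitz continuity of $r\mapsto d(p,r)\cos\angle_p(q,r)$) and the convexity bound~\eqref{eq:lem:key:1}. Both should follow from the regularity of the squared distance on a CAT($\kappa$) domain of diameter $<\pi/(2\sqrt{\kappa})$: there $d^2(\cdot,q)$ is $C^{1,1}$ with a gradient whose Lipschitz constant depends only on $R$ and $\kappa$, and $d(p,r)\cos\angle_p(q,r)$ is, up to sign, the component of this gradient at $r$ along the geodesic direction towards $q$, so it inherits a Lipschitz constant $C_1=C_1(R,\kappa)$; likewise the geodesic contraction $d(\llbracket p,q\rrbracket_\theta,\llbracket p,r\rrbracket_\theta)\le\rho\,\theta\,d(q,r)$ holds on $\mathcal C$ for a constant $\rho=\rho(R,\kappa)$, which suffices to bound $\TV(\tilde g_\theta)$ by $\rho(\TV(\mu)+\TV(g))$ for the interpolants $\tilde g_\theta(t)=\llbracket\mu(t),g(t)\rrbracket_\theta$. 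The first-variation inequality $\expect\{d(\mu(t_i),Y_i)\cos\angle_{\mu(t_i)}(Y_i,q)\}\le 0$ for $q\in\mathcal C$ --- the analogue of condition~\ref{lem:key:4} of Proposition~\ref{LEM:KEY} --- is then immediate: strong convexity of $h$ in~\ref{cond:A2:b} makes the Fr\'echet function strongly convex on $\mathcal C$, so its one-sided directional derivative at the minimizer $\mu(t_i)$ along the geodesic towards any $q\in\mathcal C$ is nonnegative, and by the first-variation formula for $d^2(\cdot,Y_i)$ this derivative equals $-\expect\{d(\mu(t_i),Y_i)\cos\angle_{\mu(t_i)}(Y_i,q)\}$.

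With these ingredients in place the proof of Proposition~\ref{LEM:KEY} transfers line by line: from the basic inequality $L_\lambda(\hat\mu)\le L_\lambda(\mu)$, the quadratic term is bounded below via strong convexity with parameter $C_2$ (in place of $1$) combined with the relaxed cosine inequality~\ref{cond:A2:c} with constant $C_3$ (in place of $1$); the cross term is split into a bias part, handled by the first-variation inequality, and a stochastic part, controlled through the entropy bound~\ref{cond:A1} and the peeling argument of Lemma~\ref{lem:key:rate-with-logn}; and the choice $\lambda\asympeq n^{-2/3}$ balances the bias, variance and penalty contributions to yield $d_n(\hat\mu,\mu)=\Op(n^{-1/3})$ uniformly over $\mathscr F_n$. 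I expect the main obstacle to be exactly the positively curved counterparts of Lemmas~\ref{lem:lipschitz-proj} and~\ref{LEM:FRECHET-MEAN} together with the confinement to $\mathcal C$: the Hadamard proofs use nonpositive curvature essentially (the $1$-Lipschitz contraction of geodesics and the favorable direction of angle comparison), so the $C^{1,1}$ regularity of $d^2(\cdot,q)$, the quantitative geodesic contraction, and the distance non-increasing property of $\Pi_{\mathcal C}$ must all be re-derived from the two-sided comparison estimates valid on domains of diameter $<\pi/(2\sqrt{\kappa})$, with every constant tracked so that it depends only on $R$ and $\kappa$.
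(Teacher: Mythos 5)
Your overall strategy is the same as the paper's: Theorem \ref{THM:SPHERE} is obtained by rerunning the proof of Proposition \ref{LEM:KEY}, with \ref{cond:A1} supplying the entropy bound (only radii $r\le 8C$, or a constant multiple thereof, are needed), the bounded diameter making \ref{cond:H:subG} automatic, \ref{cond:A2:b} replacing the Hadamard strong convexity and yielding the first-variation inequality $\expect\{d(\mu(t_i),Y_i)\cos\angle_{\mu(t_i)}(Y_i,q)\}\le 0$ at the Fr\'echet mean, \ref{cond:A2:c} replacing condition \ref{lem:key:2}, and local analogues of condition \ref{lem:key:3} and of the contraction \eqref{eq:lem:key:1} (with constants depending on $R$ and $\kappa$, which \ref{cond:A1} can absorb since it holds for all $r>0$) re-derived on the small-diameter convex set $\mathcal C$. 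That part of your plan is in line with what the paper does.

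There is, however, a genuine gap in your reduction that confines $\hat\mu$ (and $\mu$) to $\mathcal C$. You replace $\hat\mu$ by $\Pi_{\mathcal C}\circ\hat\mu$ and invoke, as ``a standard fact,'' that the nearest-point projection onto a complete geodesically convex subset is distance non-increasing. That is a CAT(0) fact and it fails under a positive upper curvature bound: on the unit sphere, project two nearby points at colatitude $\varphi$ onto a convex arc of the equator; their distance expands by roughly a factor $1/\cos\varphi$, which blows up as $\varphi\to\pi/2$. Moreover the projection is only guaranteed to be single-valued for points at distance less than $\pi/(2\sqrt{\kappa})$ from $\mathcal C$, while the ambient diameter bound $R$ need not respect this, so $\Pi_{\mathcal C}\circ\hat\mu$ may not even be well defined. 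Consequently $\TV(\Pi_{\mathcal C}\circ\hat\mu)\le\TV(\hat\mu)$, and hence $L_\lambda(\Pi_{\mathcal C}\circ\hat\mu)\le L_\lambda(\hat\mu)$, do not follow, and the reduction to $\hat\mu(t_i)\in\mathcal C$ collapses; note also that the theorem claims the rate for an arbitrary minimizer $\hat\mu$, so even a rate for the projected curve would not suffice without additionally bounding $d(\hat\mu(t_i),\Pi_{\mathcal C}\hat\mu(t_i))$. Since every quantitative hypothesis you use (\ref{cond:A2:b}, \ref{cond:A2:c}, and your Lipschitz and contraction constants) lives only on $\mathcal C$, this confinement step is precisely where positive curvature bites and needs an actual argument --- for instance an analogue of the localization lemma used in the Hadamard proof (there based on sub-Gaussianity and convexity) showing that, with probability tending to one, any minimizer takes values in a suitable neighborhood of $\mathcal C$ because all $Y_i\in\mathcal C$ and straying from $\mathcal C$ increases the data-fidelity term more than it can reduce the penalty --- rather than an appeal to a CAT(0) projection property. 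Your closing acknowledgment that the non-expansiveness ``must be re-derived'' does not repair this, because the property is simply false in this setting.
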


\begin{example}[{\it Time-indexed compositional data}]\label{exa:sphere}  {\rm Such data arise in various settings that include  longitudinal compositional data
	\citep{dai:17:1}.  Specifically, for compositional data 
	$Y_i=(z_{i,1},\ldots,z_{i,k+1})$ such that $z_{i,j}\geq0$ and $\sum_{j=1}^{k+1}z_{i,j}=1,\,\,\, i=1, \ldots,n$,
	one may  apply the square root transformation on each $z_{i,j}$ and view
	$(\sqrt{z_{i,1}},\ldots,\sqrt{z_{i,k+1}})$ as elements of the  quadrant
	$\mathcal{C}=\{(x_{1},\ldots,x_{k+1})\in\mathbb{S}^{k}:x_{j}\geq0\text{ for }j=1,\ldots,k+1\}$.
	Compositional data can thus be viewed as sampled from
	the convex subset $\mathcal{C}$, where the diameter of this quadrant
	is $\pi/2$. Then, for all $p,q,r\in\mathcal{C}$, whenever $d(q,r)\leq c_{1}<\pi/2$
	for a universal constant $c_{1}>0$,  according to the Taylor expansion of the function $h(\cdot)=d^{2}(\cdot,r)$ at $p$
	and its gradient and Hessian \citep[Supplement A,][]{Pennec2018}, we find that \eqref{eq:relaxed-cosine-law} holds for some universal constant $c=c_2>0$ (depending on $c_{1}$). In addition, the Hessian of $h$  is positive on $\mathcal C$ uniformly for all $r\in\mathcal{C}$, {which implies the strong convexity of $h$.} 
	Then condition \ref{cond:A2} is satisfied if $\prob\{d(\partial\mathcal{C},Y_{i})\geq c_{3}\text{ for all }1\leq i\leq n\}=1$
	for a universal constant $c_{3}>0$, where $\partial\mathcal{C}$
	is the boundary of $\mathcal{C}$ and $d(\partial\mathcal{C},p)$
	is the distance of $p$ to the set $\partial\mathcal{C}$. This requirement 
	 corresponds to points being 
	not too close to the boundary of $\mathcal{C}$. This  is a 
	 mild condition, as $c_{3}$ can be
	arbitrarily small.  For the  class $\mathscr G$ of $\mathbb{S}^{k}$-valued functions of bounded variation defined on $\tdomain$, applying Proposition \ref{prop:entropy-tv-ball} in  Appendix \ref{sec:aux}, 
	we find that  \ref{cond:A1} is also satisfied, and thus  Theorem \ref{THM:SPHERE}
	applies.}
\end{example}

In the above example, all data are located in a subset that has a diameter less than $\pi/2$ and is thus strictly smaller than a hemisphere. If we allow data to be arbitrarily close to the equator, then the constant $c_2$  approaches to zero, and thus the convexity conditions in \ref{cond:A2} might be violated. As pointed out by  a reviewer,  the minimal distance to the equator will play a non-ignorable role, and the convergence rate of Theorem \ref{THM:SPHERE} is expected to change in dependence on  this  minimal distance,  along with changing  constants $C_2$ and $C_3$ in   \ref{cond:A2}. In the extreme case that all data points are located on the equator, the population Fr\'echet mean $\mu$ may not be uniquely defined and thus the total variation regularized estimator might not converge. 
Another extreme case is that the expected Hessian vanishes at the Fr\'echet mean. For this case \cite{Eltzner2019} show that the empirical Fr\'echet mean may still converge to the population Fr\'echet mean,  but at a slower rate. Whether the regularized  estimator proposed here exhibits a similar behavior is of theoretical interest and could be a topic for future research.

\section{Simulation Studies\label{sec:Numerical-Studies}}

We consider three metric spaces,  namely, the
SPD matrix space $\spd$ endowed with the affine-invariant distance in Example \ref{exa:spd} with $m=3$, 
the Wasserstein space $\mathcal{W}_{2}(\real)$ in Example \ref{exa:wass}, and the space of phylogenetic trees in Example \ref{ex:tree}. 
For each of these  metric spaces, two settings are examined. In
the first setting, the underlying mean functions $\mu(t), \, t \in [0,1],$ are locally constant, while in the second setting they smoothly vary with $t\in\tdomain$.  Further details  are given in Table \ref{tab:true-mean-function}. The first setting represents a favorable scenario for total variation regularized Fr\'echet regression, since the estimator is also locally constant, while the second setting is more  challenging. 

For each setting, we investigated two sample sizes,  $n=50$
and $n=150$ for the  design points  $t_{i}=(i-1)/(n-1)$ with  $i=1,\ldots,n$. 
For the 
SPD matrix space, data $Y_i$ were generated as
$Y_{i}=\mu(t_{i})^{1/2}\exp\{\mu(t_{i})^{-1/2}S_{i}\mu(t_{i})^{-1/2}\}\mu(t_{i})^{1/2}$
with $\mathrm{vec}(S_{i})\overset{i.i.d}{\sim}N(0,0.25^{2}I_{6})$, where $\mu(t)$ is as in Table \ref{tab:true-mean-function},  $S_{i}$ is a $3\times3$ symmetric matrix and $\mathrm{vec}(S)$
is its vector representation, i.e., the $6$-dimensional vector obtained
by stacking elements in the lower triangular part of $S$, and 
$I_{6}$ denotes the $6\times6$ identity matrix. 

For the Wasserstein
space, we adopted the method in \citet{Petersen2019} to generate observations $Y_{i}$,
as follows. Let $a_{i}=\expect Z$ and $b_{i}=\{\expect(Z-\expect Z)^{2}\}^{1/2}$
for $Z\sim\mu(t_{i})$, where again the distributions $\mu(t)$ are  as listed in Table \ref{tab:true-mean-function}
for the Wasserstein case. We then first sample $\nu_{i}\sim N(a_{i},1)$
and $\sigma_{i}\sim\mathrm{Gamma}(\alpha,\gamma)$, with  shape parameter $\alpha=0.5b_{i}^{2}$ and rate parameter 
$\gamma=0.5b_{i}$.
Note that $\expect v_{i}=a_{i}$ and $\expect\sigma_{i}=b_{i}$. Then
$Y_{i}$ is obtained by transporting the distribution $N(\nu_{i},\sigma_{i}^{2})$
by a transport map $\mathscr{T}$ that is uniformly sampled from the
collection of maps $\mathscr{T}_{k}(x)=x-\sin(kx)/|k|$ for $k\in\{\pm2,\pm1\}$.
Note that $Y_{i}$ is not a Gaussian distribution due to the transportation.
Nevertheless, one can show that the Fr\'echet mean of $Y_{i}$ is
exactly $\mu(t_{i})$. 

For the case of phylogenetic trees, we generated each $Y_i$ by translating $\mu(t_i)$ along a random geodesic emanating from $\mu(t_i)$ for a random distance that follows the uniform distribution on $[0,0.5]$. This requires identification and computation of geodesics in the tree space $\mathscr{T}_m$ ($m=7$ in our setting), for which we employed the algorithm by \cite{Owen2011}.

The regularization parameter $\lambda$ was chosen by five-fold cross-validation. Specifically, we treated the design points as if they were random, and randomly split the data $\mathcal D\define\{(t_1,Y_1),\ldots,(t_n,Y_n)\}$ into five even partitions $\mathcal D_1,\ldots,\mathcal D_5$. For a given value of $\lambda$, for each $k=1,\ldots,5$, the proposed estimation procedure was applied to $\mathcal D\backslash\mathcal D_k$ to obtain an estimator $\hat\mu_{-k}$. The cross-validation error for the given $\lambda$ was calculated by $\sum_{k=1}^5 \sum_{(t,Y)\in \mathcal D_k} d^2(\hat\mu_{-k}(t),Y)$, and the value of $\lambda$ minimizing  the cross-validation error was selected. The results are based on 100 Monte Carlo runs. 
The estimation quality of $\hat{\mu}$ is quantified by the root integrated squared error (RISE) 
\[
\mathrm{RISE}(\hat{\mu})=\left\{ \int_{\tdomain}d_{\manifold}^{2}(\hat{\mu}(t),\mu(t))\diffop t\right\} ^{1/2}.
\]
The results in Table \ref{tab:mean-function-quality} indicate  that as
sample size grows, the estimation error decreases in  both the favorable setting and the challenging setting. Moreover, we observe that the decay rate of the empirical RISE in the table, defined as the ratio of the RISE with $n=150$ and the RISE with $n=50$, is approximately 0.62. This  seems to agree quite  well with our theory in Section \ref{sec:Asymptotics} that suggests a rate of $(50/150)^{1/3}\approx 0.69$.

\begin{table}\vspace{0cm}
\caption{The mean functions for the metric spaces and settings considered in the simulation study,  where  $I_{3}$
is the $3\times3$ identity matrix,  $N(\nu,\sigma^{2})$ denotes the
Gaussian distribution with mean $\nu$ and variance $\sigma^{2}$, 
$\phi(t)=2(1+e^{-40(t-0.25)})^{-1}$ if $t\in[0,0.5)$ and $\phi(t)=2(1+e^{40(t-0.75)})^{-1}$
if $t\in[0.5,1]$, where the included figures depict the function $\phi$ that is continuous with rapid changes and is used in Setting II, and phylogenetic trees $T_1$, $T_2$ and $T_3$. The length of each edge of these trees is one.}
\label{tab:true-mean-function}
\begin{centering}
\begin{tabular}{|c|c|c|}
\hline 
 & Setting I & Setting II \tabularnewline
\hline 
\hline 
SPD & $\mu(t)=\begin{cases}
I_{3} & t\in[0,\frac{1}{3})\\
2I_{3} & t\in[\frac{1}{3},\frac{2}{3})\\
3I_{3} & t\in[\frac{2}{3},1]
\end{cases}$ &  $\mu(t)=\{1+\phi(t)\}I_{3}$ \tabularnewline
\hline 
Wasserstein & $\mu(t)=\begin{cases}
N(0,1) & t\in[0,\frac{1}{3})\\
N(1,1.5^{2}) & t\in[\frac{1}{3},\frac{2}{3})\\
N(2,2^{2}) & t\in[\frac{2}{3},1]
\end{cases}$ & $\mu(t)=N(\phi(t),\{1+\phi(t)\}^{2})$\tabularnewline
\hline
Tree & $\mu(t)=\begin{cases}
T_1 & t\in[0,\frac{1}{3})\\
T_2 & t\in[\frac{1}{3},\frac{2}{3})\\
T_3 & t\in[\frac{2}{3},1]
\end{cases}$ & $\mu(t)=\llbracket T_1,T_3\rrbracket_{\phi(t)/2}$ \tabularnewline
\hline 
\hline 
\multicolumn{3}{|c|}{\includegraphics[scale=0.75]{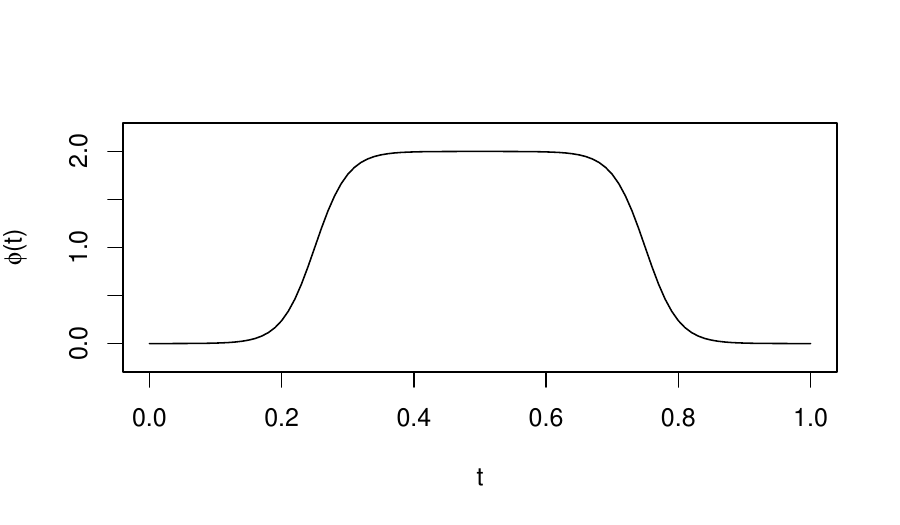}}\tabularnewline
\multicolumn{3}{|c|}{\includegraphics[scale=0.63]{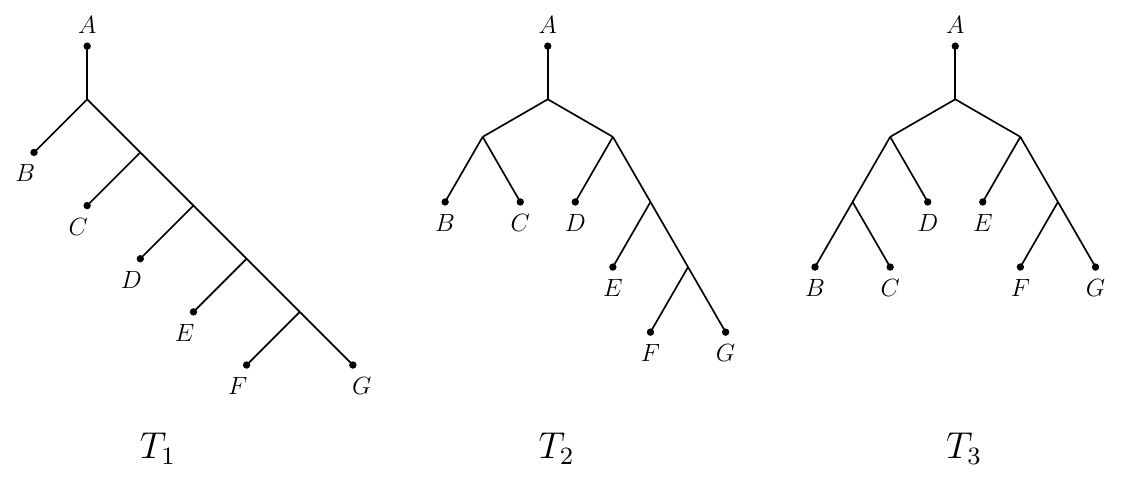}}\tabularnewline

\hline 
\end{tabular}
\par\end{centering}
\end{table}

\begin{table}
\caption{Simulation results for average Root Integrated Squared Error (RISE) of the total variation regularized
estimators for the fitted versus true functions for the two settings considered and random objects corresponding to symmetric positive definite (SPD) matrices, probability distributions with the Wasserstein metric, and phylogenetic trees.  The standard errors based on 100 Monte Carlo replicates
are given in parentheses.}
\label{tab:mean-function-quality}
\begin{centering}
\resizebox{\textwidth}{!}{
\begin{tabular}{|c|c|c|c|c|c|c|}
\hline 
\multirow{2}{*}{Setting} & \multicolumn{2}{c|}{SPD}
& \multicolumn{2}{c|}{Wasserstein} & \multicolumn{2}{c|}{Trees}
\tabularnewline
\cline{2-7} 
 & $n=50$ & $n=150$ & $n=50$ & $n=150$ & $n=50$ & $n=150$\tabularnewline
\hline 
\hline 
I & .210 (.057) & .124 (.042) & .516 (.127) & .321 (.064)& .294 (.116) & .209 (.083)\tabularnewline
\hline 
II & .256 (.054) & .164 (.041) & .604 (.141) & .372 (.073)& .368 (.131) & .235 (.097) \tabularnewline
\hline 
\end{tabular}}
\par\end{centering}
\end{table}

\section{Applications}\label{sec:application}
\subsection{Mortality\label{sec:mortality}}

We applied  the proposed method to analyze the evolution of the distributions
of age-at-death using mortality data from the Human Mortality Database at www.mortality.org.  The database contains yearly mortality for 37
countries, grouped by age from 0 to 110+. Specifically, the data provide a lifetable with a discretization by year, which can be easily converted into a 
histogram of age-at-death, one for each country and calendar year. Starting from these fine-grained histograms, a  simple smoothing step then leads to the density function of age-at-death for a given  
country and calendar year. We  focus on the adult (age 18 or more) mortality
densities of Russia and the calendar years from 1959 to 2014. The time-indexed densities of age-at-death are shown in the form of a heat map in 
Figure \ref{fig:russia-male}(a) for males and for females in Figure \ref{fig:russia-female}(a). The patterns of mortality for males and females are seen to differ substantially.

Applying  the proposed total variation regularized \F regression for distributions as random objects with the Wasserstein distance to these data,  we employ a fine grid on the interval  $[10^{-2.5},10^{-0.1}]$ and  use the aforementioned five-fold cross validation to select the regularization  parameter $\lambda$. The selected values are $\lambda=10^{-1.5}$ and $\lambda=10^{-1.7}$ for males and females,  and the resulting  estimates are shown in Figure \ref{fig:russia-male}(b) and Figure \ref{fig:russia-female}(b), respectively. 

This suggests  that the proposed total variation regularized \F estimator adapts well to the  smoothness of the target function. For example, the female mortality dynamics is seen to be relatively smooth, and the estimator accordingly is also quite smooth. In contrast,  male age-at-death distributions exhibit sharp shifts; the proposed estimator reflects this well and preserves the discontinuities in the mortality dynamics. This  demonstrates desirable flexibility of total variation regularized \F regression, as it  appropriately reflects relatively smooth trajectories, while at the same time preserving edges/boundaries when present. This flexibility  has been documented previously for the Euclidean case \citep{Strong2003}, and is shown here to extend to the much more complex case of metric-space valued data. 

Specifically, a major shift in mortality distributions occurred  around 1992 and is well represented in the estimates for both males and females, with a much larger shift for males. The direction of the shift was towards  increased mortality for both males and females, as the age-at-death distributions moved left, implying increased mortality at younger ages.  A weaker shift that occurred in 2008 is also captured by the estimator  for both males and females, and again is more expressed for males. This latter shift was towards decreased mortality.

These findings pinpoint a period from 1992--2008, during which the turmoil following the collapse of the Soviet Union 1988--1991 appears to have had devastating impacts on mortality. The strong shift in 1992 is relatively easy to explain with social ills such as increased alcoholism and joblessness that followed the collapse of the Soviet Union; it affected males more than females.

\begin{figure}[t]
	\begin{center}
		\includegraphics[scale=0.45]{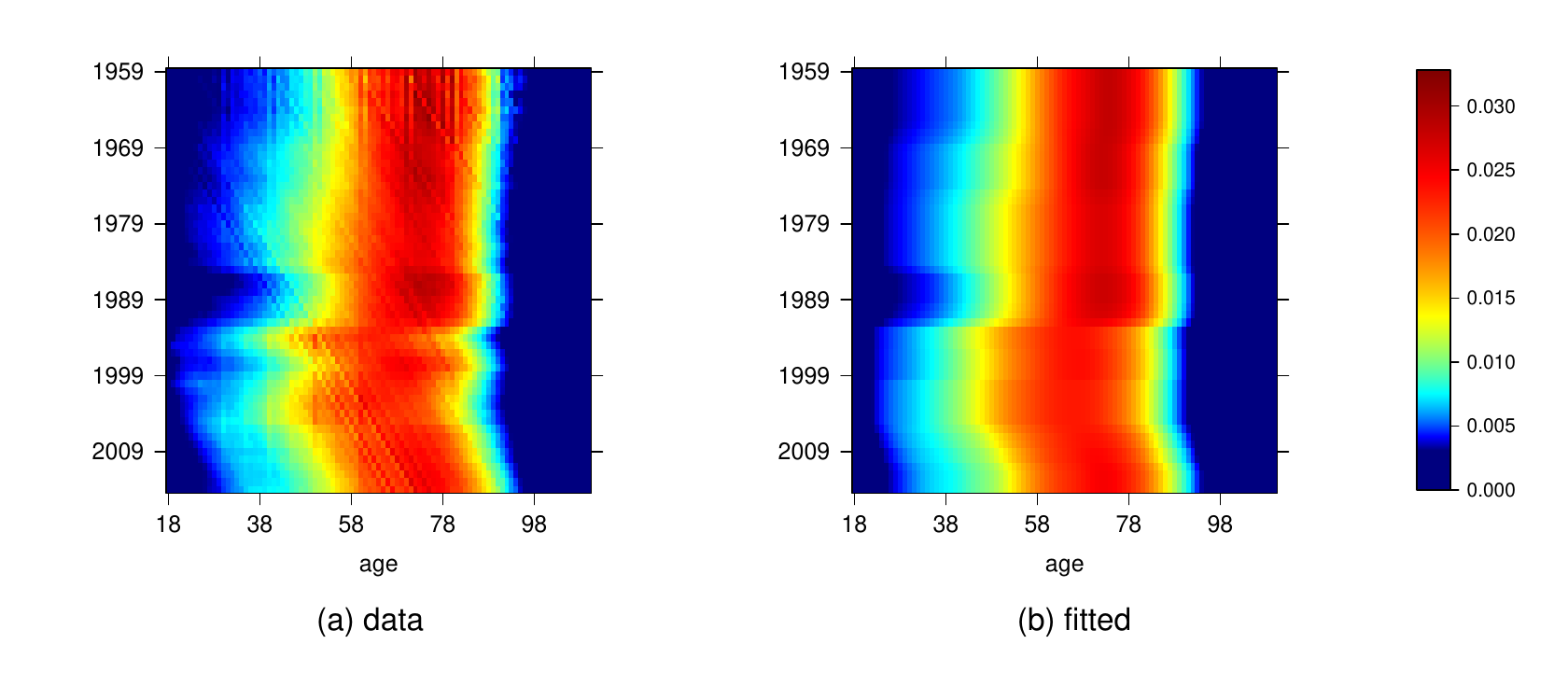}
	\end{center}
	\caption{Total variation regularized Fr\'echet regression for time-indexed mortality distributions of males in Russia, where panel (a) displays  the raw yearly mortality density functions, and panel  (b)  the fitted densities obtained with total variation regularization.} 
	\label{fig:russia-male}
\end{figure}
\begin{figure}[t]
	\begin{center}
		\includegraphics[scale=0.45]{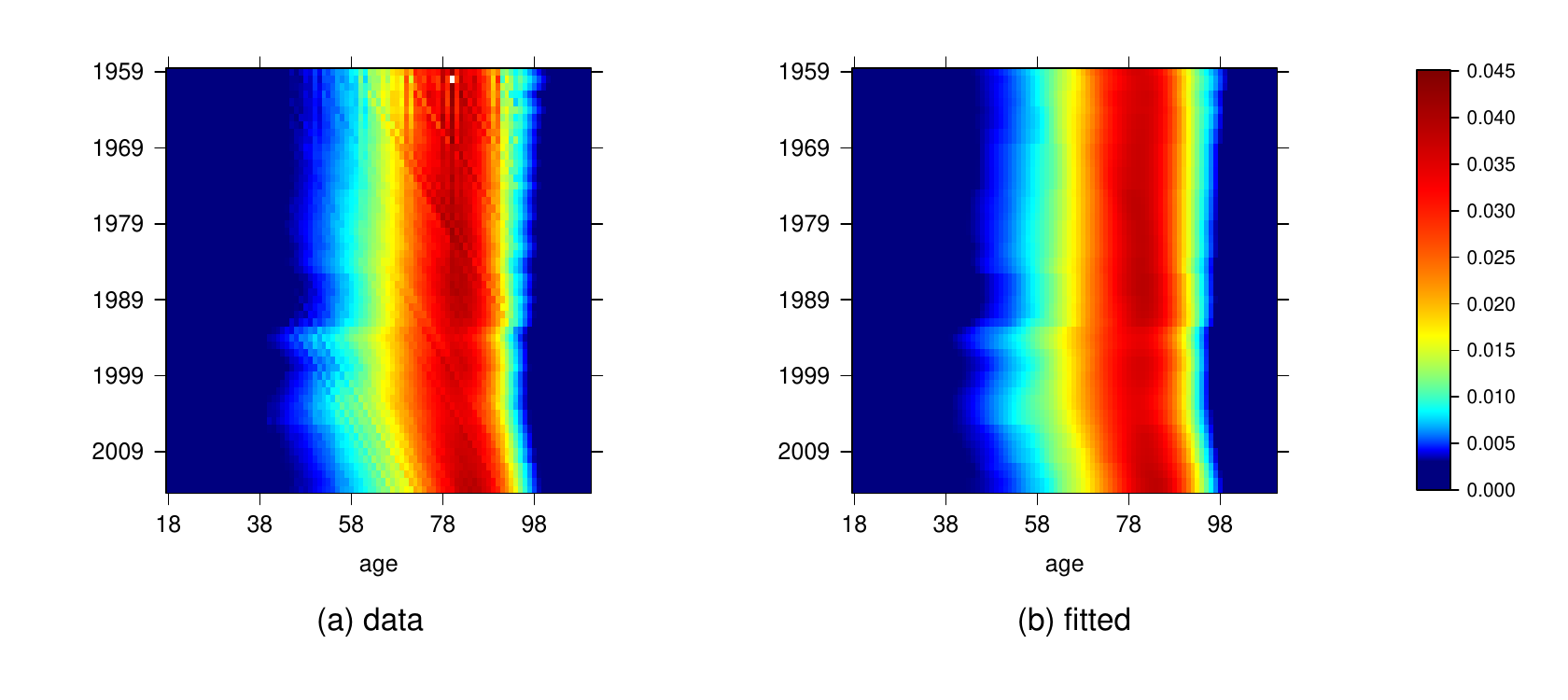}
	\end{center}
	\caption{Total variation regularized Fr\'echet regression for time-indexed  mortality distributions of females in Russia, where panels (a) and (b) are as in Figure~\ref{fig:russia-male}.}
	\label{fig:russia-female}
\end{figure}

\subsection{Functional Connectivity}\label{sec:functional-connectivity}
We applied the proposed total variation regularization method for random objects also to data on functional connectivity in the human brain  from the Human Connectome Project \citep{Essen2013} that were  collected between 2012 and 2015. Out of 970 subjects in the study, for $850$  subjects social cognition task related fMRI data are available.  In this study, each participant was sequentially presented with five short video clips while in a brain scanner, which recorded a fMRI  signal. Each clip showed squares, circles and triangles that either interacted in a certain way or moved randomly.
The fMRI signals were recorded at 274 time points spaced 0.72 seconds apart.  The starting times for the five video clips are approximately  at time points 11, 64, 117, 169 and 222, respectively, with ending times  approximately at time points 39, 92, 144, 197 and 250, respectively, so there are  overall 10  time points where the nature of the visual  input is  changing.  A natural question is then whether  changes in brain connectivity, as quantified by fMRI signals,  are associated with the above time points that indicate changes in visual input. To address this question, we estimated the changes through total variation regularized Fr\'echet regression without using knowledge about the video clip switch times. \linrev{As described in Appendix \ref{sec:app-detail},  we selected 8 brain regions and applied a preprocessing pipeline to obtain the observations $Y_i\in \spd[8]$ at each time point $t_i$, for $i=1,\ldots,n=243$, which are} depicted in  
Figure \ref{fig:fmri32}(a), where for illustration purposes each SPD matrix has been vectorized into an $8(8+1)/2=36$ (taking symmetry into account) dimensional vector represented by a row in the heat map, indicating the relative values of the vector elements. 

This SPD sequence is quite noisy and does not clearly indicate whether the mean brain connectivity changes in accordance with the transition points of the visual input as described above.
Thus, to gain insight whether the pattern of brain connectivity follows the pattern of visual inputs, it is necessary to denoise these data.
Assuming constant brain connectivity while the visual input is constant (video on or off),  this motivates  the fitting of  locally constant functions with a few knots for SPD random objects and thus the application of the proposed total variation regularized \F regression.
This is due to the fact that the proposed estimator $\hat{\mu}$  can be viewed as a locally constant function in time with adaptive knot placement, mapping time into metric space, in our case the space of SPD matrices.   

When applying total variation regularized \F regression,  one has to select the regularization parameter $\lambda$. Generally, we recommend to use the aforementioned cross-validation procedure. However, in  the particular application at hand,  since we may assume that the number of jumps (the discontinuous points of $\mu$) is known to be $J=10$, we  can simply choose the smallest value of $\lambda$ that yields $\hat{J}=10$ jumps of $\hat\mu$. Due to the choice of $P=16>11$ for computing $Y_i$ in Appendix \ref{sec:app-detail}, 
the sequence does not contain sufficient information about the start time point of the first video clip, which is $t=11$. Therefore, we target $J=9$ and choose the smallest value of $\lambda$ that yields $\hat{J}=9$.

Practically, we performed the proposed total variation regularization for the SPD case on the  sequence $Y_i$ for  different choices of the regularization parameter  $\lambda$ on a fine grid within the interval  $[0.01,0.02]$. Panels (b)--(i) in Figure \ref{fig:fmri32} display the resulting estimates by using the affine-invariant distance \citep{Moakher2005,penn:06}; results by using the Log-Euclidean distance \citep{Arsigny2007} are similar. For each panel, the  minimal value of the regularization parameter  $\lambda$  
was chosen so that the number of jump points ranged from 9 (smaller $\lambda$) to 2 (larger $\lambda$),  respectively. From Figure \ref{fig:fmri32}(b), where one has 9 jump points of $\hat\mu$, we find  that the detected jump points closely match the times when the videos clips started and ended, with the exception of time points 11 and 250, which is due to insufficient data between these first and last events and the respective boundaries, and the event at time point 197, which  is split into two jump points, at time points 181 and 202. As $\lambda$ increases, the number of jump points of the estimates decreases. Further discussion can be found in Appendix \ref{sec:app-detail}.

\begin{figure}
\begin{center}
\includegraphics[scale=0.485]{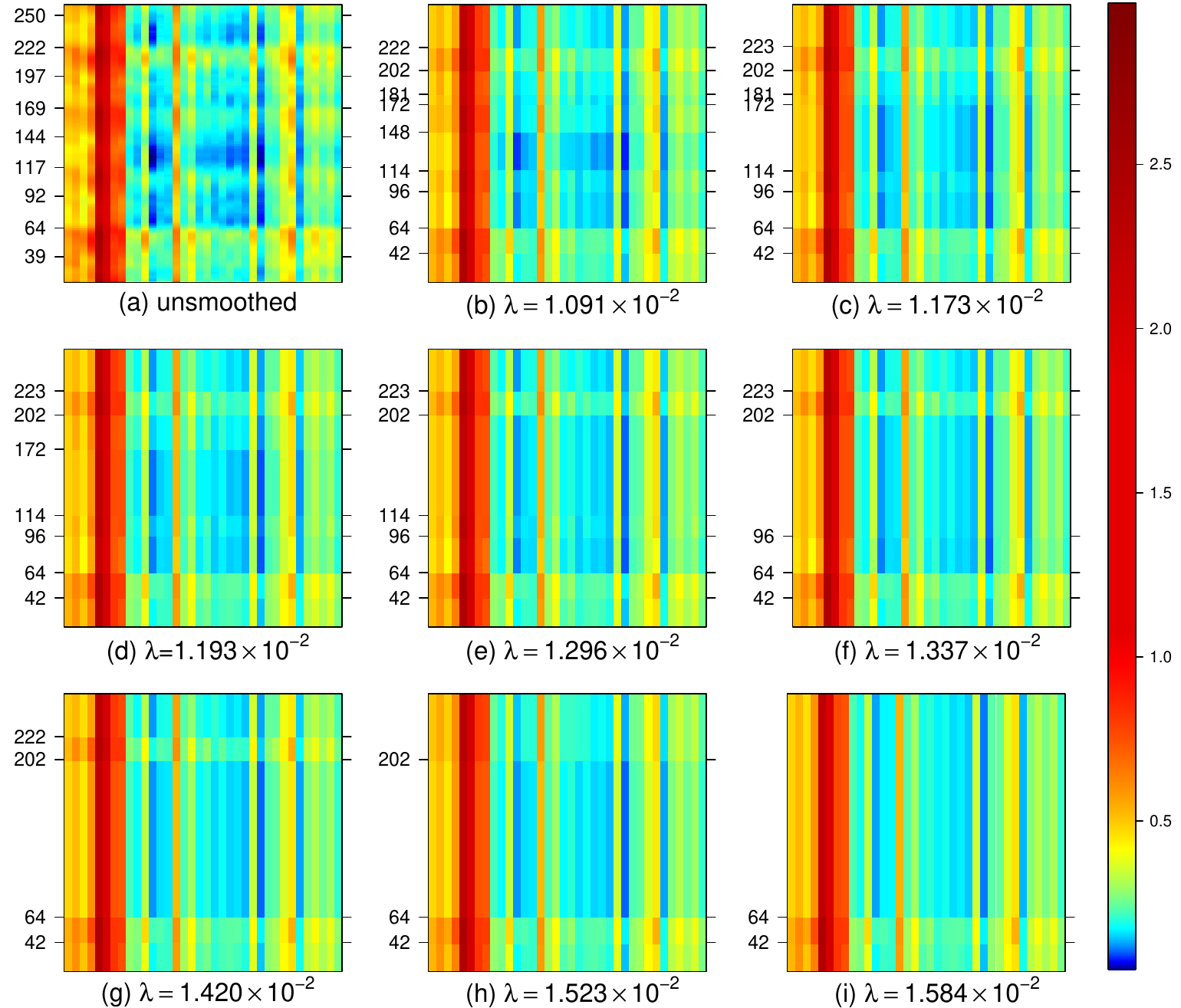}
\end{center}

\caption{Total variation regularized Fr\'echet regression for  dynamic functional connectivity derived from fMRI data. Time is on the vertical axis and the times where visual input changes are explicitly indicated by the tick labels in Panel (a). The lower triangular portions of the SPD covariance matrices of brain connectivity are shown in vectorized form along the horizontal axis. Panel (a) depicts the raw empirical functional connectivity and panels (b)--(i) depict fitted connectivity, obtained by applying the proposed total variation regularized 
\F regression.  From each panel to the next, the  regularization parameter is successively increased such that the number of jump points decreases by one.  For each of the panels (b)--(i), the tick labels on the left side  indicate the locations of the jump points of the fitted step function. In (b) and (c), labels for time points 172 and 181 are overlapping.}

\label{fig:fmri32}
\end{figure}

\section{Concluding remarks}\label{sec:conclusion}
The theoretical developments of the paper are  rooted in convexity of the Hadamard space, which  provides  key ingredients for establishing  the minimax convergence rate for the class of Hadamard spaces. 
The minimax convergence rate is achieved for the one-dimensional Euclidean space, which is a  special case of a Hadamard space. In light of the work \cite{Hotz2013} which shows that the sample Fr\'echet mean converges to its population counterpart at a rate faster than $n^{-1/2}$ in some negatively curved spaces, an interesting future topic is to investigate whether a convergence rate faster than $n^{-1/3}$ is possible for our estimator in some Hadamard spaces of strictly negative curvature.
The convexity also entails an extension to some subspaces of positively curved Alexandrov spaces, where  distance functions are strongly convex over the subspaces as per  condition  \ref{cond:A2:b}. However, a comprehensive treatment for the case of Alexandrov spaces is substantially more challenging, as seen  in Example \ref{exa:sphere} and the related discussion. This requires a theory beyond convexity and thus falls outside of the scope of this paper. 

Extensions to multivariate or manifold-valued domains  are also interesting and nontrivial. 
For multivariate domains, one promising direction is to extend the  
 Hardy--Krause total variation that is utilized by \cite{Fang2021} for multidimensional total variation regularization, since it  carries over to the multidimensional case most of the features of the one-dimensional case, e.g., it is given by the supremum over partitions.
Other interesting and important topics to explore in the future include finite risk bounds and sharp oracle inequalities for the estimated regression function under the setting of Section \ref{sec:Asymptotics}, complementing  the asymptotic theory developed in this paper; sharp bounds  are very challenging  in this setting due to the limited geometric and analytic structure that is available in  general metric spaces.

 Reviewers have  pointed out  that the  entropy condition \ref{cond:H:entropy} is local in nature, in the sense that the constant $K$ might depend on $R$ and it holds only for all $r\leq R$ for an arbitrary but fixed $R>0$ and that  if the entropy condition were global, i.e.,  $\log N(r\delta,\mathscr G_{\manifold}^r (r),d_{n})\leq K\delta^{-1}$ for all $r,\delta>0$,  the proof of Proposition \ref{LEM:KEY} could  be simplified by using a  strategy of \cite{vandeGeer2001}, where the constant  $C$ might  also vary with sample size $n$. It remains however unclear how such a global entropy condition can be verified for the class of general metric-space valued functions of bounded variation. Even for more specific metric spaces such as Riemannian manifolds,
Proposition \ref{prop:entropy-tv-ball} in the supplementary article  \citep{Lin2021}  
suggests that the curvature effect plays an important role in the metric entropy bound. More precise results are left  for future study.

An alternative way to allow  $C$ to vary with $n$ in Proposition \ref{LEM:KEY}, suggested by a reviewer, is to exploit  convexity as  in \cite{Chinot2020}, where one does not require a metric entropy condition. However, \cite{Chinot2020} and the related work \cite{Alquier2019} require  
the concept of Gaussian mean width to characterize complexity of the class of functions under consideration, which is indirectly connected to  metric entropy, e.g., via Sudakov's inequality \citep[Theorem 3.18,][]{Ledoux2011} and Dudley's inequality \citep[Theorem 11.17,][]{Ledoux2011}. Generalization of Gaussian mean width to metric-space valued functions and determining  its precise relation  with metric entropy is another challenging  and interesting topic for future exploration.

\begin{appendix}
	\section{Computational Details}\label{sec:computation}
	To compute
	the total variation regularized estimator defined in (\ref{eq:definition-regularized-estimator}),
	we adopt a simplified version of the cyclic proximal point algorithm
	proposed by \citet{Weinmann2013}. To find the step function estimator according to Proposition \ref{prop:characterization-estimator},
	noting that  $\TV(\hat{\mu})=\sum_{i=1}^{n-1}d(\hat{\mu}(t_{i}),\hat{\mu}(t_{i+1}))$, 
	it is sufficient to compute $\hat{\mu}_{i}\equiv\hat{\mu}(t_{i})$
	for $i=1,\ldots,n$. This is achieved by minimizing the function 
	\[
	\tilde{L}_{\lambda}(p_{1},\ldots,p_{n})=\frac{1}{2}\sum_{i=1}^{n}d^{2}(p_{i},Y_{i})+\frac{n\lambda}{2}\sum_{j=1}^{n-1}d(p_{j},p_{j+1})
	\]
	over the product space $\manifold^{n}$. For ${\bf p}=(p_{1},\ldots,p_{n})$ and 
	$G({\bf p})=\sum_{i=1}^{n}d^{2}(p_{i},Y_{i})$,  the  family of proximal mappings of $G$ is defined by 
	\[
	\mathrm{prox}_{\alpha G}{\bf p}=\underset{{\bf q}\in\manifold^{n}}{\arg\min}\left(\alpha G({\bf q})+\frac{n}{2}d_{n}^{2}({\bf p},{\bf q})\right),
	\]
	where $\alpha>0$ is  a parameter and $d_{n}^{2}({\bf p},{\bf q})=n^{-1}\sum_{i=1}^{n}d^{2}(p_{i},q_{i})$.
	It is easy to check that the $k$th component of $\mathrm{prox}_{\alpha G}{\bf p}$
	is $\llbracket p_{k},Y_{k}\rrbracket_{\theta}$ with $\theta=\alpha(1+\alpha)^{-1}$,
	where we recall that $\llbracket p,q\rrbracket_{\theta}$ denotes the point sitting 
	on the geodesic segment connecting $p$ and $q$ that satisfies  $d(p,\llbracket p,q\rrbracket_{\theta})=\theta d(p,q)$.
	
	For the proximal mappings of the function $H_{j}({\bf p})=d(p_{j},p_{j+1})$, given by \[
	\mathrm{prox}_{\alpha H_{j}}{\bf p}=\underset{{\bf q}\in\manifold^{n}}{\arg\min}\left(\alpha H_{j}({\bf q})+\frac{n}{2}d^{2}_n({\bf p},{\bf q})\right),
	\]
	one finds that  if $k\neq j,j+1$, then the $k$th component of $\mathrm{prox}_{\alpha H_{j}}{\bf p}$
	is equal to $p_{k}$. It is shown in \citet{Weinmann2013} that
	the $j$th component of $\mathrm{prox}_{\alpha H_{j}}{\bf p}$ is given by 
	$\llbracket p_{j},p_{j+1}\rrbracket_{\theta}$, while the  $(j+1)$th
	component is  $\llbracket p_{j+1},p_{j}\rrbracket_{\theta}$,
	where $\theta=\min\{\alpha/d(p_{j},p_{j+1}),1/2\}$ and that the algorithm converges
	to the minimizer of $\tilde{L}_{\lambda}$ 
	for Hadamard spaces.

	The computational
	details are summarized in Algorithm \ref{alg:1}, where the symbol
	$:=$ denotes the assignment or update operator, evaluating 
	the expression on the right hand side and then assigning the value to
	the variable on the left hand side. 
	\begin{algorithm}
		\begin{algorithmic}[1] 
			\Require{$\alpha_1,\alpha_2,\ldots$ such that $\sum_{k=1}^\infty \alpha_k^2<\infty$ and $\sum_{k=1}^\infty \alpha_k=\infty$} 
			\For{$i=1,\ldots,n$}
			\State{$\hat\mu_i := Y_i$}
			\EndFor
			\For{$r=1,2,\ldots$} 
			\For{$i=1,\ldots,n-1$}
			\State{$\theta:=\frac{\alpha_r}{1+\alpha_r}$} 
			\State{$\hat\mu_i:=\llbracket\hat\mu_{i},Y_{i}\rrbracket_{\theta}$} 
			\EndFor
			\For{$j=1,\ldots,n-1$}
			\State{$\theta:=\min\{\alpha_r\lambda n/\{2d(\hat\mu_j,\hat\mu_{j+1})\},1/2\}$}
			\State{$\hat\mu^\prime_j := \llbracket\hat\mu_{j},\hat\mu_{j+1}\rrbracket_{\theta}$ and $\hat\mu^\prime_{j+1} := \llbracket\hat\mu_{j+1},\hat\mu_{j}\rrbracket_{\theta}$}
			\State{$\hat\mu_j := \hat\mu_j^\prime$ and $\hat\mu_{j+1}:=\hat\mu_j^\prime$}
			\EndFor
			\EndFor 
			\Ensure{$\hat\mu(t_i)=\hat{\mu}_1,\ldots,\hat\mu(t_n)=\hat\mu$}
		\end{algorithmic}
		
		\caption{Cyclic Proximal Point Algorithm for Total Variation Regularized Fr\'echet
			Regression}
		\label{alg:1}
	\end{algorithm}

	\section{Key Steps and Geometry in the Proofs}\label{sec:o-lem-key}
	
	To prove Proposition~\ref{LEM:KEY}, we develop novel geometric arguments that make it feasible to extend arguments of  \cite{Mammen1997} to metric-space valued random objects. We first outline how  key arguments used in  \cite{Mammen1997} (rephrased in our context in terms of  language and notations) can be modified to connect  them to the core  ideas of our geometric constructions, and then provide a detailed  proof of the Proposition in the supplementary article \citep{Lin2021}. In this section, the referenced lemmas and equations with labels prefixed by ``S'' are described in \cite{Lin2021}.
	
	There are three key steps in the proofs of Theorem 9 and 10 of \cite{Mammen1997} that were deployed to study  the total variation regularized regression for the traditional situation where  $\manifold=\real$. Once these steps have been  identified and established, the rest of  the proof of \cite{Mammen1997} is  standard. However, these key steps were geared to the linear structure and analytic properties of $\real$, and there is no possibility to modify them for situations without Euclidean structure. To provide versions for  general Hadamard spaces is a serious challenge that we tackle in this paper.  To overcome the technical hurdles, we need to leverage the convexity of Hadamard spaces to obtain geometric versions of these key steps, as follows.
	
	The first key ingredient is the decomposition of $\mu$ into two orthogonal parts by projecting into a space of polynomials and its orthogonal complement. These two parts are handled separately. The complement part is uniformly bounded whenever $\TV(\mu)<C$. The problem is then transformed into estimating a uniformly bounded $\real$-valued function $\mu$ (here we reuse the symbol $\mu$ to conform to the notation  used in our proof) with $\TV(\mu)<C$ via total variation regularization. For  Hadamard spaces, such projections and the space of polynomials do not exist. To circumvent the difficulty, we introduce the concept of \emph{center} of an $\manifold$-valued function $g$, which can be characterized as a  \F integral \cp{mull:16:2,mull:19:1} 
	and is defined to be the minimizer of the function $F_g(p)=\int_{\tdomain} d^2(p,g(t))\diffop t$ over $\manifold$, if $F_g(p)<\infty$ for some $p\in\manifold$. Its discrete version, the center of $g$ at $t_1,\ldots,t_n$, is the minimizer of $F_{g,n}(p)=n^{-1}\sum_{i=1}^n d^2(p,g(t_i))$. Instead of projection, we show that the centers of $\hat\mu$ and $\mu$ at $t_1,\ldots,t_n$ are close to each other  in Lemma \ref{lem:mean-of-mean}. Consequently, we can restrict our focus on functions whose center is close to the center of $\mu$. This makes it possible to  bypass the decomposition of $\mu$ and $\hat\mu$. Note that $\mu(t_1),\ldots,\mu(t_n)$ themselves are the \emph{centers} of the observed data. The center of $\mu$ is then the center of these centers.
	
	A second key ingredient in \cite{Mammen1997} is the inequality $d_n^2(\mu,\hat{\mu})\leq \lambda \{\TV(\mu)-\TV(\hat \mu)\} + 2n^{-1}\sum_{i=1}^n\varepsilon_i\{\hat{\mu}(t_i)-\mu(t_i)\}$, where $\varepsilon_i=Y_i-\mu(t_i)$. In Hadamard spaces, neither the $\varepsilon_i$ nor the differences $\hat{\mu}(t_i)-\mu(t_i)$ or  the products $\varepsilon_i\{\hat{\mu}(t_i)-\mu(t_i)\}$ exist, as these notions are all intimately tied to an underlying Euclidean structure that is not present in metric spaces. To address this challenge, we first use the convexity condition \ref{lem:key:2} in Proposition~\ref{LEM:KEY} to obtain a similar inequality. \linrev{A key step is then to  replace the 
		products $\varepsilon_i\{\hat{\mu}(t_i)-\mu(t_i)\}$ with  $d(\mu(t_i),\hat\mu(t_i))d(\mu(t_i),Y_i)\cos\angle_{\mu(t_i)}(Y_i,\hat\mu(t_i))$, which we refer to as  Alexandrov inner product in this paper}, and to  replace the assumption of zero mean errors  $\mathbb E \varepsilon_i=0$ with the characterization of Fr\'echet means in \eqref{eq:frechet-mean-inequality}. These concepts have not been studied previously to the knowledge of the authors and are likely of more general interest.
	
	The third key ingredient is the observation that  the function $\hat \mu-\mu$, after being scaled by $\TV(\hat \mu)+C$, has total variation bounded by a constant, i.e., $\TV((\hat \mu-\mu)/(\TV(\hat \mu)+C))\leq 1$. This eventually enables one to use Lemma 3.5 of \cite{VandeGeer1990} for the function $(\hat \mu-\mu)/(\TV(\hat \mu)+C)$ in order to  bound the term $n^{-1}\sum_{i=1}^n\varepsilon_i\{\hat \mu(t_i)-\mu(t_i)\}$ by $d^{1/2}(\hat{\mu},\mu)(\TV(\hat \mu)+C)^{1/2}\Op(n^{-1/2})$. Then the rate of $\hat \mu$ can be derived by a standard argument that combines this with the inequality obtained for the second key ingredient. In our context, it is difficult to find a geometric counterpart of $\TV((\hat \mu-\mu)/(\TV(\hat \mu)+C))$ as this involves  subtraction and scaling of functions, which are not available in non-Euclidean spaces.  To overcome this hurdle, we propose the new idea of geodesic interpolation $\tilde g_\theta$ between two functions $\mu$ and $g$, defined by $\tilde g_\theta(t)=\llbracket \mu(t),g(t)\rrbracket_{\theta}$ for $\theta\in[0,1]$. Then the convexity \eqref{eq:pf-lem-C-1} suggests $d(\tilde{g}_\theta(s),\tilde{g}_\theta(t))\leq \theta d(g(s),g(t)) + (1-\theta)d(\mu(s),\mu(t))$ and further  $\TV(\tilde{g}_\theta)\leq \theta\TV(g)+(1-\theta)\TV(\mu)$. In other words, the total variation of the interpolated function $\tilde{g}_\theta$ is bounded by the convex combination of the total variations of $\mu$ and $g$. If we set $\theta=C/\{\TV(g)+C\}$, then $\TV(\tilde{g}_\theta)\leq 2C$ when $\TV(\mu)\leq C$. In particular, this  interpolation preserves the closeness of the centers, i.e., according to Lemma \ref{lem:F-F}, if the center of $g$ is close to $\mu$, then the center of $\tilde{g}_\theta$ is also close to $\mu$. Thus the  interpolation simultaneously mimics the subtraction and scaling of $\mathbb R$-valued functions. This is again a general principle that we expect to be useful  for other investigations where one requires a metric-space counterpart of a standardization procedure that involves function subtraction and scaling.
	
	We note that in order to establish the closeness of the centers in Lemma \ref{lem:mean-of-mean}, we first establish a sub-optimal rate for $\hat{\mu}$ in Lemma \ref{lem:key:rate-with-logn} using last two ideas in the above described key  ingredients. This is made possible by Lemma \ref{lem:max-hat-mu}, where  we use the sub-Gaussianity condition \ref{cond:H:subG} and convexity of the Hadamard space to show that, with probability tending to one, the image of $\hat\mu$ is encompassed by a ball centered at the center of $\mu$ with radius of the order $\log n$. 
	As the proof of Proposition~\ref{LEM:KEY} depends on Lemma \ref{lem:key:rate-with-logn} and  several other proofs  are similar, to avoid repetition, we provide details about the implementation of  the above described ideas mainly  in the proof of Lemma \ref{lem:key:rate-with-logn}, and 
	in the proof of Proposition~\ref{LEM:KEY} those additional  details that are genuinely different from those developed for the proof of Lemma~\ref{lem:key:rate-with-logn}.

\section{Further Details on the Application to Brain Connectivity}\label{sec:app-detail}

\noindent{\it Data Preprocessing}.  
We divided the brain into 68 regions of interest based on the ``Desikan--Killiany'' atlas \citep{DESIKAN2006} and picked eight possible regions that are related to
social skills, i.e., the left and right part of superior temporal, inferior parietal, temporal pole and precuneus \citep{GREEN2015}. The dynamics of functional connectivity for each subject is represented by the changing nature of the 
cross-covariance  between these eight regions, computed by a moving local window that includes $2P$ time points. Specifically, 
denoting by  $V_{ij}$ the vector of the BOLD (blood-oxygen-level dependent) fMRI signals of the $j$th subject at the $i$th time point,  the connectivity at $i=P+1,18,\ldots,274-P+1$ is computed by $$\Sigma_{ij}=\frac{1}{P}\sum_{k=i-P}^{i+P-1}(V_{kj}-\bar{V}_{ij})(V_{kj}-\bar{V}_{ij})^T\quad\text{with}\quad \bar{V}_{ij}=\frac{1}{P}\sum_{k=i-P}^{i+P-1}V_{kj}.$$

In a  last preprocessing step, we aggregated the  information at the same time point across all subjects by computing  $$Y_i=\underset{\Sigma\in\spd[8]}{\arg\min}\frac{1}{850}\sum_{j=1}^{850} d^2(\Sigma,\Sigma_{ij}),$$ where $d$ is the affine-invariant distance \citep{Moakher2005,penn:06} on $\spd[8]$.
	The sequence $Y_1,\ldots,Y_n$ then constituted the observed time-indexed random objects to be analyzed by the proposed regularized Fr\'echet regression.

We set $P=16$ and found that the results were not sensitive to the choice of $P$ within the reasonable range $[12,20]$.  
This led to a sequence of $n=243$ time-indexed  $8\times 8$ covariance (symmetric positive definite, SPD) matrices. For better numerical stability, each matrix was scaled by the constant $10^{-3}$. \vspace{.2cm}

\noindent  {\it Further discussion of the results.} In panel (i) of Figure \ref{fig:fmri32},  there are only two jump points left,   at time points 42 and 64. This suggests that changes in the fMRI signal caused by early events  are more pervasive than those at later events, which is also in line with the fact that the video transition at time point 197 gave rise to two estimated jump points, located slightly before and after.  These findings might be due to a stronger brain reaction to the stimulus  when the video clip is presented early on in the recording sequence, with  subsequent attenuation. 

This example demonstrates that changing the penalty can be used as a tool to determine a hierarchy of jump points with the more pronounced jump points persisting even when large penalties are applied. Remarkably, the location of the estimated jump points is hardly affected by the size of the penalty in this example.

\section{Auxiliary Results}\label{sec:aux}
\begin{prop}\label{prop:entropy}
	{Let $(\mathcal X,d)$ be a metric space that has a finite diameter and satisfies $\sup_{x\in \mathcal X} \log N(\epsilon\delta,B_x(\delta),d)\leq K\epsilon^{-\alpha}$ for constants $\alpha,K>0$ and for all $\epsilon,\delta>0$, where $B_x(\delta)$ denotes the ball in $\mathcal X$ centered at $x$ and with radius $\delta$.
		For a collection  $\mathscr{B}(L)$ of  Lipschitz continuous $\mathcal X$-valued functions defined on $\tdomain$ with a common Lipschitz constant $L<\infty$, it holds that  
		$$\log N(\delta,\mathscr B(L),d_n)\leq  6^\alpha K(2L\delta^{-1}+1)+ 4^\alpha KR^\alpha\delta^{-\alpha},$$ where $R$ denotes the diameter of $\mathcal X$.}
\end{prop}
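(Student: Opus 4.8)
The plan is to exhibit an explicit $\delta$-net of $\mathscr{B}(L)$ in the pseudo-metric $d_n$ consisting of step functions, using that $d_n(f,g)\le\sup_{t\in\tdomain}d(f(t),g(t))$ (immediate from the definition of $d_n$), so that it suffices to approximate uniformly. First I would discretize the domain: set $m=\lceil 2L\delta^{-1}\rceil$ and $s_j=j/m$ for $j=0,\dots,m$, so that $s_j-s_{j-1}=1/m\le\delta/(2L)$; consequently any $f\in\mathscr{B}(L)$ varies by at most $\delta/2$ on each $[s_{j-1},s_j]$. Hence a step function that takes a value within $\delta/4$ of $f(s_{j-1})$ on $[s_{j-1},s_j)$ stays within $3\delta/4$ of $f$ uniformly on that interval, and it is enough to net the tuple $(f(s_0),\dots,f(s_{m-1}))$ to within $\delta/4$ coordinatewise.

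The crucial device is a \emph{recursive} construction of the net that avoids an exponential blow-up in $m$: a naive product of global $(\delta/4)$-nets of $\mathcal X$ would have size $(\text{global count})^m$, far too large. Since $\mathcal X$ has diameter $R$, it lies inside a ball $B_{x_\ast}(R)$, so the hypothesis (ball radius $R$, $\epsilon=\delta/(4R)$) gives a $(\delta/4)$-net $\mathcal A$ of $\mathcal X$ with $\log|\mathcal A|\le K(4R/\delta)^\alpha=4^\alpha KR^\alpha\delta^{-\alpha}$; this accounts for the second term. For the later grid points, rather than netting $\mathcal X$ afresh, we observe that if $y_{j-1}$ approximates $f(s_{j-1})$ within $\delta/4$ then, by the triangle inequality and Lipschitz continuity, $d(f(s_j),y_{j-1})\le L(s_j-s_{j-1})+\delta/4\le 3\delta/4<3\delta/2$, so $f(s_j)\in B_{y_{j-1}}(3\delta/2)$; the hypothesis (ball radius $3\delta/2$, $\epsilon=1/6$) provides a $(\delta/4)$-net $\mathcal B_{y_{j-1}}$ of that ball with $\log|\mathcal B_{y_{j-1}}|\le 6^\alpha K$, \emph{uniformly} in the centre. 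The net $\mathscr G_\delta$ is then the set of all step functions obtained by choosing $y_0\in\mathcal A$, then $y_j\in\mathcal B_{y_{j-1}}$ for $j=1,\dots,m-1$, and putting $g\equiv y_{j-1}$ on $[s_{j-1},s_j)$ and $g(1)=y_{m-1}$.

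To finish, on the covering side one multiplies cardinalities: $\log|\mathscr G_\delta|\le\log|\mathcal A|+(m-1)\sup_y\log|\mathcal B_y|\le 4^\alpha KR^\alpha\delta^{-\alpha}+6^\alpha K(m-1)$, and since $m-1\le 2L\delta^{-1}$ this is at most $6^\alpha K(2L\delta^{-1}+1)+4^\alpha KR^\alpha\delta^{-\alpha}$, the claimed bound. On the approximation side, given $f\in\mathscr B(L)$ one selects $y_0,\dots,y_{m-1}$ by the recursion above so that $d(y_{j-1},f(s_{j-1}))\le\delta/4$ at every step (possible precisely because $f(s_j)$ was shown to lie in $B_{y_{j-1}}(3\delta/2)$); then for $t\in[s_{j-1},s_j)$, $d(f(t),g(t))=d(f(t),y_{j-1})\le d(f(t),f(s_{j-1}))+d(f(s_{j-1}),y_{j-1})\le L/m+\delta/4\le 3\delta/4$, and likewise at $t=1$, so $d_n(f,g)\le\sup_t d(f(t),g(t))\le 3\delta/4\le\delta$.

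I do not expect a genuine obstacle: this is a classical metric-entropy computation in the style of van der Vaart--Wellner. The single point needing care is the recursive net construction, which replaces an $m$-fold product of global nets by one global net (whence the diameter $R$ and the constant $4^\alpha$) followed by $m-1$ local nets of uniformly bounded cardinality (whence the factor $2L\delta^{-1}$ and the constant $6^\alpha$); this is exactly the role of the two uses of the covering hypothesis, and only a little bookkeeping of the constants $\delta/4$, $3\delta/2$, $1/6$ remains.
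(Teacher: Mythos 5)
Your construction is correct, and it is essentially the paper's own argument: discretize $[0,1]$ at mesh $\delta/(2L)$, cover the whole space once by $\delta/4$-balls (yielding the $4^\alpha K R^\alpha\delta^{-\alpha}$ term), and then net each successive grid value inside a ball of radius $3\delta/2$ around the previous net point with $\epsilon=1/6$ (yielding $6^\alpha K$ per grid point), multiplying cardinalities along the recursion. The bookkeeping of radii and the final bound match the stated constants (your count is in fact marginally tighter than the displayed $6^\alpha K(2L\delta^{-1}+1)$), so there is nothing further to fix.
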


\begin{prop}\label{prop:entropy-tv-ball}
	{Let $(\manifold,d)$ be a connected smooth Riemannian manifold, and $\Omega\subset\manifold$  a closed uniquely geodesic subspace of diameter $R>0$. Suppose that $\mathscr B\equiv\mathscr{B}(p,D_1,D_2)$ is a collection of $\Omega$-valued functions defined on $\tdomain$ such that $\sup_t d(g(t),p)\leq D_1$ and $\TV(g)\leq D_2$ for some $p\in\Omega$ and all $g\in \mathscr{B}$, where  $D_1,D_2$ are constants. Let $\kappa\geq 0$ be a  constant such that the sectional curvature of $\Omega$ falls into the interval $[-\kappa,\kappa]$.  Then for all $D_1\in(0,R]$ and $D_2>0$, 
		$$ \log N(\delta,\mathscr B,d_n) \leq k\{c_0 k^{1/2}(1+c_\kappa R^2)^2D_2\delta^{-1} + \log(D_1(1+c_\kappa R^2)k^{1/2}\delta^{-1})\},$$
		where $k$ is the dimension of $\manifold$, $c_0$ is an absolute constant and $c_\kappa$ is a constant depending only on $\kappa$.} 
\end{prop}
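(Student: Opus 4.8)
The plan is to transfer the problem to a Euclidean setting via the logarithm chart at $p$. Write $k=\dim\manifold$ and identify $\tangentspace{p}$ with $\real^{k}$. Since $\Omega$ is closed and uniquely geodesic, $\Log_p\colon\Omega\to\real^{k}$ is a well-defined injection whose image lies in the Euclidean ball of radius $R$. The first — and, I expect, the main — step is the bi-Lipschitz estimate: there is $\Lambda=\Lambda(\kappa,R)\le 1+c_\kappa R^{2}$ with
\[
\Lambda^{-1}\|\Log_p q_{1}-\Log_p q_{2}\|\le \dist(q_{1},q_{2})\le \Lambda\|\Log_p q_{1}-\Log_p q_{2}\|,\quad q_{1},q_{2}\in\Omega .
\]
The right inequality (i.e.\ $\Log_p$ is $\Lambda$-Lipschitz) comes from the lower curvature bound $\sec\ge-\kappa$, which via Jacobi-field/Rauch comparison controls how fast the differential of $\Exp_p$ can grow on a ball of radius $R$ (operator norm $\le \sinh(\sqrt\kappa R)/(\sqrt\kappa R)$); the left inequality (i.e.\ $\Exp_p$ is $\Lambda$-Lipschitz) comes from the upper bound $\sec\le\kappa$, which prevents the differential from degenerating ($\le \sqrt\kappa R/\sin(\sqrt\kappa R)$), both applied within the injectivity radius, which is what the uniquely-geodesic hypothesis on $\Omega$ supplies. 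The delicate points here are to check that unique geodesicity genuinely licenses the comparison (so that straight segments in $\real^{k}$ between points of $\Log_p\Omega$ correspond to the minimizing geodesics in $\Omega$) and to organize the $\kappa$- and $R$-dependence into the advertised polynomial form; everything downstream is routine.

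Granting this, for $g\in\mathscr B$ put $\tilde g=\Log_p\circ g\colon\tdomain\to\real^{k}$. Then $\|\tilde g(t)\|\le\Lambda D_{1}$ for all $t$, and $\TV(\tilde g)\le\Lambda\,\TV(g)\le\Lambda D_{2}$ because $\Log_p$ is $\Lambda$-Lipschitz; moreover, writing $\|\cdot\|_{n}$ for the empirical $L^{2}$ norm at $t_{1},\dots,t_{n}$, the left inequality above gives $d_{n}(g,h)\le\Lambda\|\tilde g-\tilde h\|_{n}$. Hence it suffices to bound $\log N(\delta/\Lambda,\widetilde{\mathscr B},\|\cdot\|_{n})$, where $\widetilde{\mathscr B}$ is the class of $\real^{k}$-valued functions on $\tdomain$ with supremum norm at most $\Lambda D_{1}$ and total variation at most $\Lambda D_{2}$.

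For $\widetilde{\mathscr B}$ I would argue coordinatewise. Each coordinate $\tilde g^{(j)}$ has total variation at most $\Lambda D_{2}$ and $|\tilde g^{(j)}(t_{1})|\le\Lambda D_{1}$; writing $\tilde g^{(j)}=\tilde g^{(j)}(t_{1})+h^{(j)}$ with $h^{(j)}(t_{1})=0$ one has $\|h^{(j)}\|_{\infty}\le\TV(h^{(j)})\le\Lambda D_{2}$, so by the classical metric-entropy bound for uniformly bounded monotone — hence bounded-variation — real functions (Birman–Solomjak type, as used by \citet{VandeGeer1990} and \citet{Mammen1997}) its supremum-norm entropy at scale $\eta$ is $O(\Lambda D_{2}/\eta)$, while the scalar $\tilde g^{(j)}(t_{1})$ contributes $O(\log(\Lambda D_{1}/\eta))$. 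Taking products over the $k$ coordinates at scale $\eta=\delta/(\Lambda\sqrt k)$ — so that a coordinatewise $\eta$-cover yields a $\delta/\Lambda$-cover in $\|\cdot\|_{n}\le\|\cdot\|_{\infty}$ on $\real^{k}$ — gives $\log N(\delta/\Lambda,\widetilde{\mathscr B},\|\cdot\|_{n})\le c_{0}k^{3/2}\Lambda^{2}D_{2}\delta^{-1}+k\log(c\,\Lambda^{2}D_{1}k^{1/2}\delta^{-1})$. Substituting $\Lambda\le 1+c_\kappa R^{2}$ and absorbing the harmless extra $\log(1+c_\kappa R^{2})$ into the first term yields the stated bound. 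To summarize, the only nontrivial ingredient is the Riemannian distortion estimate of the first step; the rest is the standard reduction of vector-valued bounded variation to scalar bounded variation together with the classical scalar entropy bound.
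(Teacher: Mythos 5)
Your overall route coincides with the paper's: pull the class back to $T_p\manifold$ via $\Log_p$, control the distortion of $\Exp_p/\Log_p$ through the two-sided curvature bound, and then apply a coordinatewise Birman--Solomjak/van de Geer entropy bound for scalar bounded-variation functions; the factor $(1+c_\kappa R^2)^2$ in the first term is exactly the product of the two Lipschitz constants, and the $k^{3/2}$ and $k\log(\cdot)$ bookkeeping comes out the same way.

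The step you defer, however, is the actual content of the proposition, and as sketched it does not go through. The Rauch constants you quote, $\sinh(\sqrt{\kappa}R)/(\sqrt{\kappa}R)$ and $\sqrt{\kappa}R/\sin(\sqrt{\kappa}R)$, do not "organize" into $1+c_\kappa R^2$ with $c_\kappa$ depending only on $\kappa$: the first grows exponentially in $\sqrt{\kappa}R$ and the second is finite only when $\sqrt{\kappa}R<\pi$, and unique geodesicity of $\Omega$ does not force $\sqrt{\kappa}R<\pi$ (a flat convex $\Omega$ of arbitrarily large diameter satisfies the curvature hypothesis for every $\kappa$). The polynomial form arises from the expansions $\sinh x/x=1+x^2/6+\cdots$ and $x/\sin x=1+x^2/6+\cdots$, so one must make explicit the regime of $\sqrt{\kappa}R$ in which these are used, or track the comparison functions themselves; that is where the lemma's work lies, not in routine bookkeeping. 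A second, related issue: your bi-Lipschitz argument bounds $d\Exp_p$ along Euclidean chords joining points of $\Log_p(\Omega)$, but those chords and their $\Exp_p$-images may leave $\Omega$, and the hypothesis bounds only the curvature of $\Omega$, not of $\manifold$. The comparison should be run intrinsically, e.g.\ via hinge comparison at $p$ (Toponogov for the lower curvature bound, a $\mathrm{CAT}(\kappa)$-type comparison for the upper bound) for geodesic triangles with vertices in $\Omega$, which stay in $\Omega$ by geodesic convexity. Finally, a small but real repair: do not absorb the extra $\log\Lambda$ caused by using the radius $\Lambda D_1$ into the first term (this fails when $D_2\delta^{-1}$ is small); since $\Log_p$ is a radial isometry, $\|\Log_p g(t)\|=d(g(t),p)\le D_1$ exactly, so the pulled-back class lies in the ball of radius $D_1$ and the logarithmic term matches the stated bound, with only a single power of $(1+c_\kappa R^2)$ inside the logarithm.
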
	
	
\end{appendix}

\section*{Acknowledgments}
{We extend our sincere thanks to the editor, associate editor and several referees  for their constructive comments that lead to numerous  improvements over a previous version.}
Data were provided in part by the Human Connectome Project, WU-Minn Consortium (PI: David Van Essen and Kamil Ugurbil; 1U54MH091657) funded by the 16 NIH Institutes and Centers that support the NIH Blueprint for Neuroscience Research; and by the McDonnell Center for Systems Neuroscience at Washington University.

\begin{supplement}
	\stitle{Supplement to ``Total Variation Regularized Fr\'echet Regression for Metric-Space Valued Data''}
	\slink[doi]{COMPLETED BY THE TYPESETTER}
	\sdatatype{.pdf}
	\sdescription{We provide proofs for propositions and theorems in Section \ref{sec:Asymptotics} and Appendix \ref{sec:aux}.}
\end{supplement}

\references

\includepdf[page=-]{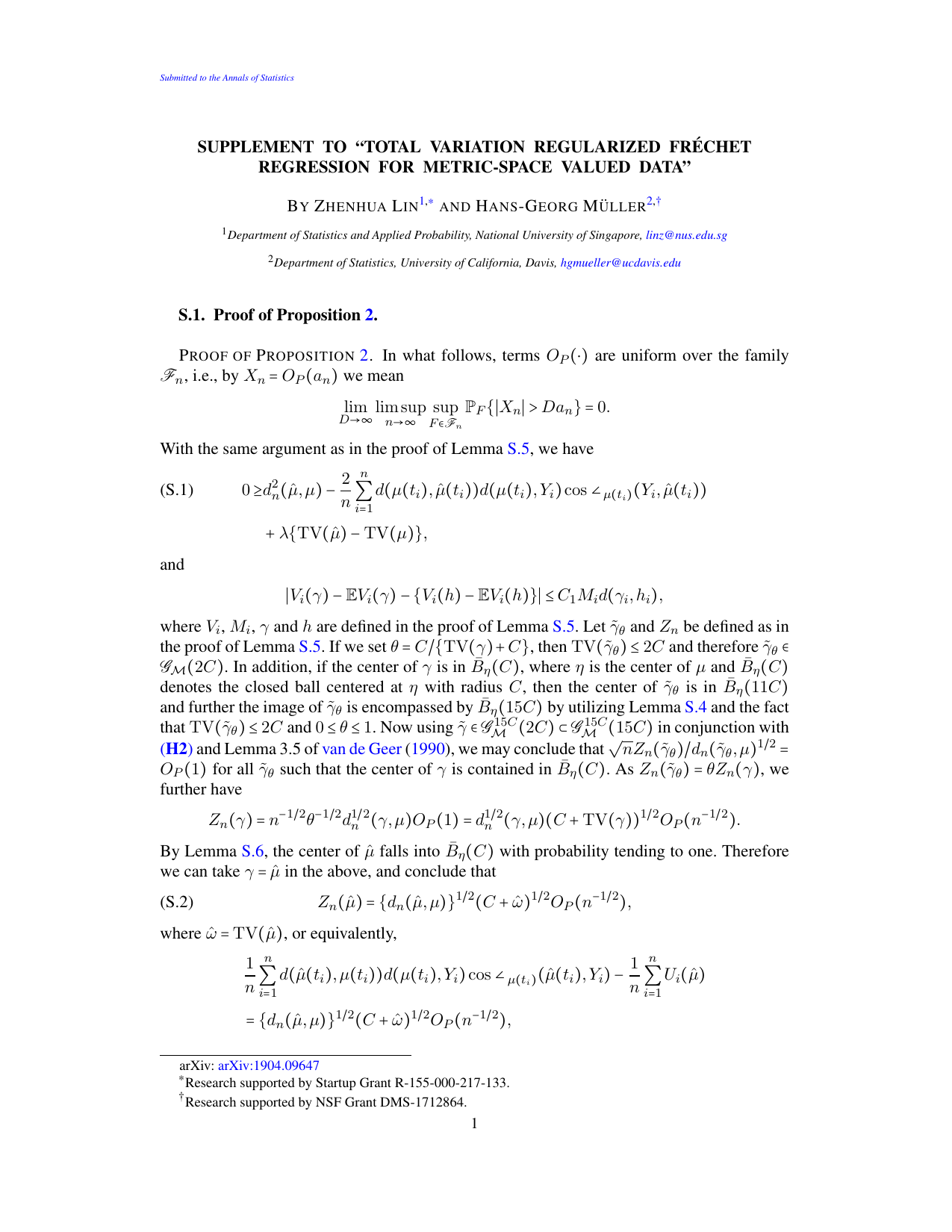}


\begin{thebibliography}{74}

\bibitem[\protect\citeauthoryear{Afsari}{2011}]{Afsari2011}
\begin{barticle}[author]
\bauthor{\bsnm{Afsari},~\bfnm{Bijan}\binits{B.}}
(\byear{2011}).
\btitle{{R}iemannian $L^p$ center of mass: Existence, uniqueness and
  convexity}.
\bjournal{Proceedings of the American Mathematical Society}
\bvolume{139}
\bpages{655--673}.
\end{barticle}
\endbibitem

\bibitem[\protect\citeauthoryear{Alquier, Cottet and
  Lecu\'e}{2019}]{Alquier2019}
\begin{barticle}[author]
\bauthor{\bsnm{Alquier},~\bfnm{Pierre}\binits{P.}},
  \bauthor{\bsnm{Cottet},~\bfnm{Vincent}\binits{V.}} \AND
  \bauthor{\bsnm{Lecu\'e},~\bfnm{Guillaume}\binits{G.}}
(\byear{2019}).
\btitle{Estimation bounds and sharp oracle inequalities of regularized
  procedures with Lipschitz loss functions}.
\bjournal{The Annals of Statistics}
\bvolume{47}
\bpages{2117--2144}.
\end{barticle}
\endbibitem

\bibitem[\protect\citeauthoryear{Arsigny et~al.}{2006}]{Arsigny2006}
\begin{barticle}[author]
\bauthor{\bsnm{Arsigny},~\bfnm{V.}\binits{V.}},
  \bauthor{\bsnm{Fillard},~\bfnm{P.}\binits{P.}},
  \bauthor{\bsnm{Pennec},~\bfnm{X.}\binits{X.}} \AND
  \bauthor{\bsnm{Ayache},~\bfnm{N.}\binits{N.}}
(\byear{2006}).
\btitle{Log-{E}uclidean metrics for fast and simple calculus on diffusion
  tensors}.
\bjournal{Magnetic Resonance in Medicine}
\bvolume{56}
\bpages{411--421}.
\end{barticle}
\endbibitem

\bibitem[\protect\citeauthoryear{Arsigny et~al.}{2007}]{Arsigny2007}
\begin{barticle}[author]
\bauthor{\bsnm{Arsigny},~\bfnm{Vincent}\binits{V.}},
  \bauthor{\bsnm{Fillard},~\bfnm{Pierre}\binits{P.}},
  \bauthor{\bsnm{Pennec},~\bfnm{Xavier}\binits{X.}} \AND
  \bauthor{\bsnm{Ayache},~\bfnm{Nicholas}\binits{N.}}
(\byear{2007}).
\btitle{Geometric means in a novel vector space structure on symmetric
  positive-definite matrices}.
\bjournal{SIAM Journal of Matrix Analysis and Applications}
\bvolume{29}
\bpages{328--347}.
\end{barticle}
\endbibitem

\bibitem[\protect\citeauthoryear{Ba\v{c}\'ak}{2015}]{Bacak2015}
\begin{barticle}[author]
\bauthor{\bsnm{Ba\v{c}\'ak},~\bfnm{Miroslav}\binits{M.}}
(\byear{2015}).
\btitle{Convergence of nonlinear semigroups under nonpositive curvature}.
\bjournal{Transactions of the American Mathematical Society}
\bvolume{367}
\bpages{3929--3953}.
\end{barticle}
\endbibitem

\bibitem[\protect\citeauthoryear{Bergmann and Weinmann}{2016}]{Bergmann2016}
\begin{barticle}[author]
\bauthor{\bsnm{Bergmann},~\bfnm{R.}\binits{R.}} \AND
  \bauthor{\bsnm{Weinmann},~\bfnm{A.}\binits{A.}}
(\byear{2016}).
\btitle{A second order {TV}-type approach for inpainting and denoising higher
  dimensional combined cyclic and vector space data}.
\bjournal{Journal of Mathematical Imaging and Vision}
\bvolume{55}
\bpages{401--427}.
\end{barticle}
\endbibitem

\bibitem[\protect\citeauthoryear{Bergmann et~al.}{2014}]{Bergmann2014}
\begin{barticle}[author]
\bauthor{\bsnm{Bergmann},~\bfnm{Ronny}\binits{R.}},
  \bauthor{\bsnm{Laus},~\bfnm{Friederike}\binits{F.}},
  \bauthor{\bsnm{Steidl},~\bfnm{Gabriele}\binits{G.}} \AND
  \bauthor{\bsnm{Weinmann},~\bfnm{Andreas}\binits{A.}}
(\byear{2014}).
\btitle{Second Order Differences of Cyclic Data and Applications in Variational
  Denoising}.
\bjournal{SIAM Journal on Imaging Sciences}
\bvolume{7}
\bpages{2916--2953}.
\end{barticle}
\endbibitem

\bibitem[\protect\citeauthoryear{Bhattacharya and
  Patrangenaru}{2003}]{Bhattacharya2003}
\begin{barticle}[author]
\bauthor{\bsnm{Bhattacharya},~\bfnm{Rabi}\binits{R.}} \AND
  \bauthor{\bsnm{Patrangenaru},~\bfnm{Vic}\binits{V.}}
(\byear{2003}).
\btitle{Large sample theory of intrisic sample means on manifolds {I}}.
\bjournal{Annals of Statistics}
\bvolume{31}
\bpages{1--29}.
\end{barticle}
\endbibitem

\bibitem[\protect\citeauthoryear{Billera, Holmes and
  Vogtmann}{2001}]{Billera2001}
\begin{barticle}[author]
\bauthor{\bsnm{Billera},~\bfnm{Louis~J.}\binits{L.~J.}},
  \bauthor{\bsnm{Holmes},~\bfnm{Susan~P.}\binits{S.~P.}} \AND
  \bauthor{\bsnm{Vogtmann},~\bfnm{Karen}\binits{K.}}
(\byear{2001}).
\btitle{Geometry of the Space of Phylogenetic Trees}.
\bjournal{Advances in Applied Mathematics}
\bvolume{27}
\bpages{733--767}.
\end{barticle}
\endbibitem

\bibitem[\protect\citeauthoryear{Bridson}{1991}]{Bridson1991}
\begin{bphdthesis}[author]
\bauthor{\bsnm{Bridson},~\bfnm{Martin~Robert}\binits{M.~R.}}
(\byear{1991}).
\btitle{Geodesics and curvature in metric simplicial complexes},
\btype{PhD thesis},
\bpublisher{Cornell University}.
\end{bphdthesis}
\endbibitem

\bibitem[\protect\citeauthoryear{Bridson and H\"{a}fliger}{1999}]{Bridson1999}
\begin{bbook}[author]
\bauthor{\bsnm{Bridson},~\bfnm{Martin~R.}\binits{M.~R.}} \AND
  \bauthor{\bsnm{H\"{a}fliger},~\bfnm{Andr\'{e}}\binits{A.}}
(\byear{1999}).
\btitle{Metric Spaces of Non-Positive Curvature}.
\bpublisher{Springer-Verlag}.
\end{bbook}
\endbibitem

\bibitem[\protect\citeauthoryear{Burago, Burago and Ivanov}{2001}]{Burago2001}
\begin{bbook}[author]
\bauthor{\bsnm{Burago},~\bfnm{Dmitri}\binits{D.}},
  \bauthor{\bsnm{Burago},~\bfnm{Yuri}\binits{Y.}} \AND
  \bauthor{\bsnm{Ivanov},~\bfnm{Sergei}\binits{S.}}
(\byear{2001}).
\btitle{A Course in Metric Geometry}.
\bpublisher{American Mathematical Society}, \baddress{Providence, RI}.
\end{bbook}
\endbibitem

\bibitem[\protect\citeauthoryear{Chambolle et~al.}{2010}]{Chambolle2010}
\begin{binbook}[author]
\bauthor{\bsnm{Chambolle},~\bfnm{Antonin}\binits{A.}},
  \bauthor{\bsnm{Caselles},~\bfnm{Vicent}\binits{V.}},
  \bauthor{\bsnm{Novaga},~\bfnm{Matteo}\binits{M.}},
  \bauthor{\bsnm{Cremers},~\bfnm{Daniel}\binits{D.}} \AND
  \bauthor{\bsnm{Pock},~\bfnm{Thomas}\binits{T.}}
(\byear{2010}).
\btitle{An introduction to total variation for image analysis}
\bchapter{In: Theoretical Foundations and Numerical Methods for Sparse
  Recovery},
\bpages{263--340}.
\bpublisher{Radon Series on Computational and Applied Mathematics vol. 9}.
\end{binbook}
\endbibitem

\bibitem[\protect\citeauthoryear{Chang}{1989}]{Chang1989}
\begin{barticle}[author]
\bauthor{\bsnm{Chang},~\bfnm{Ted}\binits{T.}}
(\byear{1989}).
\btitle{Spherical Regression with Errors in Variables}.
\bjournal{Annals of Statistics}
\bvolume{17}
\bpages{293--306}.
\bdoi{10.1214/aos/1176347017}
\end{barticle}
\endbibitem

\bibitem[\protect\citeauthoryear{Chinot, Lecu\'e and
  Lerasle}{2020}]{Chinot2020}
\begin{barticle}[author]
\bauthor{\bsnm{Chinot},~\bfnm{Geoffrey}\binits{G.}},
  \bauthor{\bsnm{Lecu\'e},~\bfnm{Guillaume}\binits{G.}} \AND
  \bauthor{\bsnm{Lerasle},~\bfnm{Matthieu}\binits{M.}}
(\byear{2020}).
\btitle{Robust statistical learning with Lipschitz and convex loss functions}.
\bjournal{Probability Theory and Related Fields}
\bvolume{176}
\bpages{897--940}.
\end{barticle}
\endbibitem

\bibitem[\protect\citeauthoryear{Cornea et~al.}{2017}]{Cornea2017}
\begin{barticle}[author]
\bauthor{\bsnm{Cornea},~\bfnm{Emil}\binits{E.}},
  \bauthor{\bsnm{Zhu},~\bfnm{Hongtu}\binits{H.}},
  \bauthor{\bsnm{Kim},~\bfnm{Peter}\binits{P.}} \AND
  \bauthor{\bsnm{Ibrahim},~\bfnm{Joseph~G.}\binits{J.~G.}}
(\byear{2017}).
\btitle{Regression models on {R}iemannian symmetric spaces}.
\bjournal{Journal of the Royal Statistical Society: Series B (Statistical
  Methodology)}
\bvolume{79}
\bpages{463--482}.
\bdoi{10.1111/rssb.12169}
\end{barticle}
\endbibitem

\bibitem[\protect\citeauthoryear{Dai and M{\"{u}}ller}{2018}]{dai:17:1}
\begin{barticle}[author]
\bauthor{\bsnm{Dai},~\bfnm{Xiongtao}\binits{X.}} \AND
  \bauthor{\bsnm{M{\"{u}}ller},~\bfnm{Hans-Georg}\binits{H.-G.}}
(\byear{2018}).
\btitle{Principal Component Analysis for functional data on {R}iemannian
  manifolds and spheres}.
\bjournal{Annals of Statistics}
\bvolume{46}
\bpages{3334--3361}.
\end{barticle}
\endbibitem

\bibitem[\protect\citeauthoryear{Davis et~al.}{2010}]{Davis2010}
\begin{barticle}[author]
\bauthor{\bsnm{Davis},~\bfnm{Brad~C.}\binits{B.~C.}},
  \bauthor{\bsnm{Fletcher},~\bfnm{P.~Thomas}\binits{P.~T.}},
  \bauthor{\bsnm{Bullitt},~\bfnm{Elizabeth}\binits{E.}} \AND
  \bauthor{\bsnm{Joshi},~\bfnm{Sarang}\binits{S.}}
(\byear{2010}).
\btitle{Population shape regression from random design data}.
\bjournal{International Journal of Computer Vision}
\bvolume{90}
\bpages{255--266}.
\bdoi{10.1007/s11263-010-0367-1}
\end{barticle}
\endbibitem

\bibitem[\protect\citeauthoryear{Desikan et~al.}{2006}]{DESIKAN2006}
\begin{barticle}[author]
\bauthor{\bsnm{Desikan},~\bfnm{R.~S.}\binits{R.~S.}},
  \bauthor{\bsnm{S\'egonne},~\bfnm{F.}\binits{F.}},
  \bauthor{\bsnm{Fischl},~\bfnm{B.}\binits{B.}},
  \bauthor{\bsnm{Quinn},~\bfnm{B.~T.}\binits{B.~T.}},
  \bauthor{\bsnm{Dickerson},~\bfnm{B.~C.}\binits{B.~C.}},
  \bauthor{\bsnm{Blacker},~\bfnm{D.}\binits{D.}},
  \bauthor{\bsnm{Buckner},~\bfnm{R.~L.}\binits{R.~L.}},
  \bauthor{\bsnm{Dale},~\bfnm{A.~M.}\binits{A.~M.}},
  \bauthor{\bsnm{Maguire},~\bfnm{R.~P.}\binits{R.~P.}} \AND
  \bauthor{\bsnm{Hyman},~\bfnm{B.~T.}\binits{B.~T.}}
(\byear{2006}).
\btitle{An automated labeling system for subdividing the human cerebral cortex
  on {MRI} scans into gyral based regions of interest}.
\bjournal{NeuroImage}
\bvolume{31}
\bpages{968--980}.
\end{barticle}
\endbibitem

\bibitem[\protect\citeauthoryear{Donoho and Johnstone}{1998}]{Donoho1998}
\begin{barticle}[author]
\bauthor{\bsnm{Donoho},~\bfnm{David}\binits{D.}} \AND
  \bauthor{\bsnm{Johnstone},~\bfnm{Iain}\binits{I.}}
(\byear{1998}).
\btitle{Minimax estimation via wavelet shrinkage}.
\bjournal{The Annals of Statistics}
\bvolume{26}
\bpages{879--921}.
\end{barticle}
\endbibitem

\bibitem[\protect\citeauthoryear{Dryden, Koloydenko and
  Zhou}{2009}]{Dryden2009}
\begin{barticle}[author]
\bauthor{\bsnm{Dryden},~\bfnm{Ian~L.}\binits{I.~L.}},
  \bauthor{\bsnm{Koloydenko},~\bfnm{Alexey}\binits{A.}} \AND
  \bauthor{\bsnm{Zhou},~\bfnm{Diwei}\binits{D.}}
(\byear{2009}).
\btitle{Non-{E}uclidean statistics for covariance matrices, with applications
  to diffusion tensor imaging}.
\bjournal{The Annals of Applied Statistics}
\bvolume{3}
\bpages{1102--1123}.
\end{barticle}
\endbibitem

\bibitem[\protect\citeauthoryear{Dubey and M\"uller}{2020a}]{mull:20:2}
\begin{barticle}[author]
\bauthor{\bsnm{Dubey},~\bfnm{Paromita}\binits{P.}} \AND
  \bauthor{\bsnm{M\"uller},~\bfnm{Hans-Georg}\binits{H.-G.}}
(\byear{2020}a).
\btitle{Fr\'echet change-point detection}.
\bjournal{Annals of Statistics}
\bvolume{48}
\bpages{3312--3335}.
\end{barticle}
\endbibitem

\bibitem[\protect\citeauthoryear{Dubey and M{\"u}ller}{2020b}]{mull:19:1}
\begin{barticle}[author]
\bauthor{\bsnm{Dubey},~\bfnm{Paromita}\binits{P.}} \AND
  \bauthor{\bsnm{M{\"u}ller},~\bfnm{Hans-Georg}\binits{H.-G.}}
(\byear{2020}b).
\btitle{Functional models for time-varying random objects}.
\bjournal{Journal of the Royal Statistical Society B (Discussion Paper)}
\bvolume{82}
\bpages{275--327}.
\end{barticle}
\endbibitem

\bibitem[\protect\citeauthoryear{Eltzner and Huckemann}{2019}]{Eltzner2019}
\begin{barticle}[author]
\bauthor{\bsnm{Eltzner},~\bfnm{Benjamin}\binits{B.}} \AND
  \bauthor{\bsnm{Huckemann},~\bfnm{Stephan~F.}\binits{S.~F.}}
(\byear{2019}).
\btitle{A smeary central limit theorem for manifolds with application to
  high-dimensional spheres}.
\bjournal{The Annals of Statistics}
\bvolume{47}
\bpages{3360--3381}.
\end{barticle}
\endbibitem

\bibitem[\protect\citeauthoryear{Essen et~al.}{2013}]{Essen2013}
\begin{barticle}[author]
\bauthor{\bsnm{Essen},~\bfnm{David C.~Van}\binits{D.~C.~V.}},
  \bauthor{\bsnm{Smith},~\bfnm{Stephen~M.}\binits{S.~M.}},
  \bauthor{\bsnm{Barch},~\bfnm{Deanna~M.}\binits{D.~M.}},
  \bauthor{\bsnm{Behrens},~\bfnm{Timothy E.~J.}\binits{T.~E.~J.}},
  \bauthor{\bsnm{Yacoub},~\bfnm{Essa}\binits{E.}},
  \bauthor{\bsnm{Ugurbil},~\bfnm{Kamil}\binits{K.}} \AND
  \bauthor{\bsnm{{Wu-Minn HCP Consortium}}}
(\byear{2013}).
\btitle{The {Wu}-{M}inn Human Connectome Project: An overview}.
\bjournal{NeuroImage}
\bvolume{80}
\bpages{62--79}.
\end{barticle}
\endbibitem

\bibitem[\protect\citeauthoryear{Fang et~al.}{2021}]{Fang2021}
\begin{barticle}[author]
\bauthor{\bsnm{Fang},~\bfnm{Billy}\binits{B.}},
  \bauthor{\bsnm{Guntuboyina},~\bfnm{Adityanand}\binits{A.}}, \bauthor{} \AND
  \bauthor{\bsnm{Sen},~\bfnm{Bodhisattva}\binits{B.}}
(\byear{2021}).
\btitle{Multivariate extensions of isotonic regression and total variation
  denoising via entire monotonicity and {H}ardy--{K}rause variation}.
\bjournal{The Annals of Statistics}
\bpages{to appear}.
\end{barticle}
\endbibitem

\bibitem[\protect\citeauthoryear{Faraway}{2014}]{Faraway2014}
\begin{barticle}[author]
\bauthor{\bsnm{Faraway},~\bfnm{Julian~J.}\binits{J.~J.}}
(\byear{2014}).
\btitle{Regression for non-{E}uclidean data using distance matrices}.
\bjournal{Journal of Applied Statistics}
\bvolume{41}
\bpages{2342--2357}.
\bdoi{10.1080/02664763.2014.909794}
\end{barticle}
\endbibitem

\bibitem[\protect\citeauthoryear{Fillard et~al.}{2005}]{Fillard2005}
\begin{binproceedings}[author]
\bauthor{\bsnm{Fillard},~\bfnm{Pierre}\binits{P.}},
  \bauthor{\bsnm{Arsigny},~\bfnm{Vincent}\binits{V.}},
  \bauthor{\bsnm{Ayache},~\bfnm{Nicholas}\binits{N.}} \AND
  \bauthor{\bsnm{Pennec},~\bfnm{Xavier}\binits{X.}}
(\byear{2005}).
\btitle{A {R}iemannian framework for the processing of tensor-valued images}
In \bbooktitle{International Workshop on Deep Structure, Singularities, and
  Computer Vision}
\bpages{112--123}.
\end{binproceedings}
\endbibitem

\bibitem[\protect\citeauthoryear{Fisher}{1995}]{Fisher1995}
\begin{bbook}[author]
\bauthor{\bsnm{Fisher},~\bfnm{N.~I.}\binits{N.~I.}}
(\byear{1995}).
\btitle{Statistical Analysis of Circular Data}.
\bpublisher{Cambridge University Press}.
\end{bbook}
\endbibitem

\bibitem[\protect\citeauthoryear{Fletcher}{2013}]{Fletcher2013}
\begin{barticle}[author]
\bauthor{\bsnm{Fletcher},~\bfnm{P.~Thomas}\binits{P.~T.}}
(\byear{2013}).
\btitle{Geodesic regression and the theory of least squares on {R}iemannian
  manifolds}.
\bjournal{International Journal of Computer Vision}
\bvolume{105}
\bpages{171--185}.
\bdoi{10.1007/s11263-012-0591-y}
\end{barticle}
\endbibitem

\bibitem[\protect\citeauthoryear{Fletcher and Joshi}{2007}]{Fletcher2007}
\begin{barticle}[author]
\bauthor{\bsnm{Fletcher},~\bfnm{T.}\binits{T.}} \AND
  \bauthor{\bsnm{Joshi},~\bfnm{S.}\binits{S.}}
(\byear{2007}).
\btitle{Riemannian Geometry for the Statistical Analysis of Diffusion Tensor
  Data}.
\bjournal{Signal Processing}
\bvolume{87}
\bpages{250--262}.
\end{barticle}
\endbibitem

\bibitem[\protect\citeauthoryear{Friston}{2011}]{Friston2011}
\begin{barticle}[author]
\bauthor{\bsnm{Friston},~\bfnm{Karl~J.}\binits{K.~J.}}
(\byear{2011}).
\btitle{Functional and effective connectivity: a review}.
\bjournal{Brain Connectivity}
\bvolume{1}
\bpages{13--36}.
\end{barticle}
\endbibitem

\bibitem[\protect\citeauthoryear{Green, Horan and Lee}{2015}]{GREEN2015}
\begin{barticle}[author]
\bauthor{\bsnm{Green},~\bfnm{M.~F.}\binits{M.~F.}},
  \bauthor{\bsnm{Horan},~\bfnm{W.~P.}\binits{W.~P.}} \AND
  \bauthor{\bsnm{Lee},~\bfnm{J.}\binits{J.}}
(\byear{2015}).
\btitle{Social cognition in schizophrenia}.
\bjournal{Nature Reviews Neuroscience}
\bvolume{16}
\bpages{620--631}.
\end{barticle}
\endbibitem

\bibitem[\protect\citeauthoryear{Hein}{2009}]{Hein2009}
\begin{binproceedings}[author]
\bauthor{\bsnm{Hein},~\bfnm{Matthias}\binits{M.}}
(\byear{2009}).
\btitle{Robust nonparametric regression with metric-space valued output}.
In \bbooktitle{Advances in Neural Information Processing Systems}
\bpages{718--726}.
\end{binproceedings}
\endbibitem

\bibitem[\protect\citeauthoryear{Hinkle, Fletcher and Joshi}{2014}]{Hinkle2014}
\begin{barticle}[author]
\bauthor{\bsnm{Hinkle},~\bfnm{Jacob}\binits{J.}},
  \bauthor{\bsnm{Fletcher},~\bfnm{P.~Thomas}\binits{P.~T.}} \AND
  \bauthor{\bsnm{Joshi},~\bfnm{Sarang}\binits{S.}}
(\byear{2014}).
\btitle{Intrinsic polynomials for regression on {R}iemannian manifolds}.
\bjournal{Journal of Mathematical Imaging and Vision}
\bvolume{50}
\bpages{32--52}.
\bdoi{10.1007/s10851-013-0489-5}
\end{barticle}
\endbibitem

\bibitem[\protect\citeauthoryear{Hotz et~al.}{2013}]{Hotz2013}
\begin{barticle}[author]
\bauthor{\bsnm{Hotz},~\bfnm{Thomas}\binits{T.}},
  \bauthor{\bsnm{Huckemann},~\bfnm{Stephan}\binits{S.}},
  \bauthor{\bsnm{Le},~\bfnm{Huiling}\binits{H.}},
  \bauthor{\bsnm{Marron},~\bfnm{J.~S.}\binits{J.~S.}},
  \bauthor{\bsnm{Mattingly},~\bfnm{Jonathan~C.}\binits{J.~C.}},
  \bauthor{\bsnm{Miller},~\bfnm{Ezra}\binits{E.}},
  \bauthor{\bsnm{Nolen},~\bfnm{James}\binits{J.}},
  \bauthor{\bsnm{Owen},~\bfnm{Megan}\binits{M.}},
  \bauthor{\bsnm{Patrangenaru},~\bfnm{Vic}\binits{V.}} \AND
  \bauthor{\bsnm{Skwerer},~\bfnm{Sean}\binits{S.}}
(\byear{2013}).
\btitle{Sticky central limit theorems on open books}.
\bjournal{The Annals of Applied Probability}
\bvolume{23}
\bpages{2238--2258}.
\end{barticle}
\endbibitem

\bibitem[\protect\citeauthoryear{H\"utter and Rigollet}{2016}]{Huetter2016}
\begin{binproceedings}[author]
\bauthor{\bsnm{H\"utter},~\bfnm{Jan-Christian}\binits{J.-C.}} \AND
  \bauthor{\bsnm{Rigollet},~\bfnm{Philippe}\binits{P.}}
(\byear{2016}).
\btitle{Optimal rates for total variation denoising}.
In \bbooktitle{29th Annual Conference on Learning Theory}
(\beditor{\bfnm{Vitaly}\binits{V.}~\bsnm{Feldman}},
  \beditor{\bfnm{Alexander}\binits{A.}~\bsnm{Rakhlin}} \AND
  \beditor{\bfnm{Ohad}\binits{O.}~\bsnm{Shamir}}, eds.).
\bseries{Proceedings of Machine Learning Research}
\bvolume{49}
\bpages{1115--1146}.
\bpublisher{PMLR}, \baddress{Columbia University, New York, New York, USA}.
\end{binproceedings}
\endbibitem

\bibitem[\protect\citeauthoryear{Kim et~al.}{2009}]{Kim2009}
\begin{barticle}[author]
\bauthor{\bsnm{Kim},~\bfnm{Seung-Jean}\binits{S.-J.}},
  \bauthor{\bsnm{Koh},~\bfnm{Kwangmoo}\binits{K.}},
  \bauthor{\bsnm{Boyd},~\bfnm{Stephen}\binits{S.}} \AND
  \bauthor{\bsnm{Gorinevsky},~\bfnm{Dimitry}\binits{D.}}
(\byear{2009}).
\btitle{{$\ell_1$} trend filtering}.
\bjournal{SIAM Review}
\bvolume{51}
\bpages{339--360}.
\end{barticle}
\endbibitem

\bibitem[\protect\citeauthoryear{Kloeckner}{2010}]{Kloeckner2010}
\begin{barticle}[author]
\bauthor{\bsnm{Kloeckner},~\bfnm{Beno??it}\binits{B.}}
(\byear{2010}).
\btitle{A geometric study of {W}asserstein spaces: {E}uclidean spaces}.
\bjournal{Ann. Scuola Norm. Sup. Pisa Cl. Sci}
\bvolume{IX}
\bpages{297--323}.
\end{barticle}
\endbibitem

\bibitem[\protect\citeauthoryear{Kolar and Xing}{2012}]{kola:12}
\begin{barticle}[author]
\bauthor{\bsnm{Kolar},~\bfnm{Mladen}\binits{M.}} \AND
  \bauthor{\bsnm{Xing},~\bfnm{Eric~P}\binits{E.~P.}}
(\byear{2012}).
\btitle{Estimating networks with jumps}.
\bjournal{Electronic Journal of Statistics}
\bvolume{6}
\bpages{2069}.
\end{barticle}
\endbibitem

\bibitem[\protect\citeauthoryear{Lang}{1995}]{Lang1995}
\begin{bbook}[author]
\bauthor{\bsnm{Lang},~\bfnm{Serge}\binits{S.}}
(\byear{1995}).
\btitle{Differential and Riemannian Manifolds}.
\bpublisher{Springer}, \baddress{New York}.
\end{bbook}
\endbibitem

\bibitem[\protect\citeauthoryear{Ledoux and Talagrand}{2011}]{Ledoux2011}
\begin{bbook}[author]
\bauthor{\bsnm{Ledoux},~\bfnm{M.}\binits{M.}} \AND
  \bauthor{\bsnm{Talagrand},~\bfnm{M.}\binits{M.}}
(\byear{2011}).
\btitle{Probability in Banach spaces. Isoperimetry and processes.},
\bedition{reprint of the 1991 edition} ed.
\bseries{Classics in Mathematics}.
\bpublisher{Springer-Verlag}, \baddress{Berlin}.
\end{bbook}
\endbibitem

\bibitem[\protect\citeauthoryear{Lellmann et~al.}{2013}]{Lellmann2013}
\begin{binproceedings}[author]
\bauthor{\bsnm{Lellmann},~\bfnm{Jan}\binits{J.}},
  \bauthor{\bsnm{Strekalovskiy},~\bfnm{Evgeny}\binits{E.}},
  \bauthor{\bsnm{Koetter},~\bfnm{Sabrina}\binits{S.}} \AND
  \bauthor{\bsnm{Cremers},~\bfnm{Daniel}\binits{D.}}
(\byear{2013}).
\btitle{Total variation regularization for functions with values in a
  manifold}.
In \bbooktitle{2013 IEEE International Conference on Computer Vision}
\bpages{2944--2951}.
\bpublisher{IEEE}.
\bdoi{10.1109/ICCV.2013.366}
\end{binproceedings}
\endbibitem

\bibitem[\protect\citeauthoryear{Lin}{2019}]{Lin2019a}
\begin{barticle}[author]
\bauthor{\bsnm{Lin},~\bfnm{Zhenhua}\binits{Z.}}
(\byear{2019}).
\btitle{{R}iemannian Geometry of Symmetric Positive Definite Matrices via
  {C}holesky Decomposition}.
\bjournal{SIAM Journal on Matrix Analysis and Applications}
\bvolume{40}
\bpages{1353--1370}.
\end{barticle}
\endbibitem

\bibitem[\protect\citeauthoryear{Lin and M\"uller}{2021}]{Lin2021}
\begin{barticle}[author]
\bauthor{\bsnm{Lin},~\bfnm{Zhenhua}\binits{Z.}} \AND
  \bauthor{\bsnm{M\"uller},~\bfnm{Hans-Georg}\binits{H.-G.}}
(\byear{2021}).
\btitle{Supplement to ``Total Variation Regularized Fr\'echet Regression for
  Metric-Space Valued Data''}.
\end{barticle}
\endbibitem

\bibitem[\protect\citeauthoryear{Mammen and van~de Geer}{1997}]{Mammen1997}
\begin{barticle}[author]
\bauthor{\bsnm{Mammen},~\bfnm{Enno}\binits{E.}} \AND
  \bauthor{\bparticle{van~de} \bsnm{Geer},~\bfnm{Sara}\binits{S.}}
(\byear{1997}).
\btitle{Locally adaptive regression splines}.
\bjournal{Annals of Statistics}
\bvolume{25}
\bpages{387--413}.
\end{barticle}
\endbibitem

\bibitem[\protect\citeauthoryear{Moakher}{2005}]{Moakher2005}
\begin{barticle}[author]
\bauthor{\bsnm{Moakher},~\bfnm{Maher}\binits{M.}}
(\byear{2005}).
\btitle{A differential geometry approach to the geometric mean of symmetric
  positive-definite matrices}.
\bjournal{SIAM Journal on Matrix Analysis and Applications}
\bvolume{26}
\bpages{735--747}.
\end{barticle}
\endbibitem

\bibitem[\protect\citeauthoryear{Ortelli and van~de Geer}{2018}]{Ortelli2018}
\begin{barticle}[author]
\bauthor{\bsnm{Ortelli},~\bfnm{Francesco}\binits{F.}} \AND
  \bauthor{\bparticle{van~de} \bsnm{Geer},~\bfnm{Sara}\binits{S.}}
(\byear{2018}).
\btitle{On the total variation regularized estimator over a class of tree
  graphs}.
\bjournal{Electronic Journal of Statistics}
\bvolume{12}
\bpages{4517--4570}.
\end{barticle}
\endbibitem

\bibitem[\protect\citeauthoryear{Owen and Provan}{2011}]{Owen2011}
\begin{barticle}[author]
\bauthor{\bsnm{Owen},~\bfnm{Megan}\binits{M.}} \AND
  \bauthor{\bsnm{Provan},~\bfnm{J.~Scott}\binits{J.~S.}}
(\byear{2011}).
\btitle{A Fast Algorithm for Computing Geodesic Distances in Tree Space}.
\bjournal{IEEE/ACM Transactions on Computational Biology and Bioinformatics}
\bvolume{8}
\bpages{2--13}.
\end{barticle}
\endbibitem

\bibitem[\protect\citeauthoryear{Patrangenaru and Ellingson}{2015}]{patr:15}
\begin{bbook}[author]
\bauthor{\bsnm{Patrangenaru},~\bfnm{Victor}\binits{V.}} \AND
  \bauthor{\bsnm{Ellingson},~\bfnm{Leif}\binits{L.}}
(\byear{2015}).
\btitle{Nonparametric Statistics on Manifolds and Their Applications to Object
  Data Analysis}.
\bpublisher{CRC Press}.
\end{bbook}
\endbibitem

\bibitem[\protect\citeauthoryear{Pelletier}{2006}]{Pelletier2006}
\begin{barticle}[author]
\bauthor{\bsnm{Pelletier},~\bfnm{Bruno}\binits{B.}}
(\byear{2006}).
\btitle{Non-parametric regression estimation on closed {R}iemannian manifolds}.
\bjournal{Journal of Nonparametric Statistics}
\bvolume{18}
\bpages{57--67}.
\bdoi{10.1080/10485250500504828}
\end{barticle}
\endbibitem

\bibitem[\protect\citeauthoryear{Pennec}{2018}]{Pennec2018}
\begin{barticle}[author]
\bauthor{\bsnm{Pennec},~\bfnm{Xavier}\binits{X.}}
(\byear{2018}).
\btitle{Barycentric subspace analysis on manifolds}.
\bjournal{Annals of Statistics}
\bvolume{46}
\bpages{2711--2746}.
\bdoi{10.1214/17-AOS1636}
\end{barticle}
\endbibitem

\bibitem[\protect\citeauthoryear{Pennec, Fillard and Ayache}{2006}]{penn:06}
\begin{barticle}[author]
\bauthor{\bsnm{Pennec},~\bfnm{Xavier}\binits{X.}},
  \bauthor{\bsnm{Fillard},~\bfnm{Pierre}\binits{P.}} \AND
  \bauthor{\bsnm{Ayache},~\bfnm{Nicholas}\binits{N.}}
(\byear{2006}).
\btitle{A {R}iemannian framework for tensor computing}.
\bjournal{International Journal of Computer Vision}
\bvolume{66}
\bpages{41--66}.
\end{barticle}
\endbibitem

\bibitem[\protect\citeauthoryear{Petersen and M{\"u}ller}{2016}]{mull:16:2}
\begin{barticle}[author]
\bauthor{\bsnm{Petersen},~\bfnm{Alexander}\binits{A.}} \AND
  \bauthor{\bsnm{M{\"u}ller},~\bfnm{Hans-Georg}\binits{H.-G.}}
(\byear{2016}).
\btitle{Fr\'echet integration and adaptive metric selection for interpretable
  covariances of multivariate functional data}.
\bjournal{Biometrika}
\bvolume{103}
\bpages{103--120}.
\end{barticle}
\endbibitem

\bibitem[\protect\citeauthoryear{Petersen and
  M{\"{u}}ller}{2019}]{Petersen2019}
\begin{barticle}[author]
\bauthor{\bsnm{Petersen},~\bfnm{Alexander}\binits{A.}} \AND
  \bauthor{\bsnm{M{\"{u}}ller},~\bfnm{Hans-Georg}\binits{H.-G.}}
(\byear{2019}).
\btitle{Fr{\'{e}}chet regression for random objects with {E}uclidean
  predictors}.
\bjournal{The Annals of Statistics}
\bvolume{47}
\bpages{691--719}.
\end{barticle}
\endbibitem

\bibitem[\protect\citeauthoryear{Petrunin and Tuschmann}{1999}]{Petrunin1999}
\begin{barticle}[author]
\bauthor{\bsnm{Petrunin},~\bfnm{Anton}\binits{A.}} \AND
  \bauthor{\bsnm{Tuschmann},~\bfnm{Wilderich}\binits{W.}}
(\byear{1999}).
\btitle{Diffeomorphism finiteness, positive pinching, and second homotopy}.
\bjournal{Geometric and Functional Analysis}
\bvolume{9}
\bpages{736--774}.
\bdoi{10.1007/s000390050101}
\end{barticle}
\endbibitem

\bibitem[\protect\citeauthoryear{Rathi, Tannenbaum and
  Michailovich}{2007}]{Rathi2007}
\begin{binproceedings}[author]
\bauthor{\bsnm{Rathi},~\bfnm{Yogesh}\binits{Y.}},
  \bauthor{\bsnm{Tannenbaum},~\bfnm{Allen}\binits{A.}} \AND
  \bauthor{\bsnm{Michailovich},~\bfnm{Oleg}\binits{O.}}
(\byear{2007}).
\btitle{Segmenting images on the tensor manifold}.
In \bbooktitle{2007 IEEE Conference on Computer Vision and Pattern Recognition}
\bpages{1--8}.
\bpublisher{IEEE}.
\end{binproceedings}
\endbibitem

\bibitem[\protect\citeauthoryear{Rudin, Osher and Fatemi}{1992}]{Rudin1992}
\begin{barticle}[author]
\bauthor{\bsnm{Rudin},~\bfnm{Leonid~I.}\binits{L.~I.}},
  \bauthor{\bsnm{Osher},~\bfnm{Stanley}\binits{S.}} \AND
  \bauthor{\bsnm{Fatemi},~\bfnm{Emad}\binits{E.}}
(\byear{1992}).
\btitle{Nonlinear total variation based noise removal algorithms}.
\bjournal{Physica D: Nonlinear Phenomena}
\bvolume{60}
\bpages{259--268}.
\end{barticle}
\endbibitem

\bibitem[\protect\citeauthoryear{Sacks and Ylvisaker}{1970}]{sack:70}
\begin{barticle}[author]
\bauthor{\bsnm{Sacks},~\bfnm{Jerome}\binits{J.}} \AND
  \bauthor{\bsnm{Ylvisaker},~\bfnm{Donald}\binits{D.}}
(\byear{1970}).
\btitle{Designs for regression problems with correlated errors III}.
\bjournal{The Annals of Mathematical Statistics}
\bvolume{41}
\bpages{2057--2074}.
\end{barticle}
\endbibitem

\bibitem[\protect\citeauthoryear{Sadhanala, Wang and
  Tibshirani}{2016}]{Sadhanala2016}
\begin{binproceedings}[author]
\bauthor{\bsnm{Sadhanala},~\bfnm{Veeranjaneyulu}\binits{V.}},
  \bauthor{\bsnm{Wang},~\bfnm{Yu-Xiang}\binits{Y.-X.}} \AND
  \bauthor{\bsnm{Tibshirani},~\bfnm{Ryan~J.}\binits{R.~J.}}
(\byear{2016}).
\btitle{Total variation classes beyond 1d: minimax rates, and the limitations
  of linear smoothers}.
In \bbooktitle{Neural Information Processing Systems}
\bpages{3521--3529}.
\end{binproceedings}
\endbibitem

\bibitem[\protect\citeauthoryear{Shi et~al.}{2009}]{Shi2009}
\begin{binproceedings}[author]
\bauthor{\bsnm{Shi},~\bfnm{Xiaoyan}\binits{X.}},
  \bauthor{\bsnm{Styner},~\bfnm{Martin}\binits{M.}},
  \bauthor{\bsnm{Lieberman},~\bfnm{Jeffrey}\binits{J.}},
  \bauthor{\bsnm{Ibrahim},~\bfnm{Joseph~G}\binits{J.~G.}},
  \bauthor{\bsnm{Lin},~\bfnm{Weili}\binits{W.}} \AND
  \bauthor{\bsnm{Zhu},~\bfnm{Hongtu}\binits{H.}}
(\byear{2009}).
\btitle{Intrinsic regression models for manifold-valued data}.
In \bbooktitle{Medical Image Computing and Computer-Assisted Intervention -
  MICCAI}
\bvolume{12}
\bpages{192--199}.
\end{binproceedings}
\endbibitem

\bibitem[\protect\citeauthoryear{Steinke, Hein and
  Sch{\"{o}}lkopf}{2010}]{Steinke2010a}
\begin{barticle}[author]
\bauthor{\bsnm{Steinke},~\bfnm{Florian}\binits{F.}},
  \bauthor{\bsnm{Hein},~\bfnm{Matthias}\binits{M.}} \AND
  \bauthor{\bsnm{Sch{\"{o}}lkopf},~\bfnm{Bernhard}\binits{B.}}
(\byear{2010}).
\btitle{Nonparametric regression between general {R}iemannian manifolds}.
\bjournal{SIAM Journal on Imaging Sciences}
\bvolume{3}
\bpages{527--563}.
\bdoi{10.1137/080744189}
\end{barticle}
\endbibitem

\bibitem[\protect\citeauthoryear{Strong and Chan}{2003}]{Strong2003}
\begin{barticle}[author]
\bauthor{\bsnm{Strong},~\bfnm{David}\binits{D.}} \AND
  \bauthor{\bsnm{Chan},~\bfnm{Tony}\binits{T.}}
(\byear{2003}).
\btitle{Edge-preserving and scale-dependent properties of total variation
  regularization}.
\bjournal{Inverse Problems}
\bvolume{19}
\bpages{S165}.
\end{barticle}
\endbibitem

\bibitem[\protect\citeauthoryear{Sturm}{2003}]{Sturm2003}
\begin{bincollection}[author]
\bauthor{\bsnm{Sturm},~\bfnm{Karl-Theodor}\binits{K.-T.}}
(\byear{2003}).
\btitle{Probability measures on metric spaces of nonpositive curvature}.
In \bbooktitle{Heat kernels and analysis on manifolds, graphs, and metric
  spaces (Paris, 2002), vol. 338 of Contemporary Mathematics}
\bpages{357--390}.
\bpublisher{American Mathematical Society}, \baddress{Providence, RI}.
\end{bincollection}
\endbibitem

\bibitem[\protect\citeauthoryear{Tibshirani}{2014}]{Tibshirani2014}
\begin{barticle}[author]
\bauthor{\bsnm{Tibshirani},~\bfnm{Ryan~J.}\binits{R.~J.}}
(\byear{2014}).
\btitle{Adaptive piecewise polynomial estimation via trend filtering}.
\bjournal{The Annals of Statistics}
\bvolume{42}
\bpages{285--323}.
\end{barticle}
\endbibitem

\bibitem[\protect\citeauthoryear{Tibshirani et~al.}{2005}]{Tibshirani2005}
\begin{barticle}[author]
\bauthor{\bsnm{Tibshirani},~\bfnm{Robert}\binits{R.}},
  \bauthor{\bsnm{Saunders},~\bfnm{Michael}\binits{M.}},
  \bauthor{\bsnm{Rosset},~\bfnm{Saharon}\binits{S.}},
  \bauthor{\bsnm{Zhu},~\bfnm{Ji}\binits{J.}} \AND
  \bauthor{\bsnm{Knight},~\bfnm{Keith}\binits{K.}}
(\byear{2005}).
\btitle{Sparsity and smoothness via the fused lasso}.
\bjournal{Journal of the Royal Statistical Society: Series B (Statistical
  Methodology)}
\bvolume{67}
\bpages{91--108}.
\end{barticle}
\endbibitem

\bibitem[\protect\citeauthoryear{van~de Geer}{1990}]{VandeGeer1990}
\begin{barticle}[author]
\bauthor{\bparticle{van~de} \bsnm{Geer},~\bfnm{Sara}\binits{S.}}
(\byear{1990}).
\btitle{Estimating a regression function}.
\bjournal{Annals of Statistics}
\bvolume{18}
\bpages{907--924}.
\bdoi{10.1214/aos/1176347632}
\end{barticle}
\endbibitem

\bibitem[\protect\citeauthoryear{{van de Geer}}{2001}]{vandeGeer2001}
\begin{barticle}[author]
\bauthor{\bsnm{{van de Geer}},~\bfnm{Sara}\binits{S.}}
(\byear{2001}).
\btitle{Least squares estimation with complexity penalties}.
\bjournal{Mathematical Methods of Statistics}
\bvolume{10}
\bpages{355--374}.
\end{barticle}
\endbibitem

\bibitem[\protect\citeauthoryear{Wang, Zhu and {ADNI}}{2017}]{wang:17}
\begin{barticle}[author]
\bauthor{\bsnm{Wang},~\bfnm{Xiao}\binits{X.}},
  \bauthor{\bsnm{Zhu},~\bfnm{Hongtu}\binits{H.}} \AND \bauthor{\bsnm{{ADNI}}}
(\byear{2017}).
\btitle{Generalized scalar-on-image regression models via total variation}.
\bjournal{Journal of the American Statistical Association}
\bvolume{112}
\bpages{1156--1168}.
\end{barticle}
\endbibitem

\bibitem[\protect\citeauthoryear{Wang et~al.}{2016}]{Wang2016}
\begin{barticle}[author]
\bauthor{\bsnm{Wang},~\bfnm{Yu-Xiang}\binits{Y.-X.}},
  \bauthor{\bsnm{Sharpnack},~\bfnm{James}\binits{J.}},
  \bauthor{\bsnm{Smola},~\bfnm{Alexander~J.}\binits{A.~J.}} \AND
  \bauthor{\bsnm{Tibshirani},~\bfnm{Ryan~J.}\binits{R.~J.}}
(\byear{2016}).
\btitle{Trend filtering on graphs}.
\bjournal{Journal of Machine Learning Research}
\bvolume{17}
\bpages{1--41}.
\end{barticle}
\endbibitem

\bibitem[\protect\citeauthoryear{Weinmann, Demaret and
  Storath}{2014}]{Weinmann2013}
\begin{barticle}[author]
\bauthor{\bsnm{Weinmann},~\bfnm{Andreas}\binits{A.}},
  \bauthor{\bsnm{Demaret},~\bfnm{Laurent}\binits{L.}} \AND
  \bauthor{\bsnm{Storath},~\bfnm{Martin}\binits{M.}}
(\byear{2014}).
\btitle{Total variation regularization for manifold-valued data}.
\bjournal{SIAM Journal on Imaging Sciences}
\bvolume{7}
\bpages{2226--2257}.
\bdoi{10.1137/130951075}
\end{barticle}
\endbibitem

\bibitem[\protect\citeauthoryear{Yuan et~al.}{2012}]{Yuan2012}
\begin{barticle}[author]
\bauthor{\bsnm{Yuan},~\bfnm{Ying}\binits{Y.}},
  \bauthor{\bsnm{Zhu},~\bfnm{Hongtu}\binits{H.}},
  \bauthor{\bsnm{Lin},~\bfnm{Weili}\binits{W.}} \AND
  \bauthor{\bsnm{Marron},~\bfnm{J~S}\binits{J.~S.}}
(\byear{2012}).
\btitle{Local polynomial regression for symmetric positive definite matrices}.
\bjournal{Journal of Royal Statistical Society, Series B}
\bvolume{74}
\bpages{697--719}.
\end{barticle}
\endbibitem

\bibitem[\protect\citeauthoryear{Zhou et~al.}{2016}]{zhou:16}
\begin{barticle}[author]
\bauthor{\bsnm{Zhou},~\bfnm{Diwei}\binits{D.}},
  \bauthor{\bsnm{Dryden},~\bfnm{Ian~L}\binits{I.~L.}},
  \bauthor{\bsnm{Koloydenko},~\bfnm{Alexey~A}\binits{A.~A.}},
  \bauthor{\bsnm{Audenaert},~\bfnm{Koenraad~MR}\binits{K.~M.}} \AND
  \bauthor{\bsnm{Bai},~\bfnm{Li}\binits{L.}}
(\byear{2016}).
\btitle{Regularisation, interpolation and visualisation of diffusion tensor
  images using non-Euclidean statistics}.
\bjournal{Journal of Applied Statistics}
\bvolume{43}
\bpages{943--978}.
\end{barticle}
\endbibitem

\bibitem[\protect\citeauthoryear{Zhu, Fan and Kong}{2014}]{zhu:14:1}
\begin{barticle}[author]
\bauthor{\bsnm{Zhu},~\bfnm{Hongtu}\binits{H.}},
  \bauthor{\bsnm{Fan},~\bfnm{Jianqing}\binits{J.}} \AND
  \bauthor{\bsnm{Kong},~\bfnm{Linglong}\binits{L.}}
(\byear{2014}).
\btitle{Spatially varying coefficient model for neuroimaging data with jump
  discontinuities}.
\bjournal{Journal of the American Statistical Association}
\bvolume{109}
\bpages{1084--1098}.
\end{barticle}
\endbibitem

\end{thebibliography}
\end{document}